\documentclass[acmsmall,nonacm]{acmart}
\allowdisplaybreaks
\usepackage[ruled,vlined,linesnumbered]{algorithm2e}
\usepackage[utf8]{inputenc}
\usepackage{graphicx,enumitem,booktabs,amsmath}
\usepackage{tikzscale}
\usepackage{tikz}
\usetikzlibrary{shapes.geometric, positioning, fit, backgrounds, calc, decorations.pathmorphing, decorations.pathreplacing, arrows.meta, patterns}
 \usepackage{wrapfig}

\usepackage{mathabx}
\usepackage{lineno}

\usepackage[para,online,flushleft]{threeparttable}
\usepackage{psfrag}
\usepackage{subcaption}
\usepackage{tikz}
\usetikzlibrary{arrows}
\usepackage{marginnote}
\usepackage{color}

\usepackage{multirow}

\newcommand{\E}{\mathcal{E}}
\newcommand{\V}{\mathcal{V}}

\newcommand{\T}{\mathcal{T}}

\renewcommand{\H}{\mathcal{H}}

\newcommand{\rwa}[1]{{\color{black}#1}}
\newcommand{\rwb}[1]{{\color{black}#1}}

\newcommand{\rwf}[1]{{\color{black}#1}}

\newcommand{\dom}{\mathrm{dom}}  
\renewcommand{\path}{\mathcal{Q}_{\textup{\textsf{path}}}}

\newcommand{\N}{\mathcal{N}}

\newcommand{\cut}[1]{}

\newcommand{\cupdot}{\mathbin{\mathaccent\cdot\cup}}

\setcopyright{cc}
\setcctype{by-nc-nd}
\acmJournal{PACMMOD}
\acmYear{2026} \acmVolume{4} \acmNumber{2 (PODS)} \acmArticle{121}
\acmMonth{5} \acmDOI{10.1145/3801917}

\received{December 2025}
\received[accepted]{February 2025}

\author{Qichen Wang}
\affiliation{%
  \institution{Nanyang Technological University}
  \country{Singapore}
}
\email{qichen.wang@ntu.edu.sg}
\orcid{0000-0002-0959-5536}

\begin{document}
\title{Towards Parameterized Hardness on Maintaining Conjunctive Queries}

\begin{abstract}
We investigate the fine-grained complexity of dynamically maintaining the result of fixed self-join free conjunctive queries under single-tuple updates. Prior work shows that free-connex queries can be maintained in update time $O(|D|^{\delta})$ for some $\delta \in [0.5, 1]$, where $|D|$ is the size of the current database. However, a gap remains between the best known upper bound of $O(|D|)$ and lower bounds of $\Omega(|D|^{0.5-\epsilon})$ for any $\epsilon \ge 0$.

We narrow this gap by introducing two structural parameters to quantify the dynamic complexity of a conjunctive query: the height $k$ and the dimension $d$. We establish new fine-grained lower bounds showing that any algorithm maintaining a query with these parameters must incur update time $\Omega(|D|^{1-1/\max(k,d)-\epsilon})$, unless widely believed conjectures fail. These yield the first super-$\sqrt{|D|}$ lower bounds for maintaining free-connex queries, and suggest the tightness of current algorithms when considering arbitrarily large $k$ and~$d$.

Complementing our lower bounds, we identify a data-dependent parameter, the generalized $H$-index $h(D)$, which is upper bounded by $|D|^{1/d}$, and design an efficient algorithm for maintaining star queries, a common class of height 2 free-connex queries. The algorithm achieves an instance-specific update time $O(h(D)^{d-1})$ with linear space $O(|D|)$. This matches our parameterized lower bound and provides instance-specific performance in favorable cases.
\end{abstract}

\keywords{conjunctive query, fine-grained complexity, query maintenance, dynamic algorithms}

\begin{CCSXML}
<ccs2012>
   <concept>
       <concept_id>10003752.10010070.10010111.10011711</concept_id>
       <concept_desc>Theory of computation~Database query processing and optimization (theory)</concept_desc>
       <concept_significance>500</concept_significance>
       </concept>
 </ccs2012>
\end{CCSXML}

\ccsdesc[500]{Theory of computation~Database query processing and optimization (theory)}

\maketitle
\sloppy
\section{Introduction}
Answering conjunctive queries under updates is a fundamental task in database theory and systems, with applications to incremental view maintenance~\cite{chirkova2012materialized, wang2023change, nikolic2018incremental, idris17:_dynam, wang2020maintaining, kara2020maintaining}, complex event processing \cite{PintoR24}, and streaming processing~\cite{carbone15:_apach_flink}. \rwa{In the fully dynamic setting, a fixed self-join free conjunctive query\footnote{In the remainder of this paper, we consider only ``self-join free" conjunctive queries and will omit ``self-join free" from the text.  A conjunctive query with self-joins may be easier than the associated self-join free conjunctive query, where the associated self-join free query is obtained by giving distinct relational symbols to all relations \cite{10.1145/3584372.3588667}.  While some lower bounds have been established on queries with self-joins (e.g., Theorem~1.2 in \cite{berkholz17:_answer} on maintaining Boolean queries with self-joins), the hardness results for queries with self-joins are not currently settled in general for static and dynamic evaluations.} 
$Q$ is given together with an initially empty database $D$. The database then undergoes a sequence of tuple insertions and deletions. The goal is to maintain a data structure so that the current query result $Q(D)$ can be retrieved at any time. In particular, let $\omega(Q)$ be the \emph{optimal} static exponent, meaning that for every database instance $D$, all algorithms that can evaluate query $Q$ over $D$ requires runtime $\Omega(|D|^{\omega(Q)-\epsilon}+|Q(D)|)$ for some constant $\epsilon \ge 0$. We are interested in the dynamic algorithms where, after each update, the maintained data structure supports retrieving $Q(D)$ with strictly smaller overhead than re-evaluating $Q$ from scratch, i.e., in time $O(|D|^{\omega(Q)-\epsilon}+|Q(D)|)$ for some constant $\epsilon > 0$. Here, $|D|$ denotes the size of the current database, and $|Q(D)|$ denotes the size of the current query result.  We measure efficiency by the amortized update time $t_u$ for processing a single insertion or deletion, as well as the enumeration delay.  Ideally, the enumeration delay should be constant, i.e., independent of $|D|$.}

Despite this restriction, the dynamic problem remains challenging. Recent works have established nontrivial upper and lower bounds. \rwa{The optimal static exponent $\omega(Q)$ provides a straightforward lower bound $\Omega(|D|^{\omega(Q)-1-\epsilon})$ on the amortized update time \cite{10.1145/3695837,pods25} via an insertion-only reduction. Specifically, if we construct a database instance $D$ by inserting tuples one by one, a dynamic algorithm with amortized update time $t_u$ and enumeration time $O(|D|^{\omega(Q)-\epsilon}+|Q(D)|)$ would allow evaluating $Q$ in $O(|D|\cdot t_u+|D|^{\omega(Q)-\epsilon}+|Q(D)|)$. Consequently, for any $\gamma > 0$, we must have $t_u=\Omega(|D|^{\omega(Q)-1-\gamma})$ even under insertion-only updates; otherwise, the optimality of $\omega(Q)$ would be violated.
Although the lower bound is tight \cite{10.1145/3695837, pods25} for insertion-only update sequences on free-connex queries (i.e., acyclic queries with $\omega(Q) = 1$), it is not the best lower bound when considering deletions.} \citet{berkholz17:_answer} introduced the notion of q-hierarchical queries, a subset of free-connex queries, and proved a tight dichotomy: assuming the Online Matrix-Vector multiplication (OMv) conjecture, every conjunctive query that is not q-hierarchical cannot be maintained with constant update time while supporting constant delay enumeration. In particular, if a conjunctive query is not q-hierarchical, any dynamic algorithm must incur amortized \rwa{$\Omega(|D|^{1/2-\epsilon})$ for any $\epsilon > 0$} update time to achieve constant delay enumeration.  On the upper-bound side, multiple efforts have been made to efficiently maintain free-connex conjunctive queries~\cite{idris17:_dynam, nikolic2018incremental, wang2023change}.  They demonstrate that free-connex queries can be maintained in linear time $O(|D|)$ per update, while still supporting constant delay enumeration.  \rwa{\citet{kara2020trade} studies trade-offs between amortized update time $t_u$ and enumeration delay $t_d$ for hierarchical queries, another subset of acyclic queries that intersects free-connex queries.  For free-connex hierarchical queries, they show a product trade-off $t_d \times t_u = O(|D|)$.  On the other hand, since $\omega(Q) = 1$ for all free-connex queries, any non-constant enumeration delay will result in a larger overhead than re-evaluating the query when $|Q(|D|)| = \Omega(|D|)$.}  Together, these results imply that the best achievable exponent $\delta$ for free-connex queries currently lies between $0.5$ and $1$. Whether we can close this gap remains an open problem in this field.  

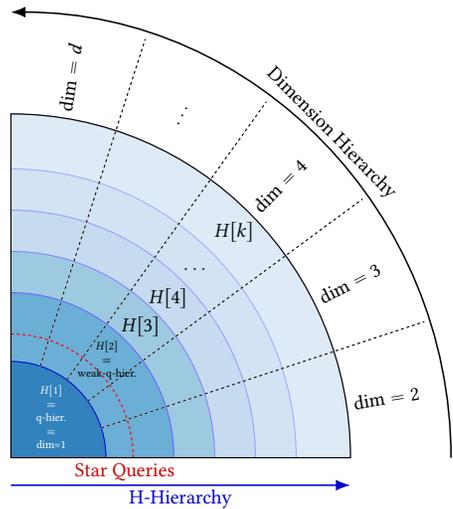
\begin{wrapfigure}{r}{0.45\textwidth}

    \centering
    \vspace{-3em} 
    \resizebox{\linewidth}{!}{
\definecolor{cqCol}{RGB}{245, 245, 245} 
\definecolor{aqCol}{RGB}{230, 230, 235} 

\definecolor{h1Col}{RGB}{49, 130, 189}  
\definecolor{h2InnerCol}{RGB}{107, 174, 214} 
\definecolor{h3Col}{RGB}{158, 202, 225}
\definecolor{h4Col}{RGB}{198, 219, 239}
\definecolor{hdotCol}{RGB}{210, 227, 243}
\definecolor{hkCol}{RGB}{222, 235, 247} 

\definecolor{bergeCol}{RGB}{102, 194, 165} 

\begin{tikzpicture}[font=\Huge]

\def\angStart{0}
\def\angEnd{90}

\def\rOne{3.5}    
\def\rBerge{4.5}    
\def\rTwo{6}    
\def\rThree{7.5}
\def\rFour{9}
\def\rDot{10.5}
\def\rK{12.5}       
\def\rAcyclic{14}
\def\rCQ{16}
\def\rQ{16.2}

\begin{scope}
    \path[clip] (0,0) -- (\angStart:\rCQ) arc (\angStart:\angEnd:\rCQ) -- cycle;

    \draw[fill=white, draw=white] (0,0) -- (\angStart:\rQ) arc (\angStart:\angEnd:\rQ) -- cycle;

    \draw[fill=hkCol, draw=blue!30] (0,0) -- (\angStart:\rK) arc (\angStart:\angEnd:\rK) -- cycle;

    \draw[fill=hdotCol, draw=blue!35] (0,0) -- (\angStart:\rDot) arc (\angStart:\angEnd:\rDot) -- cycle;

    \draw[fill=h4Col, draw=blue!40] (0,0) -- (\angStart:\rFour) arc (\angStart:\angEnd:\rFour) -- cycle;

    \draw[fill=h3Col, draw=blue!50] (0,0) -- (\angStart:\rThree) arc (\angStart:\angEnd:\rThree) -- cycle;

    \draw[fill=h2InnerCol, draw=blue!60] (0,0) -- (\angStart:\rTwo) arc (\angStart:\angEnd:\rTwo) -- cycle;

    \draw[red, dashed, very thick] (\angStart:\rBerge) arc (\angStart:\angEnd:\rBerge);

    \draw[fill=h1Col, draw=blue!80!black, thick] (0,0) -- (\angStart:\rOne) arc (\angStart:\angEnd:\rOne) -- cycle;

    \def\degSplitA{18}
    \def\degSplitB{36}
    \def\degSplitC{54}
    \def\degSplitD{72} 

    \foreach \ang in {\degSplitA, \degSplitB, \degSplitC, \degSplitD} {
        \draw[dashed, ultra thick, white] (\ang:\rOne) -- (\ang:\rCQ+1); 
        \draw[dashed, very thick, black] (\ang:\rOne) -- (\ang:\rCQ+1);
    }
    
    \node[font=\Large, text=white, align=center] at (45:\rOne*0.6) {$H[1]$ \\ $\equiv$ \\ q-hier.\\ $\equiv$ \\ dim=1};
    \node[font=\Large, align=center] at (45:\rBerge+0.5) {$H[2]$ \\ $\equiv$ \\ weak-q-hier.};
    
    \node[font=\Huge] at (45:\rThree*0.9) {$H[3]$};
    \node[font=\Huge] at (45:\rFour*0.91) {$H[4]$};
    \node[font=\Huge] at (45:\rDot*0.92) {$\cdots$};
    \node[font=\Huge] at (45:\rK*0.93) {$H[k]$};

    \node[font=\Huge, rotate=9] at (9:\rAcyclic) {$\dim = 2$};
    \node[font=\Huge, rotate=27] at (27:\rAcyclic) {$\dim = 3$};
    \node[font=\Huge, rotate=45] at (45:\rAcyclic) {$\dim = 4$};
    \node[font=\Huge, rotate=63] at (63:\rAcyclic) {$\cdots$};
    \node[font=\Huge, rotate=81] at (81:\rAcyclic) {$\dim = d$};
\end{scope} 


\draw[very thick, black] (0,0) -- (\angStart:\rK) arc (\angStart:\angEnd:\rK) -- cycle;

\node[text=red!80!black, font=\Huge] at (\rBerge-0.3, -0.5) {Star Queries};

\draw[-{Latex[scale=1.5]}, ultra thick, blue!80!black] (0, -1) -- (\rK, -1) node[midway, below] {H-Hierarchy};

\draw[-{Latex[scale=1.5]}, ultra thick, black] (\rQ, 0) arc (0:90:\rQ) node[midway, sloped, above] {Dimension Hierarchy};





\end{tikzpicture}
    }
    \vspace{-2em}
    \caption{Query Hierarchy with $\omega(Q)=1$.}
    \label{fig:hierarchy}
    \vspace{-1em}
\end{wrapfigure}
\subsection{Our contributions}
In this work, we make progress toward this question by introducing parameterized lower bounds that refine the known dichotomy. We define two new structural parameters of a conjunctive query $Q$: the height (denoted $k$) and the dimension (denoted $d$). Intuitively, the height of $Q$ measures the length of the longest chordless path in $Q$'s relational hypergraph and will increase as we "scale out" the query with more relations.  For a free-connex query $Q$, the height of $Q$ is equal to the minimum height of all free-connex join trees for $Q$.  On the other hand, the dimension of $Q$ measures how the query “scales up” by adding more attributes per relation.\footnote{Formal definitions of the height and dimension parameters are given in Sections~\ref{sec:height} and~\ref{sec:dimension}, respectively.} These parameters induce a natural hierarchy within conjunctive queries. See Figure~\ref{fig:hierarchy} for the hierarchy proposed in this paper on free-connex queries with $\omega(Q) = 1$.  We then prove fine-grained lower bounds that depend on $(k,d)$. In particular, we show:

\begin{theorem}[\textbf{height-based lower bound}]\label{thm:height}
Let $Q$ be a height $k$ query with \rwf{$k\ge2$}. Assuming combinatorial $k$-clique conjecture, no dynamic \textbf{combinatorial} algorithm can maintain $Q$ with amortized update time $O(|D|^{(k-1)/k-\epsilon})$ and support constant delay enumeration for any constant~$\epsilon>0$.
\end{theorem}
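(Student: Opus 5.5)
We reduce from combinatorial $k$-clique detection to dynamic maintenance of $Q$. The formal definition of height is not yet available at this point, so we work with the intuitive one from the introduction: height $k$ means $Q$'s hypergraph contains a chordless path of length $k$, i.e.\ a sequence of $k$ relations $R_1,\dots,R_k$ along which consecutive relations share variables and non-consecutive ones do not. By restricting all other relations to trivial single-value contents, and projecting variables away, we may assume w.l.o.g.\ that $Q$ behaves like a path $R_1(x_0,x_1)\wedge R_2(x_1,x_2)\wedge\cdots\wedge R_k(x_{k-1},x_k)$, with enough of the $x_i$ free that the relevant answers can be detected. Self-join freeness lets us fill each relation independently. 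The restriction and projection step will need to be justified by the formal definition of height.

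The core reduction works as follows. Take a graph $G=(V,E)$ with $n$ vertices, and set $|D|=\Theta(n^{?})$ so that the parameters balance. For each vertex $v\in V$ in turn, we perform updates that ``activate'' $v$. The idea is to encode the $k$-clique search as $n$ phases. In the phase for $v$, the paths of $Q$ check $k-1$ further vertices that are pairwise adjacent and adjacent to $v$. Concretely, we use a layered encoding in which $R_i$ holds pairs (partial clique, extension) restricted to neighbours of the current $v$:
\begin{itemize}
\item The static part of $D$ encodes the adjacency between the layers, which is the $O(n^2)$ edge information replicated across the $k$ relations.
\item The $v$-dependent part, namely the tuples in the end relations that test adjacency to $v$, is inserted at the start of the phase and deleted at its end, costing $O(n)$ updates per phase.
\end{itemize}
After each phase we enumerate the result with constant delay, stopping after the first answer, to decide whether a $k$-clique containing $v$ exists.

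It is a known difficulty that a chordless path alone only checks consecutive adjacencies, while a clique needs all pairs. To handle this, we let tuples of $R_i$ carry tuples of vertices, i.e.\ partial cliques of size up to roughly $i$ or $k-i$, realised through the values of the shared variables. Each $R_i$ then has size about $n^{k-1}$ or $n^{\lceil\cdot\rceil}$, and the shared variable values certify all cross adjacencies. I expect the main obstacle to be tuning this encoding: we need $|D|=N$ with the per-phase update count $U$ satisfying
\[ n\cdot U\cdot N^{(k-1)/k-\epsilon}=O(n^{k-\epsilon'}). \]
The natural target is $N=n^{k-1}$ and $U=O(n)$ per phase, or more generally $n^k$ total updates balanced against $N^{(k-1)/k}=n^{k-1}$. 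This gives roughly $n\cdot n\cdot n^{(k-1)^2/k}$, and the parameters must be adjusted until the exponent is exactly $k$. I would first try to have each relation store $\binom{V}{?}$-style adjacency tables of size $N=n^{k/(k-1)}$, or alternatively $N=n$ with $n^{k-1}$ total updates, so that $U\cdot N^{(k-1)/k}=n^{k-1}\cdot n^{(k-1)/k}$. The right balance should come from the $k$-cycle/path structure used in the known OMv-style proof for $k=2$ of \citet{berkholz17:_answer}, generalised layer by layer.

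Finally, we conclude as follows. If an algorithm achieved update time $O(|D|^{(k-1)/k-\epsilon})$, then the total running time of the reduction, including building the database and performing the enumeration checks, would be $O(n^{k-\epsilon'})$. All steps besides the assumed dynamic algorithm are combinatorial, so this contradicts the combinatorial $k$-clique conjecture. Amortisation is harmless because the whole sequence of updates is charged together.
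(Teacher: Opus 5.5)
There is a genuine gap, and it starts with the definition of height. In the paper, a height-$k$ query is one whose hypergraph has a chordless path of length $2k-1$ or $2k$ (counting relations), or a q-chordless path of length $k$, and nothing longer. It is not a query with a chordless path of $k$ relations. Under your reading, the statement you set out to prove is false. The Boolean query $Q() \gets R(x), S(x,y), T(y)$ is a chordless path of three relations, so you would be claiming an $\Omega(|D|^{2/3-\epsilon})$ bound. Yet this query can be maintained in $O(\sqrt{|D|})$ amortized time with $O(1)$ answer time by a heavy/light split on $x$. For each $y$, keep the number of light $x$ with $R(x)\wedge S(x,y)$. For each of the $O(\sqrt{|D|})$ heavy $x$, keep the number of $y$ with $S(x,y)\wedge T(y)$. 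So no encoding can make a $k$-relation path carry the $(k-1)/k$ exponent; you need roughly $2k$ relations, which is exactly what the paper's notion supplies. This query has height $2$ there, matching the $\sqrt{|D|}$ bound.

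Second, the core reduction is never specified, and the route you sketch does not fit the time budget. The database starts empty, so every tuple of your ``static'' part is itself an update, and building a database of size $N$ already costs $N^{1+(k-1)/k}$. For your candidate $N=n^{k-1}$ this exceeds $n^k$ once $k\ge 3$. Your own count $n\cdot n\cdot n^{(k-1)^2/k}$ also has exponent $k+1/k>k$. Encoding partial cliques into tuples is precisely what blows $N$ up.

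The missing idea is to reduce from \emph{sparse} cycle detection. The Lincoln--Vyas bounds (Theorems~\ref{thm:oddcycle} and~\ref{thm:evencycle}) already package the clique-to-sparse-graph encoding under the combinatorial clique conjecture. The paper's reduction for a length-$(2k-1)$ path goes as follows:
\begin{itemize}
\item Take a graph with $m$ edges and $n=m^{(k-1)/k}$ vertices.
\item Color-code it with $2k-1$ colors, iterating over a perfect hash family and all color permutations, so that any query answer is a path of distinct vertices.
\item Insert once the edges between consecutive color classes into the binary relations of $Q_{2k-1}$.
\item For each hub vertex $v$ of the remaining color, insert its out-neighbors into $R_1$ and its in-neighbors into $R_{2k-1}$, check for one answer, and delete them again.
\end{itemize}
This uses $O(m)$ updates and $O(n)$ enumerations in total, i.e.\ time $O(m^{(2k-1)/k-\epsilon})=O(m^{2k'/(k'+1)-\epsilon})$ for $k'=2k-1$, contradicting Theorem~\ref{thm:oddcycle}. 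The length-$2k$ case uses even cycles instead. The q-chordless case intersects two copies of $Q'_k(x_k)$ on their output. A general height-$k$ query simulates these by $\perp$-padding, as you intended. Your per-vertex phases and padding step are in the right spirit, but without the correct height notion and the sparse-cycle source problem the argument does not go through.
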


\begin{theorem}[\textbf{Dimension-based lower bound}]\label{thm:dimension}
Let $Q$ be a conjunctive query of dimension \rwf{$d \ge 2$}. Assuming the OuMv$_k$ conjecture, no dynamic algorithm can maintain $Q$ with amortized update time $O(|D|^{(d-1)/d-\epsilon})$ and support constant delay enumeration for any constant $\epsilon>0$.
\end{theorem}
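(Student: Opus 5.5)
We plan to reduce from the OuMv$_k$ problem, in which a $k$-dimensional tensor-like Boolean object is given up front, followed by online query tuples, each of which must be answered before the next arrives. The relevant variant is the multi-vector generalisation of Online Vector-Matrix-Vector. Concretely, take $k=d$ and use the formulation where a Boolean tensor $M\subseteq[n]^d$ is given and then $n$ rounds arrive. In round $r$ we receive vectors $u^{(r)}_1,\dots,u^{(r)}_d\in\{0,1\}^n$ and must decide whether there is $(i_1,\dots,i_d)\in M$ with $u^{(r)}_j[i_j]=1$ for all $j$. The conjecture rules out total time $O(n^{d+1-\epsilon})$. We will use the formal definition of dimension from Section~\ref{sec:dimension}. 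We assume it provides, inside $Q$, a relation (or group of atoms) $R$ containing $d$ variables $x_1,\dots,x_d$ such that each $x_j$ is additionally ``witnessed'' by a distinct atom $S_j$ containing $x_j$ but not the other $x_{j'}$. We also assume that the free/bound structure is such that $Q(D)\neq\emptyset$ exactly encodes the existence of a joint witness.

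The first step is the encoding. All other variables of $Q$ are fixed to a single dummy constant, so any atom not among $R,S_1,\dots,S_d$ becomes a single tuple, which costs $O(1)$ size. Into $R$ we insert the $|M|\le n^d$ tuples of $M$, projected onto $x_1,\dots,x_d$, with dummies elsewhere. This gives $|D|=\Theta(n^d)$, which is the source of the exponent $(d-1)/d$. In round $r$ we insert into $S_j$ the tuples $\{i: u^{(r)}_j[i]=1\}$, at most $dn$ updates. We then ask the enumeration structure for the first answer. Constant delay means we can test emptiness in $O(1)$, and the answer is nonempty iff the OuMv$_d$ instance answers yes. Afterwards we delete the inserted $S_j$ tuples, giving another $\le dn$ updates. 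If the chain through the witnessing atoms passes through non-free variables, we must check that the dummy assignment keeps the reduction exact. In particular, self-join freeness guarantees that each atom can be populated independently.

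The cost accounting goes as follows. Preprocessing uses $n^d$ insertions, and the rounds use $O(n\cdot n)=O(n^2)$ updates in total. With amortized update time $O(|D|^{(d-1)/d-\epsilon})=O(n^{d-1-d\epsilon})$, the total becomes $O(n^d\cdot n^{d-1-d\epsilon}+n^2\cdot n^{d-1-d\epsilon})$. That is too much, because the preprocessing term dominates. We fix this the standard way, by choosing the OuMv$_d$ parameters unbalanced, or equivalently by noting that the conjecture permits polynomial preprocessing. The insertion of $M$ counts as preprocessing, so only the online phase matters. The online phase costs $O(n^{d+1-d\epsilon})$, which contradicts the conjecture. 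Amortization is handled by padding the update sequence so that the amortized bound applies to the online phase. This works because total time over the whole sequence bounds the online time once preprocessing is polynomial, and the conjecture is stated for arbitrary polynomial preprocessing.

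We expect the main obstacle to be the first step: extracting from the abstract definition of dimension the right substructure in an arbitrary $Q$, and ensuring that emptiness of $Q(D)$ (or of its enumeration) faithfully reflects the OuMv$_d$ answer. This is delicate when the $x_j$ are free variables and the witness atoms overlap. We would first try to reuse the characterization in Section~\ref{sec:dimension}, namely a $d$-set of variables that pairwise violate a q-hierarchical-type condition, generalizing the Berkholz et al.\ $d=2$ OMv reduction. If free variables cause the answer to be large rather than merely nonempty, we would instead rely on the first answer under constant delay, which still suffices.
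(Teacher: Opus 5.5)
\textbf{This is a genuine gap.} Your round structure (tensor in one atom, the vectors $u^{(r)}_j$ written into satellite atoms, emptiness read off the first answer, then deletions) is the paper's argument for the star query $Q_d$. The substance of Theorem~\ref{thm:dimension}, however, is transferring that hardness to an arbitrary $Q$ of dimension $d$. That is precisely the step you assume, and the shape you assume is not what Definition~\ref{def:dimension} supplies.

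First, the key set $\V_k(\mathcal{N})$ is in general spread over several atoms of $\E_c$, linked only through the connected join attributes $\V_c$. In $Q_1$ with $\V_c=\{x_5\}$, among the atoms of $\E_c$ the keys $x_3,x_6$ occur only in $R_3$ and $x_9$ only in $R_7$. Suppose, as you propose, $x_5$ is set to the dummy and $R_3$, $R_7$ receive the projections of $M$ onto $(x_3,x_4,x_6)$ and $(x_4,x_9)$. Their join is in general a strict superset of $M$, so a nonempty answer no longer certifies a witness. The missing idea is the paper's identifier gluing:
\begin{itemize}
\item Give the $\V_c$ attributes domain $[n^d]$.
\item Tag every nonzero entry of $M$ with a unique identifier written into all $\V_c$ positions; atoms containing only $\V_c$ attributes get the tuples $(\mathit{ID},\dots,\mathit{ID})$.
\item Connectivity of $G(\V_c)$ then forces every atom of $\E_c$ to agree on one tensor entry, so the join reconstructs $M$ exactly while $|D|$ stays $O(n^d)$.
\end{itemize}
Likewise, an atom such as $R_9(x_3,x_6)$ that contains two keys cannot be a single dummy tuple; it must receive the distinct projection of $M$.

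Second, when $\V_c\cap\bar{y}=\emptyset$ and $\V_k(\mathcal{N})\not\subseteq\bar{y}$, the dimension also counts active output attributes in $\V_\star\setminus\V_k(\mathcal{N})$ that have no neighbor atom at all. For example, $Q(x_2,x_3,x_4)\gets R_1(x_1,\dots,x_4),R_2(x_1),R_3(x_2),R_4(x_3)$ has dimension $4$ but only three satellites. Your $d$ witnessing atoms do not exist there, and the first answer does not suffice. You must enumerate $Q(D)$ and filter those unconstrained coordinates against the remaining vectors, as in the paper's $Q_d'$ lemma. This is affordable because some key attribute is non-output, so there are at most $n^{d-1}$ answers per round.

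Separately, your cost accounting only closes for $d=2$. With $n$ rounds there are $n^d$ tensor insertions but only $O(n^2)$ online updates. The amortized guarantee therefore bounds the whole run only by $O(n^{2d-1-d\epsilon})$, which exceeds $n^{d+1}$ once $d\ge 3$. Calling the insertion of $M$ \emph{preprocessing} does not help. An amortized bound says nothing about how the total splits between phases, and the algorithm may legitimately defer the work paid for by those $n^d$ insertions into the online phase. So the online time is not bounded by $O(n^{d+1-d\epsilon})$, and padding cannot force the algorithm to pay earlier.

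The fix, which is what the paper does, is to take $n^\gamma$ rounds with $\gamma\ge d-1$. Then the $\Theta(n^{\gamma+1})$ satellite updates dominate the $n^d$ tensor insertions. The entire run, tensor insertions included, costs $O(n^{\gamma+d-d\epsilon})$ plus at most $O(n^{\gamma+d-1})$ for enumeration, contradicting the OuMv$_d$ conjecture. Your mention of \emph{unbalanced parameters} points in this direction, but it is not equivalent to the polynomial-preprocessing argument you actually carry out.
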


Theorem~\ref{thm:height} shows that higher-height queries inherently require larger update costs: for height $k$ one cannot beat the exponents $(k-1)/k$. Theorem~\ref{thm:dimension} shows that queries of dimension $d$ similarly cannot be maintained with an exponent below $(d-1)/d$. \rwb{
In addition, they recover the previous dichotomy from \citet{berkholz17:_answer}:

\begin{proposition}
\label{thm:equiv}
    For any conjunctive query $Q$, the following statements are equivalent: (1) $Q$ is q-hierarchical; (2) $Q$ has height $k = 1$; and (3) $Q$ has dimension $d = 1$.
\end{proposition}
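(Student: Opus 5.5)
We will not know the formal definitions of height and dimension until Sections~\ref{sec:height} and~\ref{sec:dimension}, so the plan is stated in terms of their intended meaning. Height is taken as the length of the longest chordless (induced) path in the relational hypergraph, counted so that a single hyperedge has height $1$, combined with free-connexity; for free-connex $Q$ it equals the minimum height of a free-connex join tree. Dimension is taken as the largest number of pairwise ``independent'' attributes that can be forced to co-vary within a single relation: pairs $x,y$ with $\mathrm{atoms}(x)$ and $\mathrm{atoms}(y)$ incomparable, or a bound/free violation. Under this reading, the proof has three steps: (1)$\Leftrightarrow$(2), (1)$\Leftrightarrow$(3), and then transitivity.

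\textbf{Step (1)$\Rightarrow$(2) and (1)$\Rightarrow$(3).} Recall that $Q$ is q-hierarchical when two conditions hold. First, for all variables $x,y$, either $\mathrm{atoms}(x)\subseteq\mathrm{atoms}(y)$, or $\mathrm{atoms}(y)\subseteq\mathrm{atoms}(x)$, or the two sets are disjoint. Second, whenever $\mathrm{atoms}(x)\subsetneq\mathrm{atoms}(y)$ and $x$ is free, $y$ is also free.

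\begin{itemize}
\item \emph{Height.} The first condition makes the containment order on $\{\mathrm{atoms}(x)\}$ a forest. Each connected component therefore has a variable (the root) occurring in every atom of that component. Any two atoms in the same component then share that root, so every chordless path has length $1$. The second condition makes the free variables form an upward-closed set of roots-to-nodes. This gives a free-connex join tree in which every atom hangs directly under a (possibly virtual) node, i.e.\ height $1$.
\item \emph{Dimension.} The same forest structure shows that no two variables violate nestedness, and no free variable lies below a bound one. Hence the dimension witness has size $1$.
\end{itemize}

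\textbf{Step (2)$\Rightarrow$(1) and (3)$\Rightarrow$(1), by contraposition.} Suppose $Q$ is not q-hierarchical. There are two cases.

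\begin{itemize}
\item \emph{Non-hierarchical case.} There exist $x,y$ and atoms $R_1\ni x$, $R_2\ni x,y$, $R_3\ni y$ with $x\notin R_3$ and $y\notin R_1$. Then $R_1-R_2-R_3$ is a chordless path of length $2$, so the height is at least $2$. The same triple gives two attributes of $R_2$ with incomparable atom sets, so the dimension is at least $2$.
\item \emph{Free/bound violation.} There is a free $x$ and a bound $y$ with $\mathrm{atoms}(x)\subsetneq\mathrm{atoms}(y)$. Take $R_1\ni x,y$ and $R_2\ni y$ with $x\notin R_2$. Here the argument must use the free-variable-sensitive part of each definition: the height of the extended hypergraph that includes the head atom $\{\text{free vars}\}$, and the free-sensitive clause in the dimension definition. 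Either the head atom together with $R_1,R_2$ yields a chordless path of length $2$, or a free-connex join tree is forced to place $R_2$ below $R_1$ below the head, giving height $2$. The pair $(x,y)$ inside $R_1$ likewise witnesses dimension $2$.
\end{itemize}

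\textbf{Main obstacle.} The hardest part is the free/bound case, since the argument depends on exactly how height and dimension treat free variables; this case-split has to be checked against the formal definitions once they are given. As a consistency check, Theorems~\ref{thm:height} and~\ref{thm:dimension} with $k=2$ or $d=2$ give the $\Omega(|D|^{1/2-\epsilon})$ bound, which matches the Berkholz et al.\ lower bound for non-q-hierarchical queries. This supports the direction (2)/(3)$\Rightarrow$(1) but does not prove it, because it is a statement about complexity rather than a structural equivalence.
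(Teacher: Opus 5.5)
Your decomposition matches the paper's: the proposition is Lemma~\ref{lem:height-hierarchical} combined with Lemma~\ref{lem:hierarchical}. Your contrapositive witnesses are also the ones the paper uses (the triple $R_1,R_2,R_3$, and the pair $R_1\ni x,y$, $R_2\ni y$). The problem is that you never argue from the actual definitions, and the free/bound case of height is left as an unresolved ``either/or''. The missing idea there is the \emph{q-chordless path}. Take $x\in\bar{y}$, $y\notin\bar{y}$, $R_1\ni x,y$, and $R_2\ni y$ with $x\notin R_2$. Then $P=\{y\}$ with edge sequence $(R_1,R_2)$ is a length-$2$ q-chordless path, with $R_1$ as the output endpoint. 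Height $1$ forbids q-chordless paths of length $>1$, so this settles the case without any head atom or join tree. Arguing via free-connex join-tree height would in any case be circular, since Lemma~\ref{thm:tree-height} is stated only for free-connex queries and uses the height-$1$ case as its base. The paper also counts lengths differently from you: your triple is $P=\{x,y\}$, a chordless path of length $3>2k=2$, not ``length $2$''. For dimension, the two witnesses are as follows. In the non-hierarchical case, take $\V_c=\emptyset$, $\E_c=\{R_2\}$, $\N=\{R_1,R_3\}$ with keys $x,y$. In the free/bound case, take $\V_c=\emptyset$, $\E_c=\{R_1\}$, $\N=\{R_2\}$ with key $y$; since $\V_c\cap\bar{y}=\emptyset$ and $\{y\}\not\subseteq\bar{y}$, the ``otherwise'' branch applies and adds $x\in\bar{y}\cap\V_\star$.

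The forward directions (1)$\Rightarrow$(2) and (1)$\Rightarrow$(3) are where the proposal actually fails. ``Every chordless path has length $1$'' is false in the paper's convention: two atoms sharing a variable already form a length-$2$ chordless path, which height $1$ allows. What you must show is that there is no chordless path of length $\ge 3$ and no q-chordless path of length $\ge 2$. Nestedness handles the first: the endpoint edge $e_1$ forces $\E[v_2]\subsetneq\E[v_1]$, so the edge $e_3\ni v_2,v_3$ would also contain $v_1$, giving a shortcut. For dimension, ``hence the witness has size $1$'' skips the quantification over all connected $\V_c$ and all distinct neighbor sets. Nestedness alone cannot close this step, because the definition only requires $\pi(R)\in\bar{x}\cap\V_\star$, and the output clause counts every output attribute in $\V_\star$. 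Two concrete cases show this:
\begin{itemize}
\item $Q(a,b)\gets R(a,b),S(a,b),T(a,b)$ is q-hierarchical. Yet $\V_c=\emptyset$, $\E_c=\{R\}$, $\N=\{S,T\}$ with $\pi(S)=a$, $\pi(T)=b$ gives local dimension $2$.
\item $Q(a)\gets R(a,b),S(a,b)$ is q-hierarchical, since $\E[a]=\E[b]$. Yet $P=\{b\}$ is a length-$2$ q-chordless path with output endpoint $R$, and $\N=\{S\}$ with key $b$ gives local dimension $|\{b\}\cup\{a\}|=2$.
\end{itemize}
So (1)$\Rightarrow$(2) and (1)$\Rightarrow$(3) cannot be proved from the definitions as literally written. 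The paper's appendix arguments tacitly strengthen the definitions: the proof of Lemma~\ref{lem:hierarchical} assumes $x_2\notin R'$ and $x_1\notin R''$, and the proof of Lemma~\ref{lem:height-hierarchical} concludes a strict containment $\E[x_1]\subsetneq\E[x_2]$.

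To finish, you need to fix such a reading explicitly. One reading that works:
\begin{itemize}
\item no neighbor in $\N$ contains another neighbor's key;
\item an output attribute is counted only if it is missing from some neighbor whose key is non-output;
\item a length-$2$ q-chordless path requires an edge through $v_1$ that misses an output attribute of the output endpoint.
\end{itemize}
Your witnesses survive these restrictions. The forward direction then rests on one fact: in a hierarchical query, a nonempty connected $\V_c$ has a variable $z^*$ with $\E_c=\E[z^*]$. Hence every key $a$ satisfies $\E[z^*]\subsetneq\E[a]$, the keys form a chain, and two keys would put one inside the other's neighbor. The case $\V_c=\emptyset$ gives a non-hierarchical pair directly.
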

}

\rwb{These results also complement the $\omega(Q)-1$ lower bounds when $\omega(Q) \le 2$.  When $\omega(Q) \le 1.5$, the lower bounds from height and dimension dominate and any maintenance algorithm must incur amortized update time $\Omega\left(|D|^{1-1/\max(k, d)-\epsilon}\right)$, while when $1.5 < \omega(Q) \le 2$, any maintenance algorithm must incur amortized update time $\Omega\left(|D|^{\delta-\epsilon}\right)$ with $\delta = 1-1/\max\left(k, d, \frac{1}{2-\omega(Q)}\right)$ and any constant $\epsilon > 0$.}  In particular, these results are the \textbf{first to provide super-square-root lower bounds for maintaining free-connex queries}, implying that a general $O(|D|^{1-\epsilon})$-time maintenance algorithm for all free-connex queries would violate these conjectures, resolving the open problem in the negative unless breakthrough algorithms for the underlying conjectured-hard problems are found, and thus \textbf{prove the optimality of current upper bound algorithms}.

On the algorithmic side, we complement our lower bounds with a new parameterized upper bound for star queries, a subset of height $2$ queries with details given in Definition~\ref{def:stars}. Specifically:
\begin{theorem}[\textbf{Algorithm for Star Queries}]\label{thm:upper}
Let $Q$ be a star query with dimension $d \ge 1$, and $j \in [0, d]$ output attributes in $\bar{x}_1$, $D$ be the current database instance, and $h(D)$ be its {\em generalized $H$-index} (a measure of database densities, see Section~\ref{sec:upper}). There is a dynamic algorithm with linear space $O(|D|)$ that maintains $Q$ with an amortized update time of $O(h(D)^{d-1})$ and supports $O(h(D)^{d-j})$-delay~enumeration. 
\end{theorem}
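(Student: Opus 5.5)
We have to guess the structure of a star query: a central relation $R_0(\bar{x}_1,\dots)$, or more likely relations $R_i(\bar{x}_1, y_i)$ for $i \in [d]$ sharing the common attribute set $\bar{x}_1$. Free variables include $\bar{x}_1$ and the leaves $y_i$. Dimension $d$ corresponds to the number of branches. We take $h(D)$ to be the generalized $H$-index: the largest $h$ such that at least $h$ values $a$ of $\bar{x}_1$ have degree at least $h$ in every relation, or a similar min-degree variant. The standard consequence is that only $O(|D|/h)$ heavy values exist and $h \le |D|^{1/d}$.

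The plan is a heavy/light partition of the join values $a \in \dom(\bar{x}_1)$ with threshold $h = h(D)$. We store each $R_i$ indexed by $a$, with degrees $\deg_i(a)$, so space stays $O(|D|)$. A value $a$ contributes output iff every $\deg_i(a) > 0$. The output for $a$ is the product $\prod_i \sigma_{\bar{x}_1=a} R_i$, which can be enumerated with constant delay once we know $a$ is alive. The difficulty arises only when $j<d$, i.e.\ some join attributes are projected out. Then distinct output tuples come from unions over many $a$, and duplicates must be avoided.

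For updates we keep, for each value $a$, a counter of how many relations are nonempty. We maintain the set of live $a$ in $O(1)$. For the projected case we also maintain, per relevant projected key, multiplicity counts of the projection of the live part. When inserting $(a,y)$ into $R_i$, a light $a$ has some relation of degree below $h$. We then update the counts of all combinations over the other $d-1$ branches whose degree is below $h$, touching at most $h^{d-1}$ entries. A heavy $a$ has degree at least $h$ in all branches, and by the definition of $h(D)$ there are fewer than $h$ such values. These are handled lazily at enumeration time, costing $O(h^{d-j})$ delay by iterating over heavy values and the projected-out coordinates. The changing value of $h(D)$ is handled by periodic rebuilding (doubling/halving of $|D|$ or $h$), amortized.

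The main obstacle is making the bound hold in the worst case over light values. An update to a light $a$ may require touching up to $\prod_{i'\neq i}\deg_{i'}(a)$ combinations. This is bounded by $h^{d-1}$ only if we define lightness so that all other degrees are small. I would first try defining the $H$-index as the largest $h$ such that at least $h$ values have their $(d{-}1)$ largest degrees $\ge h$. I would then charge any update whose other degrees exceed $h$ to the heavy side instead.

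Enumeration then merges two sources. The first is the materialized light results, with constant delay. The second is the heavy values, at $O(h^{d-j})$ per tuple. Duplicates across the two sources are avoided by checking the materialized count before emitting a heavy-derived tuple, which gives the claimed delay.
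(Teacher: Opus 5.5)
\textbf{There is a genuine gap: you analyze the wrong query shape and a different index, and you miss the heavy/light balance that makes the bound work.}

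Your proposal rests on a misreading of what a star query is, so it analyzes a different problem. In this paper a star query is $Q(x_{d-j+1},\dots,x_d)\gets R_1(x_1,\dots,x_d),R_2(x_1),\dots,R_{k+1}(x_k)$. It has a single $d$-ary center relation whose attributes are joined with \emph{unary} satellites. The $\bar{x}_1$ in the statement is just the schema of the center $R_1$ (the output consists of $j$ of its attributes), not a join key shared by $d$ branches. Your shape $R_i(\bar{x}_1,y_i)$ is a different, hierarchical query. With $\bar{x}_1$ free it is q-hierarchical and needs only $O(1)$ update time; with $\bar{x}_1$ projected out and two leaves free it is not even free-connex.

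The index is different too. Here $h_i(D)$ is the largest $k$ such that at least $k$ values of the center attribute $x_i$ each occur in at least $k^{d-1}$ tuples of $R_1$, and $h(D)=\max_i h_i(D)$. The exponent $d-1$ is exactly what yields $h(D)\le|D|^{1/d}$, because the $h$ groups of $h^{d-1}$ tuples are disjoint. Your ``degree at least $h$ in every relation'' only gives $h\le|D|^{1/2}$, so the consequence $h\le|D|^{1/d}$ you state does not follow from your definition.

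The missing idea is the per-attribute heavy/light balance that this threshold encodes. A value of $x_i$ is light if its degree in $R_1$ is below $\tau_i=(h_i(D)+1)^{d-1}$ and heavy otherwise. By maximality of $h_i$ there are then at most $h_i(D)\le h(D)$ heavy values per attribute. The paper splits $R_1$ into $2^d$ parts by the heavy/light pattern of its tuples and keeps two views:
\begin{itemize}
\item $V_S$: each part semijoin-filtered by the satellites on its light attributes;
\item $V_C$: the projection of $V_S$ onto the heavy attributes, filtered by the Cartesian product of heavy satellite values.
\end{itemize}
The costs are then as follows:
\begin{itemize}
\item an update to $R_1$ costs $O(1)$;
\item a satellite update on a light value touches fewer than $\tau_i$ center tuples;
\item a satellite update on a heavy value changes the product over the other heavy dimensions by at most $h(D)^{d-1}$ entries.
\end{itemize}
Your ``main obstacle'' (a light value with large degree in the other branches) does not arise here. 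Your proposed fix of charging such updates to the heavy side is never costed.

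Two of your steps would fail even in your own model. First, materializing counts of the projected light output is not linear space, since one value $a$ can generate $\prod_i\deg_i(a)$ combinations. By contrast, all views above are subsets of the input relations or of their projections. Second, rejecting heavy-derived duplicates by a lookup does not by itself bound the delay, since arbitrarily many rejected candidates may precede the next output. The rebalancing with hysteresis and global rebuilds that you anticipated is then amortized essentially as you expected.
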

Theorem~\ref{thm:upper} shows that for those star queries that are widely used in analytical workloads, we can sometimes do better than the general $(d-1)/d$ bound by exploiting the instance-dependent $H$-index parameter. Meanwhile, $h(D)\le |D|^{1/d}$ always holds; any algorithm that can maintain such queries in time $O(h(D)^{d-1-\epsilon})$ for any $\epsilon > 0$ would contradict Theorem~\ref{thm:dimension} and OuMv$_k$~conjecture. 

\subsection{Organization}
The rest of the paper is structured as follows. In Section~\ref{sec:prelim}, we introduce notation and recall the necessary background used in this paper. We also formally define the height and dimension parameters of a query. In Section~\ref{sec:height}, we study the height parameter, formalizing the notion of a height $k$ query and proving Theorem~\ref{thm:height} via a reduction from the Odd/Even Cycle detection problem. In Section~\ref{sec:dimension} we study the dimension parameter and prove Theorem~\ref{thm:dimension} via a reduction from the OuMv$_k$ problem. In Section~\ref{sec:upper} we present our dynamic algorithm for star queries and prove Theorem~\ref{thm:upper}. We conclude in Section~\ref{sec:con} with a discussion of related work and future directions, with proofs and additional examples in the appendix. 

\section{Preliminaries}
\label{sec:prelim}

\subsection{Conjunctive Queries and Their Structural Properties}
\subsubsection{Acyclic Conjunctive Queries}
\paragraph{Conjunctive Queries (CQs)~\cite{abiteboul1995foundations}}
Let a database schema be defined by a set of attributes $\V = \{x_1, \dots, x_m\}$ and a set of $n$ relations $\E = \{R_1, \dots, R_n\}$. Each relation $R_i \in \E$ has a schema $\bar{x}_i \subseteq \V$. A conjunctive query (CQ) is expressed as:
\[
    Q(\bar{y}) \gets R_1(\bar{x}_1), \cdots, R_n(\bar{x}_n),
\]
where $\bar{y} \subseteq \V$ is the set of output attributes.  If $\bar{y} = \V$, the query is a full conjunctive query, and we can omit $\bar{y}$ from the head, denote it as $Q(*)$.  If $\bar{y} = \emptyset$, the query is a Boolean conjunctive query, and we write it as $Q()$.  We also use $(\V, \E, \bar{y})$ to represent a CQ.    

For any attribute $x \in \V$, we let $\E[x] = \{R(\bar{x}) \in \E : x \in \bar{x}\}$ be the set of relations containing $x$.  We also let $\V_\bullet = \{x \in \V: |\E[x]|=1\}$ be the set of unique attributes, \rwf{i.e., attributes that are not join attributes}.  

\paragraph{Hypergraphs and Acyclicity}
The acyclicity of the associated relational hypergraph defines the acyclicity of a CQ $Q$.  

\begin{definition}[Relational Hypergraph]
The relational hypergraph associated with a conjunctive query $Q = (\V, \E, \bar{y})$ is $\H = (\V, \E)$, where $\E$ is the set of hyperedges (relations) and $\V$ is the set of vertices (attributes). Each hyperedge $R_i \in \E$ contains the set of vertices $\bar{x}_i$.  

A \emph{primal graph} $G = (V, E)$ of $\H$ is a graph with $V = \V$, and an edge $(v, u) \in E$ if and only if there exists $R(\bar{x}) \in \E$, such that $v, u \in \bar{x}$.  A \emph{primal graph} over $V \subseteq \V$ is a graph $G(V) = (V, E)$, and an edge $(v, u) \in E$ if and only if there exists $R(\bar{x}) \in \E$, such that $v, u \in V \cap \bar{x}$.
\end{definition} 
There are multiple notions of acyclicity; in this paper, we consider the standard notion of $\alpha$-acyclicity. A CQ $Q$ is $\alpha$-acyclic if and only if it admits a \emph{generalized join tree} (GJT) $\T$ \cite{wang2023change}.  

A GJT is a tree $\T = (\V(\T), \E(\T))$ where each node $v \in \V(\T)$ corresponds to either an \emph{input relation} $R_i(\bar{x}_i) \in \E$, or a \emph{generalized relation} $[\bar{x}]$ with $[\bar{x}] \subseteq \bar{x}_i$ for some $R_i(\bar{x}_i)$. The GJT $\T$ satisfy:
\begin{itemize}[leftmargin=*]
    \item \textbf{Cover:} For every relation $R_i(\bar{x}_i) \in \E$, there exists a node $v \in \V(\T)$ such that $v = R_i$. Furthermore, all leaf nodes of $\T$ must be input relations from $\E$.
    \item \textbf{Connect:} For any attribute $x \in \V$, the set of nodes $\{v \in \V(\T) : x \in v\}$ induces a connected subtree in $\T$.
\end{itemize}
This definition generalizes traditional join trees, which are special cases where all nodes in $\V(\T)$ correspond to input relations in $\E$.

\subsubsection{Updates, Enumeration and Model of Computation.}
We assume an initially empty database. We consider a dynamic setting where the database undergoes a sequence of single-tuple updates. Each update consists of either inserting or deleting a tuple $t \in \dom(\bar{x})$ from a relation $R(\bar{x})$. We adhere to set semantics: inserting a tuple $t$ that already exists in $R$, or deleting a tuple $t$ that is not present in $R$, results in no change to the database instance. While a specific tuple value may be inserted or deleted multiple times throughout the sequence, we treat each operation as a distinct~event.

Updates are processed sequentially; the maintenance procedure for a single update must conclude before the next update is received. Following any update, the system may allow enumeration of the query result. We define the \emph{enumeration delay} as the maximum time elapsed between the start of the enumeration and the output of the first tuple, between the output of any two consecutive tuples, and between the output of the last tuple and the termination of the procedure. We focus on \emph{data complexity}, treating the query size as a constant. 

Our complexity analysis is based on the standard word RAM model of computation under the uniform cost measure. We assume a machine word size of $w = \Omega(\log N)$ bits for an input of size $N$. In this model, fundamental operations, including arithmetic operations, bitwise logic, concatenation, and direct memory access, are executed in $O(1)$ time.


    

\subsubsection{Free-connex CQs}

The definition of acyclicity for conjunctive queries does not consider the influence of output attributes. However, if we also take the output attributes $\bar{y}$ into account, we can refine the notion of acyclicity to define the class of \emph{free-connex} queries. Specifically, an acyclic CQ $Q = (\V, \E, \bar{y})$ is \emph{free-connex} if the derived query $Q' = (\V, \E \cup \{\bar{y}\}, \bar{y})$, obtained by treating the set of output attributes $\bar{y}$ as an additional relation, is also acyclic.  Equivalently, an acyclic CQ $Q$ is free-connex if it admits a free-connex join tree. 

\paragraph{free-connex join trees.} A GJT $\T$ is a \emph{free-connex join tree} if it satisfies the standard \textbf{Cover} and \textbf{Connect} properties, as well as the following three additional properties:
\begin{itemize}[leftmargin=*]
    \item \textbf{Above:} No input relation node $v \in \V(\T)$ is an ancestor of any generalized relation node.
    \item \textbf{Guard:} For any generalized relation node $v \in \V(\T)$, all its children $v_c \in \V(\T)$ satisfy $v \subseteq v_c$.
    \item \textbf{Connex:} There exists a \emph{connex set} of nodes $\V_{\textsf{con}}(\T) \subseteq \V(\T)$ such that: (i) $\V_{\textsf{con}}(\T)$ induces a connected subtree of $\T$ that contains the root of $\T$; (ii) for each node $v \in \V_{\textsf{con}}(\T)$ with parent $v_p$, their shared attributes are all output attributes, i.e., $(v \cap v_p) \subseteq \bar{y}$; and (iii) the union of attributes in the connex set covers all output attributes, i.e., $\bar{y} \subseteq \bigcup_{v \in \V_{\textsf{con}}(\T)} v$.
\end{itemize}

\rwa{The class of free-connex queries attracts significant attention because only free-connex queries can be evaluated in linear time or enumerated in constant delay with linear preprocessing time, assuming the Boolean matrix multiplication conjecture and Hyperclique conjecture.

\begin{conjecture}[Boolean Matrix Multiplication Conjecture (BMM)]
    The multiplication of two $n\times n$ boolean matrices cannot be done in time $O(n^2)$.
\end{conjecture}

\begin{conjecture}[HyperClique Conjecture (HyperClique)]
    For any integer $k \ge 3$, it is not possible to determine the existence of a $k$ hyperclique in a $(k-1)$-uniform hypergraph with $m$ hyperedges in time $O(m)$.
\end{conjecture}

\begin{theorem}[\cite{bagan2007acyclic}]
    A CQ $Q$ can be evaluated in linear time $O(|D|+|Q(D)|)$ if and only if $Q$ is free-connex, assuming BMM and HyperClique.
\end{theorem}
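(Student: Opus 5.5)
The statement is the Bagan--Durand--Grandjean dichotomy. I would prove the two directions separately: an algorithm for the ``if'' direction, and conditional reductions for the ``only if'' direction.

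\emph{Upper bound.} For the ``if'' direction, let $Q$ be free-connex and fix a free-connex join tree $\T$ with connex set $\V_{\textsf{con}}(\T)$.
\begin{enumerate}
\item Run a Yannakakis-style bottom-up semijoin pass over $\T$ and then a top-down pass. Each semijoin is done by hashing (or sorting) on the shared attributes of a parent--child pair. This removes all dangling tuples in $O(|D|)$ time.
\item Project every node outside the connex set onto its shared attributes with its parent. By the \textbf{Guard} and \textbf{Above} properties, these nodes can be projected away without losing output attributes.
\item Note that condition (ii) of \textbf{Connex} makes every parent--child intersection inside $\V_{\textsf{con}}(\T)$ consist only of output attributes. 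Hence the restricted connex subtree is a full acyclic join over $\bar{y}$, and condition (iii) says it covers all output attributes.
\item Enumerate this full acyclic join top-down with constant delay after the reduction. This yields total time $O(|D|+|Q(D)|)$.
\end{enumerate}

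\emph{Lower bound, split by the reason $Q$ fails to be free-connex.} For the ``only if'' direction there are two cases.
\begin{itemize}
\item \emph{$Q$ is cyclic.} Use the classical characterization: a cyclic hypergraph contains either a chordless cycle of length at least $4$ in its primal graph, or a non-conformal set, which is a clique of the primal graph not covered by any hyperedge. Take a minimal such witness. Embed a triangle or hyperclique instance into $D$: witness attributes receive vertex values, each relation encodes the corresponding edges or hyperedges, and every other attribute receives a constant. Then a linear-time evaluation of the Boolean version of $Q$ would detect a $k$-hyperclique in linear time, contradicting HyperClique. The chordless-cycle case reduces via the standard cycle-to-triangle or hyperclique argument.
\item \emph{$Q$ is acyclic but not free-connex.} Here there is a \emph{free path} $x, z_1, \dots, z_r, y$ with $x, y \in \bar{y}$, all $z_i \notin \bar{y}$, and the path chordless. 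Encode Boolean matrices $A$ and $B$ into the two end relations along the path, padding intermediate relations with identity (equality) encodings. Then $Q(D)$ equals the projection $\{(x,y) : (AB)_{xy} = 1\}$, plus constant columns, and $|D| = O(n^2)$. Evaluation in $O(|D|+|Q(D)|) = O(n^2)$ would multiply the matrices in $O(n^2)$, contradicting BMM.
\end{itemize}

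The main obstacle is showing that these witnesses always exist: a minimal chordless cycle or non-conformal clique for cyclic $Q$, and a free path for acyclic non-free-connex $Q$. The encodings must also avoid spurious joins through the other relations, which assigning constants to non-witness attributes handles. For existence, I would first try the GYO-elimination characterization applied to $\E\cup\{\bar{y}\}$: if reduction gets stuck, the remaining hypergraph exhibits a cycle, and any cycle through the added edge $\bar{y}$ yields a free path.
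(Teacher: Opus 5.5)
The paper gives no proof of this theorem; it imports the result from Bagan, Durand and Grandjean. (The HyperClique-based lower bound for cyclic queries is really from later work of Brault-Baron.) So the benchmark is the literature argument, and your overall structure is exactly that argument: full reducer plus constant-delay enumeration over the connex subtree; BMM via a free path; triangle/hyperclique via a chordless primal cycle or a minimal uncovered clique.

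One small fix in the upper bound concerns the justification for dropping non-connex nodes. It comes from \textbf{Connex}(iii) together with global consistency after the two semijoin passes, not from \textbf{Guard}/\textbf{Above}. Also, connex nodes may still carry non-output attributes, which you must project out (with deduplication) before enumerating. This is sound because, by \textbf{Connect}, any attribute shared by two connex nodes lies on a connex parent--child edge and is therefore an output attribute. Two steps of the lower bound, however, are not yet arguments.

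First, in the cyclic case you conclude from ``a linear-time evaluation of the Boolean version of $Q$''. But the hypothesis only provides an $O(|D|+|Q(D)|)$ algorithm for $Q$ with its own output attributes. These may lie on the witness, so $|Q(D)|$ in your instances can be superlinear (up to $\Theta(m^{3/2})$ triangles). The missing step is a cutoff. If $Q(D)=\emptyset$, the algorithm halts within $c(|D|+1)$ steps for its fixed constant $c$. So run it with that budget and answer ``non-empty'' on timeout; this decides emptiness in $O(|D|)$ time. The BMM case needs no such trick, since there $|D|+|Q(D)|=O(n^2)$ outright.

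Second, the existence of a free path, which you rightly call the crux, does not follow from ``GYO gets stuck, so there is a cycle through $\bar{y}$''. Acyclic hypergraphs contain Berge cycles, and the obstruction in $\E\cup\{\bar{y}\}$ may be non-conformality rather than a cycle. Instead, apply the chordal-and-conformal characterization you already use, to $\E\cup\{\bar{y}\}$.

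\emph{Chordless cycle.} Suppose the primal graph of $\E\cup\{\bar{y}\}$ has a chordless cycle $C$ of length at least $4$. Then $C$ must use an edge $xy$ that is not in the primal graph of $Q$; otherwise $C$ would be a chordless cycle there, contradicting acyclicity of $Q$. Hence $x,y\in\bar{y}$. No other vertex of $C$ is an output attribute, since it would be adjacent to both $x$ and $y$ and create a chord. Hence $C$ minus the edge $xy$ is a free path.

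\emph{Uncovered clique.} Suppose instead some inclusion-minimal clique $S$ is covered by no hyperedge. Conformality of $Q$ yields $x,y\in S\cap\bar{y}$ that co-occur in no relation; otherwise $S$ would be a clique of the primal graph of $Q$ and hence covered by a relation. Since $S\not\subseteq\bar{y}$, there is some $z\in S\setminus\bar{y}$. By minimality, there are hyperedges covering $S\setminus\{y\}$ and $S\setminus\{x\}$. Both contain $z$, so they are relations of $\E$ rather than $\bar{y}$. Thus $x,z,y$ is a free path.
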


\begin{theorem}[\cite{bagan2007acyclic}]
    A CQ $Q$ can be enumerated with $O(1)$ delay after a linear-time $O(|D|)$ pre-processing step if and only if $Q$ is free-connex, assuming BMM and HyperClique.
\end{theorem}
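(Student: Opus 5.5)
This theorem is cited from \cite{bagan2007acyclic} and is not proved in the paper, so I only reconstruct the standard argument behind it. Linear-time evaluation follows from constant-delay enumeration after linear preprocessing. Enumerating $Q(D)$ with $O(1)$ delay takes $O(|Q(D)|)$ time, so the total time is $O(|D|+|Q(D)|)$. Hence the plan is to prove the enumeration theorem. I would prove it in two directions: an algorithm for free-connex queries, and conditional hardness for everything else.

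\emph{Upper bound.} Take a free-connex join tree $\T$ with connex set $\V_{\textsf{con}}(\T)$.
\begin{enumerate}
\item Materialize every generalized relation $[\bar{x}]$ as a projection of a child input relation. The \textbf{Guard} property ensures that some child contains it, and the projection takes linear time.
\item Run a bottom-up Yannakakis semijoin pass, which removes every dangling tuple in $O(|D|)$ time.
\item Observe that by the \textbf{Connex} property, every node outside $\V_{\textsf{con}}(\T)$ contributes no output attributes that are not already covered. Once dangling tuples are gone, such a node only certifies existence. So the output is the full join of the reduced connex-set relations, each projected to $\bar{y}$.
\item Note that this is an acyclic full join over a connected subtree containing the root. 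Index each child by its shared attributes with the parent, which are all output attributes by \textbf{Connex}(ii), using hash tables.
\item Enumerate by depth-first nested iteration along the subtree. Since no dangling tuples remain, every partial assignment extends to an output, which gives $O(1)$ delay.
\end{enumerate}
Duplicates cannot arise, because the projection of each connex node to its attributes is a set and the join of these projections is exactly $Q(D)$.

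\emph{Lower bound, cyclic case.} Suppose $Q$ is not free-connex. If $Q$ is cyclic, it contains either a chordless cycle of length at least $4$ or a non-conformal clique in its hypergraph.
\begin{enumerate}
\item For a chordless cycle, encode triangle detection, or BMM via its triangle-style variant, into the relations along the cycle. Pad all other attributes with a single constant.
\item For a non-conformal clique, encode hyperclique detection.
\item In both cases the Boolean version of $Q$ decides the hard problem. Linear preprocessing plus a single $O(1)$-delay probe for the first answer would then solve it in linear time, contradicting BMM or HyperClique.
\end{enumerate}

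\emph{Lower bound, acyclic but not free-connex case.} Here the hypergraph $\E\cup\{\bar{y}\}$ is cyclic. The standard lemma gives a free-path $x,z_1,\dots,z_\ell,y$ with $x,y\in\bar{y}$, every $z_i\notin\bar{y}$, and $\ell\ge1$, chordless.
\begin{enumerate}
\item Encode matrices $A$ and $B$ on the first and last edges of the path, and use identity relations along the middle.
\item Then the projection onto $(x,y)$ equals the Boolean product $AB$, whose size is at most $n^2$.
\item Enumerate $AB$ in $O(n^2)$ total time, contradicting BMM.
\end{enumerate}

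I expect the main obstacle to be this structural lemma: extracting a chordless free-path, or a cycle or non-conformal set, from a query that fails the free-connex condition. It also requires ensuring that the other relations can be padded without blowing up the output.
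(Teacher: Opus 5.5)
The paper gives no proof of this statement (it is quoted from \cite{bagan2007acyclic}). Your outline is the standard argument behind it, and it is correct. It has three parts:
\begin{itemize}
\item constant-delay enumeration over the connex subtree of a free-connex join tree, after a linear-time semijoin reduction and projection of the connex nodes onto $\bar{y}$;
\item the free-path reduction from BMM for acyclic but non-free-connex queries;
\item for cyclic queries, the chordless-cycle and non-conformal-clique reductions from triangle and hyperclique detection. This cyclic half is usually credited to Brault-Baron rather than to Bagan, Durand and Grandjean.
\end{itemize}

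There are two small slips. First, a single bottom-up semijoin pass does not remove every dangling tuple, although it is exactly what top-down enumeration from the root requires. Second, the chordless-cycle case rests on the $k=3$ (triangle) instance of the HyperClique conjecture as stated in the paper, not on BMM, which is not known to rule out linear-time triangle detection.
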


These theorems also suggest that a CQ $Q$ has $\omega(Q) = 1$ if and only if $Q$ is free-connex, assuming BMM.  Such dichotomies also extend to dynamic settings with insertion-only update sequences.
\begin{theorem}[\cite{wang2023change, idris17:_dynam, pods25}]
    Given any insertion-only update sequence, a CQ $Q$ can be maintained in amortized $O(1)$ time while supporting $O(1)$-delay enumeration if and only if $Q$ is free-connex, assuming BMM and HyperClique.
\end{theorem}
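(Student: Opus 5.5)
The statement is an ``if and only if'' for insertion-only sequences, so I would prove the two directions separately, taking BMM and HyperClique as hypotheses throughout.

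\emph{Upper bound.} Let $Q$ be free-connex and fix a free-connex join tree $\T$ with connex set $\V_{\textsf{con}}(\T)$. At each node $v$ of $\T$ we maintain a semijoin-reduced view: $v$'s tuples that are consistent with all descendants of $v$, indexed on the attributes $v$ shares with its parent. The key point for insertions is monotonicity. Once a tuple becomes ``alive'' (it joins with at least one tuple in each child subtree), it stays alive forever. So each tuple changes status at most once. When a tuple becomes alive, we propagate it once to its parent: we look up the parent tuples matching on the shared attributes and increment their per-child counters. The \textbf{Guard} property means a generalized node's key is contained in its children's schemas, so these lookups are hash-table operations. A parent tuple is charged once per child at the moment its counter goes from $0$ to $1$. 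By a potential argument, the total work over $|D|$ insertions is $O(|D|)$, which gives amortized $O(1)$ per update.

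For enumeration, we walk the alive tuples in the connex subtree top-down. By condition (ii) of \textbf{Connex}, each child in the connex set is keyed only on output attributes. Hence every alive tuple extends to a full output tuple, and the standard constant-delay enumeration of Bagan et al.\ applies to the maintained reduced relations. The results cited as \cite{wang2023change, idris17:_dynam} already implement this scheme, and the proof would follow them.

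\emph{Lower bound.} Suppose $Q$ is not free-connex and an algorithm achieves amortized $O(1)$ insertions with $O(1)$-delay enumeration. Given a static instance $D$, insert its tuples one by one in $O(|D|)$ total time, then enumerate $Q(D)$ in $O(|Q(D)|)$ time. This evaluates $Q$ in linear time $O(|D|+|Q(D)|)$, contradicting the theorem of \cite{bagan2007acyclic} for non-free-connex $Q$ under BMM and HyperClique. This is exactly the insertion-only reduction described in the introduction.

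\emph{Main obstacle.} The difficult step is the amortization in the upper bound. A single insertion can make many parent tuples alive at once, and this can cascade up the tree. I would handle it by charging each status change and each counter increment to the tuple that becomes alive. Each tuple's join with its parent is looked up only once, through the index on the parent-shared attributes. Because the tree has constant size, the total number of such charges is $O(|D|)$.
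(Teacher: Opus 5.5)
Your lower-bound direction is exactly the insertion-only reduction the paper gives in its introduction; the paper itself only cites this theorem. Your upper-bound route is also the right one: maintain ``alive'' semijoin-reduced tuples on a free-connex join tree and exploit monotonicity under insertions. The gap is in the amortization, which you correctly flag as the crux but which fails for the data structure as you describe it. You propagate a newly alive child tuple by looking up \emph{all} matching parent tuples and incrementing a per-parent-tuple counter, and you charge each increment to the child tuple. The number of such increments is the number of (alive child tuple, matching parent tuple) pairs, i.e.\ the size of a binary join, not $O(|D|)$. For instance, take the q-hierarchical query $Q(x,y,z)\gets R(x,y),S(x,z)$ with root $R$ and child $S$. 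Insert $R(a,b_1),\dots,R(a,b_n)$ and then $S(a,c_1),\dots,S(a,c_n)$. Each $S(a,c_j)$ is alive on arrival and touches all $n$ parent tuples, giving $\Theta(n^2)$ work on $2n$ insertions. Your sentence ``a parent tuple is charged once per child at the moment its counter goes from $0$ to $1$'' is the correct accounting. But it only holds if the counters are kept per join-key value rather than per parent tuple, and if the charge goes to the parent tuples rather than to the child tuple that triggered the scan.

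The missing idea is to propagate through key-level views, as in the cited change-propagation frameworks. For each child $c$ of a node $p$, maintain $\mathrm{cnt}_c[k]$, the number of alive tuples of $c$ whose projection onto the attributes shared with $p$ equals $k$. A newly alive child tuple only increments $\mathrm{cnt}_c[k]$, in $O(1)$ time. Only when $\mathrm{cnt}_c[k]$ goes from $0$ to $1$ do you scan the parent tuples with key $k$ and test whether all their child counts are now positive. A newly inserted parent tuple tests its own child counts in $O(1)$ on arrival. Under insertions each $\mathrm{cnt}_c[k]$ becomes positive at most once, so every parent tuple is scanned at most once per child. Since $\T$ has constant size, the total work is $O(|D|)$ plus $O(1)$ per update.

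One smaller point concerns enumeration. The paper's connex condition (iii) only requires the connex nodes to \emph{cover} $\bar y$, so they may carry non-output attributes. An example is $Q(x)\gets R(x,y),S(y)$ with root $R(x,y)$ and connex set $\{R\}$; walking alive $R$-tuples then emits unboundedly many duplicates of each $x$. You should fix a free-connex tree whose connex nodes contain only output attributes (here, a generalized root $[x]$ above $R$), which is what the Bagan et al.\ enumeration presupposes.
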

}
\begin{example}
\label{ex:1}
    Consider the free-connex query
    \[
        \begin{aligned}
        Q_1(x_3, x_4, x_9) \gets & R_1(x_1, x_2), R_2(x_2, x_3), R_3(x_3, x_4, x_5, x_6), R_4(x_4, x_7), \\ &R_5(x_7, x_8), R_6(x_8), R_7(x_4, x_5, x_9), R_8(x_6), R_9(x_3, x_6), R_{10}(x_9).  
        \end{aligned}
    \]
    Figure~\ref{fig:hypergraph} shows its relational hypergraph with output attributes marked in solid dots, and Figure~\ref{fig:jointree} shows a possible generalized join tree, where all nodes in red form the connex subtree.  
\end{example}
\begin{figure}[t]
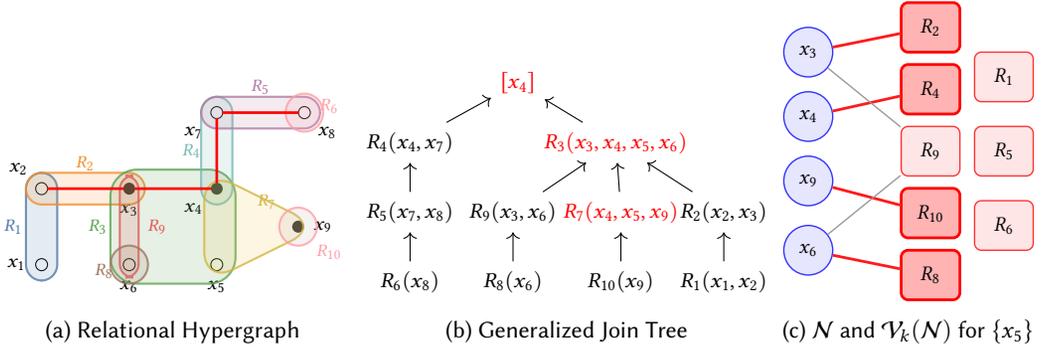

\begin{minipage}[b]{0.34\linewidth}
\includegraphics[width=\linewidth]{hypergraph.tikz}
\subcaption{Relational Hypergraph}
\label{fig:hypergraph}
\end{minipage}
\hfill
\begin{minipage}[b]{0.40\linewidth}
\includegraphics[width=\linewidth]{gentree.tikz}
\subcaption{Generalized Join Tree}
\label{fig:jointree}
\end{minipage}
\hfill
\begin{minipage}[b]{0.24\linewidth}
\includegraphics[width=\linewidth]{bipartite.tikz}
\subcaption{$\mathcal{N}$ and $\V_k(\mathcal{N})$ for $\{x_5\}$}
\label{fig:bipartite}
\end{minipage}
\vspace{-1em}
\caption{Illustration of Example~\ref{ex:1}}
\end{figure}

\subsubsection{Fine-grained Classification of Free-connex Queries}
In the static setting, every free-connex query can be efficiently evaluated and enumerated. However, Berkholz et al.~\cite{berkholz17:_answer} showed that in the dynamic setting, understanding the hardness of maintaining queries requires a more fine-grained classification. In particular, they identified the class of \emph{q-hierarchical queries}, a subset of free-connex queries that can be efficiently maintained.

\paragraph{q-hierarchical CQs}
A CQ $Q = (\V, \E, \bar{y})$ is q-hierarchical if it satisfies the two conditions: (1) for any pair of attributes $x_1, x_2 \in \V$, either $\E[x_1] \subseteq \E[x_2]$ or $\E[x_2] \subseteq \E[x_1]$ or $\E[x_1] \cap \E[x_2] = \emptyset$; and (2) if $x_1 \in \bar{y}$ and $\E[x_1] \subsetneq \E[x_2]$, then $x_2 \in \bar{y}$.  

\begin{theorem}[\cite{berkholz17:_answer},Theorem~1.1]
\label{the:q-hierarchical_lb}
Given any update sequence, a CQ $Q$ can be maintained in $O(1)$ time while supporting $O(1)$-delay enumeration if and only if $Q$ is q-hierarchical, assuming OMv~Conjecture.
\end{theorem}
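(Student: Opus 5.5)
The plan is to prove the two directions separately: an explicit data structure for q-hierarchical queries, and an OMv-based reduction for every query that is not q-hierarchical, following the strategy of \citet{berkholz17:_answer}.

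\textbf{Upper bound.} Condition (1) of q-hierarchy says the sets $\E[x]$ form a laminar family. So the attributes can be arranged in a rooted forest (a ``q-tree''), where $x$ is an ancestor of $x'$ whenever $\E[x] \supsetneq \E[x']$, and each relation schema $\bar{x}_i$ is exactly a root-to-node path. Condition (2) says the output attributes $\bar{y}$ form an upward-closed set in this forest.

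Along this forest I would store, for every node $v$ and every assignment $a$ to the path ending at $v$, the following data:
\begin{itemize}
\item a counter $c_v(a)$, the number of consistent extensions to the subtree of $v$, which is the product over children of the sums of their counters;
\item for output nodes, a doubly linked list of the child values whose counters are nonzero.
\end{itemize}
A single-tuple update to $R_i$ touches only the $O(1)$ nodes on the path $\bar{x}_i$. Recomputing the counters bottom-up along that path, with hashing for constant-time lookup, therefore costs $O(1)$. This is fine because each counter is a product over a constant number of children of maintained sums, and sums change by a known delta.

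Enumeration walks the output part of the forest top-down through the linked lists. Upward-closure of $\bar{y}$ guarantees that every listed value extends to an answer, which gives $O(1)$ delay.

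\textbf{Lower bound.} Suppose $Q$ is not q-hierarchical; I would split into two cases.

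\emph{Case (a): condition (1) fails.} Then there exist $x_1,x_2$ and relations $R \in \E[x_1]\setminus\E[x_2]$, $S\in\E[x_1]\cap\E[x_2]$, $T\in\E[x_2]\setminus\E[x_1]$. I would reduce from OuMv. Encode the matrix $M$ into $S$ on $(x_1,x_2)$. For each query pair $(u,v)$, insert $u$ into $R$ via $x_1$ and $v$ into $T$ via $x_2$, filling all other attributes with a constant dummy value. Then check whether $Q$ has any result, which costs $O(1)$ with constant-delay enumeration, and afterwards delete the inserted tuples.

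\emph{Case (b): condition (2) fails.} Then there are $x_1\in\bar{y}$ and $x_2\notin\bar{y}$ with $\E[x_1]\subsetneq\E[x_2]$. Take $T\in\E[x_2]\setminus\E[x_1]$ and $S \in \E[x_1]$, so $S$ also contains $x_2$. I would reduce from OMv. Encode $M$ in $S$ over $(x_1,x_2)$ and the vector $v$ in $T$ over $x_2$. Then the projection of $Q$ onto $x_1$ enumerates exactly the support of $Mv$, so each of the $n$ vectors costs $O(n)$ updates plus $O(n)$ enumeration.

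In both cases the database has size $|D| = O(n^2)$. Update time $O(|D|^{1/2-\epsilon})$ would therefore give total time $O(n^{3-\epsilon'})$, contradicting OMv. The same argument also yields the stated $\Omega(|D|^{1/2-\epsilon})$ bound.

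The main obstacle I expect is padding the remaining attributes and relations so that they neither filter out nor multiply answers. This requires fixing every attribute other than $x_1,x_2$ to a single dummy constant, and filling relations not involved in the gadget with that constant. In case (b) one must additionally check that other output attributes do not blow up the enumeration. Since they are fixed to constants, each answer corresponds to exactly one $x_1$-value, so the enumeration stays at $O(n)$ per vector.
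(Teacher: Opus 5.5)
The paper gives no proof of this statement (it is quoted from \citet{berkholz17:_answer}), and your sketch is correct and essentially reproduces their argument: a q-tree with per-assignment counters and linked lists of nonzero children for the upper bound, OuMv (whose hardness follows from OMv) when condition (1) fails, and OMv via enumerating the projection onto $x_1$ when condition (2) fails. Three details need tightening. First, in the padding, relations other than $R,S,T$ that contain $x_1$ or $x_2$ must hold all of $[n]$ (resp.\ $[n]\times[n]$) on those attributes rather than the dummy constant, since otherwise the join is empty. Second, each counter $c_v(a)$ must also include the indicator that the atoms whose deepest variable is $v$ contain $a$. Third, attributes with $\E[x]=\E[x']$ must be chained with the output attributes placed above the non-output ones.
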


\paragraph{Weak-q-hierarchical CQs} 
The q-hierarchical class is quite restrictive; for example, even a simple query $Q(x_1) \gets R_1(x_1, x_2), R_2(x_2)$ is not q-hierarchical. Wang et al.~\cite{wang2023change, pods25} slightly extended this class by defining \emph{weak-q-hierarchical} queries, which can be efficiently maintained under certain practical conditions, such as first-in-first-out (FIFO) update sequences. To define weak-q-hierarchical queries, we need to introduce the notions of \emph{ear} and \emph{skeleton}.

The concept of ears was first introduced by the GYO algorithm~\cite{grahamGYO, DBLP:conf/compsac/YuO79}, where a relation $R$ is an ear if there exists another relation $R'$ that contains all non-unique attributes of $R$.  The algorithm tests $\alpha$-acyclicity of a CQ by iteratively removing all unique attributes and relations whose attributes are contained in another relation, or equivalently, iteratively removing ears from the query, and then checking whether the query becomes empty. If the query is reduced to the empty set, one can construct a valid join tree for the query by connecting each removed ear $R$ to its anchor $R'$. 

Hu and Wang~\cite{pods25} extended this notion to account for output attributes, referring to such relations as \emph{reducible relations}. Under this extension, the generalized GYO reduction proceeds as follows: iteratively removing a unique attribute if it is not an output attribute, or if it belongs to a relation $R$ whose non-unique attributes are all output attributes.  One then removes every relation whose attributes are contained in another relation. Based on this generalized reduction, we now give the formal definition of an ear for a free-connex query.

\begin{definition}[Ears]
    In a free-connex query $Q = (\V, \E, \bar{y})$, a relation $R(\bar{x}) \in \E$ is called an ear of the query if there exists another relation $R'(\bar{x}') \in \E$, such that either (i) $\bar{x} \setminus \V_\bullet \subseteq \bar{x}'$ and $\bar{x} \cap \V_\bullet \subseteq \V \setminus \bar{y}$; or (ii) $\bar{x} \setminus \V_\bullet \subseteq \bar{x}' \cap \bar{y}$.  We refer to $R'$ as an anchor of $R$.
\end{definition}

\begin{algorithm}[t]
\caption{$\textsc{Reduction}(Q=(\V,\E,\bar{y}))$}
\label{alg:trim}
$\E_e \gets \{\text{all ears in }\E\}$\;
\While{there exists $R(\bar{x}) , R'(\bar{x}') \in \E$ s.t. $R \in \E_e$ and $R'$ is its anchor}{
    $\mathcal{E} \gets \mathcal{E} - \{R\}$
}
\Return $\E$\;
\end{algorithm}

The \emph{skeleton} of a query $Q$, denoted $\E_*$, is the subset of relations remaining after recursively removing an ear from $\E$ until no further removal is possible. This reduction is summarized in Algorithm~\ref{alg:trim} and is similar to the GYO algorithm, except that the set of ears is computed once at the start and held fixed while removing them from $\E$. As a consequence, the skeleton is guaranteed to be non-empty, as removing an ear requires an anchor in the remaining query. We further define the concept of residual query for a given free-connex query:
\begin{definition}[Residual Query]
    Given a free-connex query $Q$ with its skeleton $\E_*$, where $\V_* = \{x\mid \exists R(\bar{x}) \in \E_*, x \in \bar{x}\}$ is the set of attributes in $\E_*$, and $\bar{y}_* = \{x\mid x \in \bar{y} \land \exists R(\bar{x}) \in \E_*, x \in \bar{x}\}$ is the set of output attributes in $\E_*$,  the residual query $Q_*$ for $Q$ is $\E_*$-induced query $Q_* \gets (\V_*, \E_*, \bar{y}_*)$. 
\end{definition}
It is possible that a pair of relations $R_i, R_j$ is both an ear relation and the anchor of the other, making the skeletons, and thus the residual queries, not unique.  However, the underlying hypergraphs for different residual queries should be isomorphism and the hardness of these residual queries remains the same:
\begin{lemma}
\label{lem:2.9}
    Let $Q_1 = (\V_1, \E_1, \bar{y}_1)$ and $Q_2 = (\V_2, \E_2, \bar{y}_2)$ are two residual queries for a given free-connex query $Q$, then for every $R(\bar{x}) \in \E_1$, it is either (1) $R(\bar{x}) \in \E_2$, or (2) there exists $R'(\bar{x}') \in \E_2$, such that $R'(\bar{x}') \notin \E_1$, and $\bar{x} \setminus \V_\bullet = \bar{x}' \setminus \V_\bullet$.
\end{lemma}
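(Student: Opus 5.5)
The plan is to fix the set $\E_e$ of ears, which Algorithm~\ref{alg:trim} computes once from the original query $Q$, and study how two executions of the removal loop can differ. The non-ears $\E \setminus \E_e$ are never removed, so they lie in every skeleton. Thus if $R(\bar{x}) \in \E_1 \setminus \E_2$, then $R$ is an ear that survived in $\E_1$ but was removed in the run producing $\E_2$. I would prove the stronger statement: some $R' \in \E_2 \setminus \E_1$ satisfies $\bar{x} \setminus \V_\bullet = \bar{x}' \setminus \V_\bullet$. Here $\V_\bullet$ is always taken with respect to the original $Q$, since the ear and anchor conditions are evaluated on the original attribute sets.

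First I would record a monotonicity fact. Write $\mathrm{core}(R) = \bar{x} \setminus \V_\bullet$. If $R'$ is an anchor of $R$, then in case (i) of the definition $\mathrm{core}(R) \subseteq \bar{x}'$. Since every attribute of $\mathrm{core}(R)$ appears in at least two relations, these attributes are non-unique, so $\mathrm{core}(R) \subseteq \mathrm{core}(R')$. Case (ii) gives the same inclusion, and moreover all of $\mathrm{core}(R)$ lies in $\bar{y}$. Second, I would show the loop is well-behaved at termination. When the loop stops on $\E_1$, no ear remaining in $\E_1$ has an anchor in $\E_1 \setminus \{R\}$. So every surviving ear $R \in \E_1$ has no anchor among the other relations of $\E_1$. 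By contrast, in the second run $R$ was removed while some anchor $R_1$ was present. Follow a chain: $R_1$ is either in $\E_2$ or was itself removed later with an anchor $R_2$ present, and so on. Because removals happen in time order, this chain terminates at some $S \in \E_2$, with $\mathrm{core}(R) \subseteq \mathrm{core}(R_1) \subseteq \cdots \subseteq \mathrm{core}(S)$. It also remains to check that the ear condition composes along the chain, so that $S$ is itself an anchor of $R$. For case (ii) this needs output-inclusion, which holds because $\mathrm{core}(R) \subseteq \bar{y}$ propagates. Mixed chains of type (i) and (ii) I would check by a short case analysis.

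Next I would show $S \notin \E_1$ and that the cores are equal. Since $S$ is an anchor of $R$ and $R$ survives in $\E_1$ with no anchor there, $S \notin \E_1$. Then $S$ was removed in run 1, and the same chain argument run in the other direction yields $T \in \E_1$ that anchors $S$, with $\mathrm{core}(S) \subseteq \mathrm{core}(T)$ and hence $\mathrm{core}(R) \subseteq \mathrm{core}(T)$. If $T \ne R$, then $T$ is an anchor of $R$ lying in $\E_1$, contradicting termination. Hence $T = R$, and $\mathrm{core}(R) \subseteq \mathrm{core}(S) \subseteq \mathrm{core}(R)$, which gives equality with $R' = S$.

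The main obstacle is the transitivity of the anchor relation across the two ear types. A chain can mix a type (i) step, which requires the unique attributes of $R$ to be non-output, with a type (ii) step, which requires the core to be output. I would handle this by showing that each step forces the original $R$ to satisfy one of (i) or (ii) with respect to the final relation. The unique attributes of $R$ are unchanged along the chain, so if the first step is of type (i) the condition on $\bar{x} \cap \V_\bullet$ persists. If the first step is of type (ii), the core of $R$ is entirely output, and that property is inherited via core inclusion. Only core inclusion into $\bar{x}'$, or into $\bar{x}' \cap \bar{y}$, needs to be checked at the end, and it follows from the inclusion chain.
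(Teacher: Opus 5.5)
Your proposal is correct and follows essentially the same route as the paper. Both arguments find an anchor $S \in \E_2 \setminus \E_1$ of $R$, take an anchor of $S$ that survives in $\E_1$, use transitivity to force that anchor to be $R$, and conclude that the two cores coincide. Your chain argument through removal order, and your check that transitivity holds across the two ear types, supply details the paper only asserts. That check is in fact immediate: the extra requirement in (i) or (ii), beyond $\bar{x} \setminus \V_\bullet \subseteq \bar{x}'$, depends only on $R$ itself.
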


\begin{proof}

By definition, every relation that is not an ear must remain in the skeleton. Hence, any ear with a non-ear anchor must be removed. Now consider two residual queries $Q_1$ and $Q_2$. If $R(\bar{x}) \in \E_1$ but $R(\bar{x}) \notin \E_2$, then $R$ must be an ear of the original query $Q$ whose anchors are also ears of $Q$. Since $R$ is removed in $Q_2$ but remains in $Q_1$, there must exist an anchor $R' \in \E_2$ of $R$ such that $R' \notin \E_1$. By the same argument, because $R' \notin \E_1$, it must have an anchor $R'' \in \E_1$. The anchor relation is transitive: if $R_i$ is an anchor of $R_j$ and $R_k$ is an anchor of $R_i$, then $R_k$ is also an anchor of $R_j$. Therefore, we must have $R'' = R$; otherwise, by transitivity, $R$ would also have an anchor in $\E_1$ different from itself, and thus should have been removed from $\E_1$, a contradiction. It follows that $R$ and $R'$ are anchors of each other, and hence $\bar{x} \setminus \V_\bullet^1 = \bar{x}' \setminus \V_\bullet^2$.
\end{proof}

\begin{example}
\label{ex:2}
    Considering $Q_1$ from Example~\ref{ex:1}.  Relation $R_1$, $R_6$, $R_8$, $R_9, R_{10}$ are ears, with $R_2$ be the anchor of $R_1$, $R_5$ be the anchor of $R_6$, $R_4$ be the anchor of $R_8 $ and $ R_9$, $R_7$ be the anchor of $R_{10}$.  Its residual query is 
    \[
        Q_1'(x_3, x_4, x_9) \gets R_2(x_2, x_3), R_3(x_3, x_4, x_5, x_6), R_4(x_4, x_7), R_5(x_7, x_8), R_7(x_4, x_5, x_9).
    \]
\end{example}

\begin{definition}
A free-connex query $Q$ is weak-q-hierarchical if $Q$ is not q-hierarchical and its residual query $Q_*$ is q-hierarchical. 
\end{definition}

\rwb{
\paragraph{Free-connex join tree height} The classification of free-connex queries is closely tied to the height of their free-connex join trees~\cite{wang2023change,pods25}.

\begin{definition}
    The \emph{height} of a generalized join tree $\T$ is the maximum count of input relations on any path from a leaf to the root without considering the generalized relations. 
\end{definition}

\begin{lemma}[\cite{pods25}, Lemma~2.2 and Lemma~3.7]
Let $Q$ be a free-connex conjunctive query. $Q$ is q-hierarchical if and only if it admits a height $1$ free-connex join tree, and $Q$ is weak-q-hierarchical if and only if it admits a height $2$ free-connex join tree.
\end{lemma}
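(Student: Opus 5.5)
We prove the two equivalences separately. The first one, between q-hierarchical queries and height-$1$ free-connex join trees, is shown directly from the laminar structure of the sets $\E[x]$. The second one, between weak-q-hierarchical queries and height-$2$ trees, is reduced to the first by passing to the residual query $Q_*$.

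\emph{q-hierarchical $\Rightarrow$ height $1$.} By condition (1), the family $\{\E[x]: x\in\V\}$ is laminar.
\begin{itemize}
\item Order the attributes by $x\preceq x'$ iff $\E[x]\supseteq\E[x']$, which gives a forest (ties broken arbitrarily).
\item For each attribute $x$, create a generalized node $[\mathrm{anc}(x)]$ consisting of $x$ together with all its ancestors in this forest. Make $[\mathrm{anc}(x)]$ a child of $[\mathrm{anc}(\mathrm{parent}(x))]$, and hang all roots under an empty root $[\,]$.
\item Each relation $R(\bar{x})$ has $\bar{x}$ equal to a root-to-node path of the forest, by laminarity. Attach $R$ as a leaf under the generalized node $[\bar{x}]$.
\end{itemize}
Cover and Above are immediate. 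Guard holds because children extend their parent's attribute set. Connect holds because the nodes containing $x$ are exactly the subtree below $[\mathrm{anc}(x)]$. Every root-to-leaf path contains exactly one input relation, so the height is $1$. For Connex, take the connex set to be all generalized nodes whose attributes are all output, together with the leaves whose attributes are all output. By condition (2), output attributes are upward closed in $\preceq$, so this set is connected, contains the root, and covers $\bar{y}$. Moreover, every edge inside it shares only output attributes.

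\emph{Height $1$ $\Rightarrow$ q-hierarchical.} In a height-$1$ free-connex join tree, Above forces every input relation to be a leaf and every internal node to be generalized. By Guard, attributes only grow going down the tree. Combined with Connect, the nodes containing $x$ form the full subtree below a unique topmost node $t_x$, and so $\E[x]$ is the set of leaves below $t_x$.
\begin{itemize}
\item Leaf sets of subtrees are laminar, which gives condition (1).
\item For condition (2), suppose $x\in\bar{y}$ and $\E[x]\subsetneq\E[y]$. Then $t_x$ is a proper descendant of $t_y$, so both $t_x$ and its parent contain $y$. The connex set covers $x$ at some node in the subtree of $t_x$, and it is connected to the root. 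Hence it contains the edge from $t_x$ to its parent, whose shared attributes must all be output. Therefore $y\in\bar{y}$.
\end{itemize}

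\emph{Weak-q-hierarchical $\Rightarrow$ height $2$.} Apply the first part to the residual query $Q_*$ to obtain a height-$1$ free-connex join tree for $Q_*$. Then reinsert each removed ear $R$ as a leaf child of an anchor that remains in the skeleton, which is a leaf input relation of that tree.
\begin{itemize}
\item Connect survives because $\bar{x}\setminus\V_\bullet\subseteq\bar{x}'$, and the unique attributes of $R$ occur nowhere else.
\item For Connex, in case (ii) of the ear definition the shared attributes are output, so the connex set may be extended through $R$.
\item In case (i), the attributes private to $R$ are non-output, so $R$ need not join the connex set at all.
\end{itemize}
Anchors are leaves, so the height becomes at most $2$. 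It is exactly $2$ because $Q$ itself is not q-hierarchical.

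\emph{Height $2$ $\Rightarrow$ weak-q-hierarchical.} Take a height-$2$ tree. Every input relation $R$ whose parent is an input relation $R'$ is an ear with anchor $R'$. This follows from Connect together with the Connex condition on the edge $(R,R')$, which yields either case (i) or case (ii) of the ear definition. Deleting these lower-level relations leaves a height-$1$ free-connex join tree of the induced query, which is q-hierarchical by the first part.

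The main obstacle is that this deleted set need not coincide with the skeleton $\E_*$ produced by Algorithm~\ref{alg:trim}. The fixed-ear reduction may remove additional ears, and it may remove a different member of a mutually anchoring pair. I would handle this in two steps:
\begin{itemize}
\item Show that removing further ears from a q-hierarchical query keeps it q-hierarchical. An ear's non-unique attributes lie inside its anchor, so the laminarity of the sets $\E[x]$ and the upward closure of output attributes are both preserved.
\item Invoke Lemma~\ref{lem:2.9}: different skeletons differ only by swapping relations with identical non-unique attributes, so q-hierarchicality of $Q_*$ does not depend on which skeleton is chosen.
\end{itemize}
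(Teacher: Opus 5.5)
Your height-$1$ characterization is essentially right (with one repair below), and your converse is sound up to the smaller fixes at the end. The direction ``weak-q-hierarchical $\Rightarrow$ height-$2$ tree'', however, has a genuine gap at the Connex step. The paper gives no proof of this lemma; it is imported from \cite{pods25}. The only related argument in the paper, the proof of Lemma~\ref{lem:3.5}, makes the same unverified move of hanging every ear under its anchor.

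The trouble is a case-(ii) ear $R$ that carries a unique output attribute. That attribute occurs only in $R$, so $R$ must lie in the connex set. Since the connex set is connected and contains the root, its anchor $R'$ must lie there too, which requires $R'\cap\mathrm{parent}(R')\subseteq\bar{y}$. In your height-$1$ tree, $R'$ hangs under $[\bar{x}']$, so this forces $\bar{x}'\subseteq\bar{y}$. For example, $Q(y,z)\gets R_a(y,z),R_1(z,w),R_c(w)$ is weak-q-hierarchical with skeleton $\{R_1\}$. Your construction produces $[\,]\to[z]\to[z,w]\to R_1\to\{R_a,R_c\}$, where any connex set must use the edge $R_1$--$[z,w]$, which shares the non-output $w$. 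Here you could repair it by hanging $R_a$ from $[z]$, but no local repair works in general.

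In fact, with the definitions exactly as written in this paper, the direction fails. Take
\[
Q(y,z_1,z_2)\gets R_a(y,z_2),R_1(z_1,z_2,w,p),R_2(z_1,z_2,w,q),R_p(p),R_q(q),R_b(z_1,u),R_c(u).
\]
\begin{itemize}
\item The ears are $R_a$ (case (ii)) and $R_p,R_q,R_c$. The skeleton $\{R_1,R_2,R_b\}$ is q-hierarchical, while $Q$ is not ($\E[z_1]$ and $\E[z_2]$ cross). So $Q$ is weak-q-hierarchical.
\item In a height-$2$ free-connex tree, none of $R_1,R_2,R_b$ can be the root. None can be a leaf under an input relation either, since that parent would need both $p,w$ (resp.\ $q,w$, resp.\ $u,z_1$). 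So their parents are generalized, and the parents of $R_1,R_2$ contain $w$.
\item If $R_a$ hangs under $R_1$ or $R_2$, then no generalized node contains $y$: Guard would pass $y$ to an input child of a generalized node. So $R_a$ is in the connex set, and the connex path crosses an edge sharing $w$.
\item Otherwise $R_a$ hangs under a generalized $G_a\subseteq\{y,z_2\}$ with $z_2\in G_a$, and $R_b$ under a generalized $G_b\subseteq\{z_1,u\}$ with $z_1\in G_b$. The topmost $z_1$-node is a generalized ancestor-or-self of $G_b$, hence a subset of $G_b$ by Guard, so it misses $z_2$. Symmetrically, the topmost $z_2$-node misses $z_1$.
\item Both topmost nodes are ancestors of $R_1$. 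Whichever is lower lies on the tree path from the higher one to $R_1$, so by Connect it contains the higher one's attribute, a contradiction.
\end{itemize}
So this direction cannot be derived from the stated definitions. A proof must rely on whatever additional restriction the definitions of \cite{pods25} impose.

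Smaller points:
\begin{itemize}
\item In part one, ties cannot be broken arbitrarily. For $Q(x_1)\gets R(x_1,x_2)$ with $x_2$ placed above $x_1$, no connex set exists. Put output attributes first within each tie class.
\item For the converse, the statement must be read as ``minimum height $2$''. Otherwise it fails: $Q(x)\gets R_1(x),R_2(x)$ is q-hierarchical yet admits a height-$2$ tree with $R_2$ under $R_1$.
\item Lemma~\ref{lem:2.9} does not apply directly, because the set of depth-one relations is not an output of Algorithm~\ref{alg:trim}. Instead, pick the run that deletes the depth-two relations first. Then use two facts: every relation-induced subquery of a q-hierarchical query is q-hierarchical, and swapping mutual anchors preserves q-hierarchicality, because an ear with a unique output attribute has only output join attributes.
\end{itemize}
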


In this work, we refine the classification of conjunctive queries by extending the notions of height. When the query is free-connex, we show that a query has height $k$ exactly when it admits a height $k$ free-connex join tree but admits no height $k-1$ free-connex join tree. Under this view, the q-hierarchical and weak-q-hierarchical classes correspond precisely to the height $1$ and height $2$ queries, respectively.

}

\subsection{$k$-Cycle Detection Problem}

Finding cycles in graphs is a fundamental problem in algorithmic graph theory. Given a directed graph $G = (V, E)$ with $n$ vertices and $m$ edges, a $k$-cycle is a sequence of $k$ distinct vertices $(v_1, v_2, \ldots, v_k)$ such that $(v_i, v_{i+1}) \in E$ for every $1 \le i < k$, and $(v_k, v_1) \in E$. The $k$-cycle detection problem asks whether a given graph contains a $k$-cycle. This problem is NP-complete in general. For example, when $k = n$, it reduces to the Hamiltonian cycle problem. On the other hand, it is fixed-parameter tractable (FPT) when parameterized by $k$, meaning it is polynomial-time solvable for each fixed $k = O(1)$.

The $k$-cycle detection problem also serves as a core problem in fine-grained complexity. In particular, the special case $k=3$ (triangle detection) has been extensively used to establish conditional lower bounds for database query processing~\cite{conjunctiveLB,comparison,difference}. However, beyond triangles, the general $k$-cycle problem is rarely discussed in the context of database queries.  
\citet{lincoln_et_al:LIPIcs.ITCS.2020.11} have proved the lower bounds for detecting odd cycles and even cycles on sparse graphs based on the combinatorial $k$-clique conjecture.

\begin{conjecture}[Combinatorial $k$-Clique Conjecture~\cite{DBLP:conf/focs/AbboudBW15a}] There is no \textbf{combinatorial} algorithm that can detect a $k$-clique in a graph with $n$ vertices and $O(n^2)$ edges in $O(n^{k-\epsilon})$ time, for any constant $\epsilon > 0$ and any integer $k \ge 3$
\end{conjecture}


\begin{theorem}[Odd Cycle~\cite{lincoln_et_al:LIPIcs.ITCS.2020.11}, Theorem 15]
\label{thm:oddcycle}
    Given a directed graph $G = (V, E)$ with $|V| = n$ and $|E| = m$, and $m = n^{1+2/(k-1)}$
    , no \textbf{combinatorial} algorithm can detect the existence of an odd $k$-cycle for any odd integer $k \ge 3$ in $G$ in time $O(m^{2k/(k+1)-\epsilon})$ for any constant $\epsilon > 0$, assuming combinatorial $k$-clique conjecture.
\end{theorem}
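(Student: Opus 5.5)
We prove the statement by a fine-grained reduction: from detecting a $k$-clique in a dense graph to detecting a directed $k$-cycle in a sparse graph. This follows the approach of \citet{lincoln_et_al:LIPIcs.ITCS.2020.11}. Write $k=2\ell+1$ with $\ell\ge 1$, and let $G_0$ be a $k$-clique instance on $n$ vertices. We may assume $G_0$ is $k$-partite with parts $V_1,\dots,V_k$ (color-coding, or taking $k$ copies of the vertex set). This preserves the clique question up to constant factors. All indices below are taken cyclically modulo $k$.

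\textbf{Construction.} We build a layered directed graph $G$ with layers $L_1,\dots,L_k$. A node of $L_i$ is an $\ell$-tuple $(u_i,\dots,u_{i+\ell-1})\in V_i\times\cdots\times V_{i+\ell-1}$ that forms a clique in $G_0$. We put an edge from $(u_i,\dots,u_{i+\ell-1})\in L_i$ to $(u'_{i+1},\dots,u'_{i+\ell})\in L_{i+1}$ exactly when two conditions hold:
\begin{itemize}
\item the tuples agree on the overlap, i.e.\ $u_j=u'_j$ for $i+1\le j\le i+\ell-1$;
\item the $(\ell+1)$-window $(u_i,\dots,u_{i+\ell})$, with $u_{i+\ell}=u'_{i+\ell}$, is a clique in $G_0$.
\end{itemize}
Then $N=|V(G)|=O(n^\ell)$ and $m=|E(G)|=O(n^{\ell+1})=O(N^{1+1/\ell})=O(N^{1+2/(k-1)})$. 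Adding isolated vertices or dummy edges adjusts the instance to the exact density $m=\Theta(n^{1+2/(k-1)})$ required in the statement, where $n$ now denotes $|V(G)|$. The graph can be built in $O(n^{\ell+1})$ time by a purely combinatorial enumeration.

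\textbf{Correctness.} Every edge goes from $L_i$ to $L_{i+1}$. Hence every directed $k$-cycle in $G$ visits each layer exactly once, in cyclic order. Because consecutive nodes agree on their overlapping coordinates, going around the cycle assigns one well-defined vertex $v_j\in V_j$ to every $j$. Every pair $v_a,v_b$ has cyclic distance at most $\ell$, since $k=2\ell+1$. So the pair lies inside some $(\ell+1)$-window $(v_i,\dots,v_{i+\ell})$, and that window is a clique by the edge condition. Therefore $\{v_1,\dots,v_k\}$ is a $k$-clique. Conversely, any $k$-clique $v_1,\dots,v_k$ with $v_j\in V_j$ yields such a cycle directly. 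I expect this verification to be the main point needing care: the tuples must wrap around consistently, and the choice of $k$ odd with window length $\ell+1=(k+1)/2$ is exactly what guarantees that every pair is covered.

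\textbf{Running-time transfer.} Suppose a combinatorial algorithm detected a $k$-cycle in time $O(m^{2k/(k+1)-\epsilon})$. Since $m=O(n^{\ell+1})$ and $2k/(k+1)=(2\ell+1)/(\ell+1)$, its total cost is
\[
O\bigl(n^{(\ell+1)(2\ell+1)/(\ell+1)-\epsilon(\ell+1)}\bigr)=O(n^{k-\epsilon'}),
\]
plus the $O(n^{\ell+1})=O(n^{k-\ell})$ construction time. This gives a combinatorial $O(n^{k-\epsilon'})$ algorithm for $k$-clique, contradicting the combinatorial $k$-Clique Conjecture.
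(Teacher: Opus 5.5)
Your reduction is correct and is essentially the one behind the cited result: the layered construction of Lincoln, Vassilevska Williams and Williams (nodes are $\ell$-clique windows, edges are $(\ell+1)$-clique windows, and $k=2\ell+1$ makes every pair of parts lie in a common window), which Lincoln--Vyas restate as their Theorem 15 and which the paper imports without proof. The one step to make explicit is the padding. The inference $m=O(N^{1+1/\ell})$ does not follow from upper bounds alone, since the number of $\ell$-clique tuples can be far below $n^{\ell}$; so first add isolated vertices to make $|V(G)|=\Theta(n^{\ell})$, then add the extra edges on fresh vertices in an acyclic pattern (e.g.\ a one-way complete bipartite block) so that no new $k$-cycles are created.
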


\begin{theorem}[Even Cycle~\cite{lincoln_et_al:LIPIcs.ITCS.2020.11}, Corollary 16]
\label{thm:evencycle}
    Given a directed graph $G = (V, E)$ with $|V| = n$ and $|E| = m$, and $m = n^{k/(k-2)}$, no \textbf{combinatorial} algorithm can detect the existence of a even $k$-cycle for any even integer $k \ge 4$ in $G$ in time $O(m^{(2k-2)/k-\epsilon})$ for any constant $\epsilon > 0$, assuming combinatorial $k$-clique conjecture.
\end{theorem}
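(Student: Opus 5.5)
The plan is to derive the even case directly from the odd case, Theorem~\ref{thm:oddcycle}, applied with the odd parameter $k' = k-1 \ge 3$. First I check that the parameters line up. For $k'=k-1$, Theorem~\ref{thm:oddcycle} concerns graphs with $m = n^{1+2/(k'-1)} = n^{1+2/(k-2)} = n^{k/(k-2)}$ edges, which is exactly the density in the statement. It rules out combinatorial algorithms running in time $O(m^{2k'/(k'+1)-\epsilon}) = O(m^{(2k-2)/k-\epsilon})$, which is exactly the claimed bound. So it suffices to give a combinatorial reduction from $(k-1)$-cycle detection to $k$-cycle detection that runs in linear time and preserves $n$ and $m$ up to constant factors.

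\textbf{Step 1 (layering).} I would work with $k'$-circle-layered instances. In these, $V = V_1 \cupdot \cdots \cupdot V_{k'}$ and every edge goes from some $V_i$ to $V_{i+1 \bmod k'}$. The hard instances produced by the $k'$-clique reduction of \cite{lincoln_et_al:LIPIcs.ITCS.2020.11} already have this form. If one only wants to use Theorem~\ref{thm:oddcycle} as a black box, a general instance can be made layered by color-coding. Assign each vertex a color in $[k']$ and keep only the edges from color $i$ to color $i+1$. A fixed $k'$-cycle survives with constant probability, and the construction can be derandomized with $O(\log n)$ overhead using perfect hash families, which are combinatorial. Either way, the instance size and density change by at most constant (or polylogarithmic) factors, and this does not affect an $m^{\cdot-\epsilon}$ lower bound.

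\textbf{Step 2 (lengthening the cycle by one).} Given a layered instance, split layer $V_1$. For each $v \in V_1$, create two copies $v^{\mathrm{in}}$ and $v^{\mathrm{out}}$ and add the edge $(v^{\mathrm{in}}, v^{\mathrm{out}})$. Redirect every edge from $V_{k'}$ into $v$ so that it enters $v^{\mathrm{in}}$, and every edge from $v$ into $V_2$ so that it leaves $v^{\mathrm{out}}$. The new graph is $k$-circle-layered with $n' \le 2n$ vertices and $m' = m + |V_1| \le 2m$ edges, so the density relation $m' = \Theta(n'^{k/(k-2)})$ still holds.

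\textbf{Step 3 (correctness).} In a circle-layered graph, every cycle of length exactly $k$ visits each layer exactly once. Hence the $k$-cycles of the new graph are in bijection with the $(k-1)$-cycles of the old graph: contract the edge $(v^{\mathrm{in}}, v^{\mathrm{out}})$ in one direction, and insert it in the other. Distinctness of vertices is automatic because the layers are disjoint.

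\textbf{Conclusion.} Suppose a combinatorial algorithm detected even $k$-cycles in time $O(m'^{(2k-2)/k-\epsilon})$. Composing it with the linear-time reduction would detect $(k-1)$-cycles in time $O(m^{2k'/(k'+1)-\epsilon})$, contradicting Theorem~\ref{thm:oddcycle}. Therefore the even-cycle lower bound holds under the combinatorial $k$-clique conjecture.

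The main obstacle is Step~1. One must make sure that the layering assumption does not cost more than subpolynomial factors and that it keeps the algorithm combinatorial. I would first try to argue directly from the structure of Lincoln et al.'s construction, whose output is already layered. Only if that fails would I fall back to derandomized color-coding.
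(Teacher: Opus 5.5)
Your proposal is correct. The paper gives no proof of its own and simply imports the statement from \cite{lincoln_et_al:LIPIcs.ITCS.2020.11}, and your derivation is essentially how that corollary follows from the odd case there: with $k'=k-1$, the density $n^{1+2/(k'-1)}=n^{k/(k-2)}$ and the exponent $2k'/(k'+1)=(2k-2)/k$ of Theorem~\ref{thm:oddcycle} match the statement exactly, and your splitting of one layer lengthens circle-layered instances by one while keeping $n'=\Theta(n)$ and $m'=\Theta(m)$.

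If you use the black-box color-coding route of Step~1, two details need fixing. First, you must also iterate over the $k'!$ relabelings of the colors, as the paper does with its permutations $p$ in Lemma~\ref{lem:oddfulljoin}, so that some coloring is consecutive along the cycle. Second, the layered subgraph for a particular coloring may have far fewer than $m$ edges, so pad it with an acyclic gadget on fresh vertices to restore the density $m'=\Theta(n'^{k/(k-2)})$ at which the hypothetical algorithm is assumed to run.
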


Note that both the odd and even cycle lower bounds assume sparse graphs. This restriction is necessary to prove lower bounds for database queries, as we measure complexity in terms of the input size $O(|D|)$ rather than the domain size. 
On the upper bound side, \citet{DBLP:journals/algorithmica/AlonYZ97} gave an algorithm running for detecting a $k$-cycle with running time $O(m^{2-(1+\lceil k/2\rceil^{-1})/(k+1)})$.  Later,  \citet{FindingCycleUpperBound} improved the running time for finding even cycles in undirected graphs to $O(m^{2k/(k+2)})$. Furthermore, \citet{lincoln2018tight} provided a reduction that connects the complexity of detecting an undirected even cycle to that of detecting a directed odd cycle.  

We will now briefly review relevant techniques from these works that are useful for our proofs.
\paragraph{Color-Coding} The primary challenge in detecting $k$-cycles is ensuring that all $k$ vertices in the cycle are distinct. Color-coding is a randomized technique introduced by \citet{colorcoding} that addresses this challenge. The core idea is to assign each vertex $v \in V$ a random color from a set of $k$ colors. A \emph{colorful} $k$-cycle, one in which all $k$ vertices have distinct colors, is by definition a simple $k$-cycle.  If the graph contains a $k$-cycle, then with some probability, the vertices of that cycle will be colorful under the random color assignment.  By repeating this randomized process a sufficient number of times, any existing $k$-cycle can be found with high probability.

This randomized procedure can be derandomized using a $(n, k)$-perfect hash family~\cite{DBLP:conf/focs/NaorSS95}.
\begin{definition}[$(n, k)$-Perfect Hash Family]
A $(n, k)$-perfect hash family is a set of hash functions $H$ mapping a universe of $n$ items to a range of $k$ values such that for every subset $S$ of items with $|S| = k$, there exists at least one hash function $f \in H$ that is injective on $S$ (i.e., $f$ maps the $k$ distinct items in $S$ to $k$ distinct values).
\end{definition}
Given such a family $H$, one can iterate over each $f \in H$ to color the vertices of $G$ and then run the cycle-detection algorithm on the colored graph. The family $H$ guarantees that for every $k$-cycle in $G$, at least one hash function $f \in H$ makes the cycle colorful. A construction for $H$ exists such that $|H| = k^{O(1)}\log(n)$~\cite{DBLP:conf/focs/NaorSS95}, ensuring the derandomized algorithm remains FPT. In this paper, we treat color-coding and perfect hashing as black boxes that we can employ when needed.

\subsection{(Generalized) Online Matrix-Vector Multiplications}

Online matrix-vector multiplication (OMv) \cite{abboud2014popular} is a fundamental problem for understanding the hardness of online problems.  Formally, given an $n \times n$ Boolean matrix $M$ and a sequence of $n$-dimensional vectors $\{v_1, \cdots, v_n\}$ is given in order.  The task is to answer the matrix-vector multiplication $Mv_i$ for every $v_i$ before receiving the next vertex $v_{i+1}$.  This leads to the well-known \emph{Online Matrix-Vector Conjecture}.

\begin{conjecture}[OMv Conjecture]
    There is no algorithm for the OMv problem with a sequence of $n$ vectors that runs in time $O(n^{3-\epsilon})$ for any constant $\epsilon > 0$. 
\end{conjecture}

The conjectured hardness result implies lower bounds on a variety of dynamic problems, which also lead to fine-grained complexity results for answering conjunctive queries \cite{pods25,berkholz17:_answer,kara2020maintaining}.

\citet{generalizedOMv} generalized the problem to a tensor version, and proposed the parameterized OuMv$_k$.  Given a tensor $M \in [n]^k$, for every round $i$, a sequence of $k$ n-dimensional vectors $\{u^{(1)}, u^{(2)}, \cdots, u^{(k)}\}$ is given, the task is to answer whether $u^{(1)} \times u^{(2)} \times \cdots \times u^{(k)}$ has a non-empty intersection with $M$, before receiving the next round of vectors. 

\rwa{
\begin{conjecture}[OuMv$_k$ Conjecture]
    There is no algorithm for the OuMv$_k$ problem with $n^{\gamma}$ queries with linear pre-processing time and $O(n^{\gamma+k-\epsilon})$ total query time for any $\gamma, \epsilon > 0$.
\end{conjecture}
}

In this paper, we apply the generalized OuMv$_k$ conjecture to provide tighter lower bounds for maintaining CQ under updates.

\section{Maintenance Hardness when Scaling Out}
\label{sec:height}
From the examples of q-hierarchical and weak-q-hierarchical queries, we observe a pattern: when scaling out the query by adding additional relations, the hardness of maintaining the query increases.  In this section, we investigate how maintenance hardness changes as we scale out.

\subsection{Cycle Detection and Conjunctive Queries}
\label{sec:hardqueries}
We first start by drawing the connection between answering conjunctive queries with $k$ cycle detection, which identifies three classes of hardness queries, $Q_{2k-1}$, $Q_{2k}$, and $Q'_{k}$. 

\begin{lemma}
\label{lem:oddfulljoin}
    Consider the query with arbitrary output attributes $\bar{y}$:
    \[ 
        Q_{2k-1}(\bar{y}) \gets R_1(x_1), R_2(x_1, x_2), \cdots, R_{2k-2}(x_{2k-3},x_{2k-2}), R_{2k-1}(x_{2k-2}).
    \]
    If $Q_{2k-1}$ can be maintained in amortized $O(|D|^{(k-1)/k-\epsilon})$ time while supporting $O(|D|^{1-\epsilon})$-delay enumeration for any constant $\epsilon > 0$, \rwa{the combinatorial $k$-clique conjecture failed. }
\end{lemma}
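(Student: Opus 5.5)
The plan is to reduce directed odd cycle detection with cycle length $2k-1$ to maintaining $Q_{2k-1}$, and then invoke Theorem~\ref{thm:oddcycle} with the odd integer $2k-1 \ge 3$ (so $k\ge 2$). For that cycle length, Theorem~\ref{thm:oddcycle} considers graphs with $m = n^{1+2/(2k-2)} = n^{k/(k-1)}$, i.e.\ $n = m^{(k-1)/k}$. It rules out combinatorial algorithms running in time $O(m^{2(2k-1)/(2k)-\epsilon}) = O(m^{(2k-1)/k-\epsilon})$. The goal is therefore a combinatorial detection algorithm with running time $\tilde O(m^{(2k-1)/k-\epsilon})$ built from the assumed dynamic algorithm.

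\textbf{Construction.} Let $G=(V,E)$ be the input graph. First apply color-coding, derandomized with a $(n,2k-1)$-perfect hash family of size $k^{O(1)}\log n$. For each coloring $f:V\to\{0,1,\dots,2k-2\}$ we look only for colorful cycles $v_0\to v_1\to\cdots\to v_{2k-2}\to v_0$ with $f(v_i)=i$. Such a cycle is automatically simple, and every $(2k-1)$-cycle is colorful with this pattern under some $f$, up to rotation. For a fixed $f$:
\begin{enumerate}
\item Starting from the empty database, insert for each $2\le i\le 2k-2$ into $R_i(x_{i-1},x_i)$ all edges $(u,w)\in E$ with $f(u)=i-1$ and $f(w)=i$. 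This takes at most $m$ insertions.
\item For each vertex $v$ with $f(v)=0$:
\begin{enumerate}
\item insert into $R_1(x_1)$ its out-neighbours of color $1$;
\item insert into $R_{2k-1}(x_{2k-2})$ its in-neighbours of color $2k-2$;
\item start an enumeration of $Q_{2k-1}(D)$ and stop it as soon as the first answer, or the end-of-output signal, appears;
\item delete the tuples inserted in (a) and (b).
\end{enumerate}
\end{enumerate}
The query result is nonempty exactly when there is a colorful path $v\to x_1\to\cdots\to x_{2k-2}\to v$. This holds regardless of the output attributes $\bar{y}$, including $\bar{y}=\emptyset$, because emptiness of the result does not depend on the projection. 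Each such test costs one enumeration delay.

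\textbf{Accounting.} At all times $|D| = O(m)$. The total number of updates is $O(m + \sum_v \deg(v)) = O(m)$. Since the amortized bound holds over the whole update sequence starting from the empty database, total update time is $O(m\cdot m^{(k-1)/k-\epsilon}) = O(m^{(2k-1)/k-\epsilon})$. The enumeration probes cost at most $n\cdot O(m^{1-\epsilon}) = O(m^{(k-1)/k}\cdot m^{1-\epsilon}) = O(m^{(2k-1)/k-\epsilon})$. Multiplying by the $O(\log n)$ hash functions gives $\tilde O(m^{(2k-1)/k-\epsilon})$ overall. The reduction uses only combinatorial steps, so a combinatorial dynamic algorithm yields a combinatorial detector. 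This contradicts Theorem~\ref{thm:oddcycle} and hence the combinatorial $k$-clique conjecture.

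\textbf{Main obstacle.} The delicate point is balancing the per-vertex enumeration probe against the update budget. This only works because the sparsity regime of Theorem~\ref{thm:oddcycle} makes $n = m^{(k-1)/k}$ exactly, so the $n$ probes of delay $m^{1-\epsilon}$ fit within the target bound. A secondary point to check is that the amortized update guarantee still applies when $|D|$ fluctuates between $\Theta(m)$ and $\Theta(m)+\deg(v)$. This is harmless since $|D| = \Theta(m)$ throughout, up to constants.
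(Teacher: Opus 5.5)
Your reduction is the same as the paper's: color-coding with a perfect hash family, loading the $2k-3$ binary relations once, then for each anchor vertex $v$ inserting its out- and in-neighbourhoods into $R_1$ and $R_{2k-1}$, probing with one enumeration delay, and deleting. Your accounting with $n=m^{(k-1)/k}$ also matches. There is, however, one false step.

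You claim that every $(2k-1)$-cycle is colorful \emph{with the pattern $f(v_i)=i$, up to rotation}, under some $f$ in an $(n,2k-1)$-perfect hash family. A perfect hash family only guarantees that some $f$ is injective on the cycle's vertex set. The colors can then appear along the cycle in any of the $(2k-2)!$ cyclic orders. Your procedure only looks for edges from color $i-1$ to color $i$ (and from color $2k-2$ back to color $0$), so it misses the cycle unless that cyclic order happens to be $0,1,\dots,2k-2$. As written, your detector can answer ``no'' on a graph that does contain a $(2k-1)$-cycle.

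The repair is exactly what the paper does. For each $f\in H$, also iterate over all permutations $p$ of the $2k-1$ colors. Build $R_i$ from the edges with $f(u)=p(i-1)$ and $f(w)=p(i)$, and use as anchors the vertices $v$ with $f(v)=p(2k-1)$. Equivalently, replace $H$ by $\{\sigma\circ f : f\in H,\ \sigma\in S_{2k-1}\}$. This multiplies the running time only by $(2k-1)!=O(1)$, so your bound $\tilde O(m^{(2k-1)/k-\epsilon})$ and the contradiction with Theorem~\ref{thm:oddcycle} go through unchanged.
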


\begin{proof}
    Given a graph $G = (V, E)$ with $m$ edges and $n$ vertices with $n = m^{(k-1)/k}$, we first find an $(n, 2k-1)$-perfect hash family $H$, and for every $f \in H$, we color every vertex using the hash function $f$ and thus partition the graph into $2k-1$ disjoint subgraphs as $V = V^{1} \cupdot V^{2} \cupdot \cdots \cupdot V^{2k-1}$, where $V^{x} \gets \{v \in V|f(v) = x\}$.  Let $P$ be the set of all $(2k-1, 2k-1)$-permutations, and let $p$ be any of such permutation, for any $f, p$, we create the update sequence as follows: 
    \begin{enumerate}[leftmargin=*]
    \item We assign $R_i = \{(u, v) \in E|f(u)=p(i-1), f(v) = p(i)\}$  and we insert all such edges into $R_i$ for all $i \in [2, 2k-2]$\;
    \item For every $v \in V$ such that $f(v) = p(2k-1)$, we insert all $\{(u)|(v, u) \in E, f(u) = p(1)\}$ into $R_1$, and all $\{(u)|(u, v) \in E, f(u) = p(2k-2)\}$ into $R_{2k-1}$.  
    \item We try to enumerate $Q_{2k-1}$ on the current database instance.  If there exists any results, then there exists a $2k-2$ path $x_1, x_2, \cdots, x_{2k-2}$, because the color coding guarantee that for any pair of $x_i, x_j$, $x_i \neq x_j$.  In addition, we know that there exist two edges $(v, x_1)$ and $(x_{2k-2}, v)$ with $v$ having a different color from any $x_i$.  Therefore, $(v, x_1, \cdots, x_{2k-2})$ form a length $2k-1$-cycle, and we can terminate by reporting $2k-1$ cycle detected.
    \item Otherwise, we delete all tuples from $R_1$ and $R_{2k-1}$, and repeat (2) - (4) until no further $v$ exists.  If that is the case, we can conclude a $2k-1$ cycle does not exist for the given $f, p$.
    \end{enumerate}
    The above update sequence inserts every edge into the database once, and deletes every edge from the database at most once, making a total of $O(m)$ updates; therefore, if such a query can be maintained in $O(|D|^{(k-1)/k-\epsilon})$ time, maintaining the entire update sequence takes a total running time of $O(m^{(k-1)/k-\epsilon}\times m) = O(m^{(2k-1)/k-\epsilon})$ time.  By setting $k' = 2k-1$, $O(m^{(2k-1)/k-\epsilon})=O(m^{2k'/(k'+1)-\epsilon})$.  Meanwhile, a total of $O(n)$ enumeration operations will be issued during the above procedures. If the enumeration delay is $O(m^{1-\epsilon})$, then for the enumeration procedure, we would need to spend $O(m^{1-\epsilon}\times m^{(k-1)/k}) = O(m^{2k'/(k'+1)-\epsilon})$ time, summing up the two parts resulting in a total running time of $O(m^{2k'/(k'+1)-\epsilon})$ for the above procedures.

    Note that we also need to verify all possible color coding and permutations to ensure we don't miss any cycles in the graph.  The total running time for maintaining all $f \in H$ and $p \in P$ takes 
    \[O(\underbrace{\log m}_{\textrm{size of }H}\times \underbrace{(k')!}_{\textrm{Number of permutations}}\times \underbrace{m^{2k'/(k'+1)-\epsilon}}_{\textrm{Maintenance cost for each }p, f}) = \underbrace{O(m^{2k'/(k'+1)-\epsilon'})}_{k' = O(1),\log m \text{ suppressed in } \epsilon'}.\]

    Once finished, we can determine if the given graph $G$ contains a $k' = 2k-1$ cycle.  \rwa{Since the above procedure is combinatorial, it provides a combinatorial algorithm for solving the odd cycle detection problem.  However, Theorem~\ref{thm:oddcycle} suggests that there is no combinatorial algorithm that can detect such a cycle in time $O(m^{2k'/(k'+1)-\epsilon})$ unless the combinatorial $k$-clique conjecture failed.  Thus, the existence of such a maintenance method also suggests the conjecture failed. }
\end{proof}

\begin{lemma}
    Consider the query with arbitrary output attributes $\bar{y}$:
    \[
        Q_{2k}(\bar{y}) \gets R_1(x_1), R_2(x_1, x_2), \cdots, R_{2k-1}(x_{2k-2}, x_{2k-1}), R_{2k}(x_{2k-1}).
    \]
    If $Q_{2k}$ can be maintained in amortized $O(|D|^{(k-1)/k-\epsilon})$ time while supporting $O(|D|^{1-\epsilon})$-delay enumeration for any constant $\epsilon > 0$, \rwa{the combinatorial $k$-clique conjecture failed. }
\end{lemma}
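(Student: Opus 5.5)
The plan is to repeat the reduction in the proof of Lemma~\ref{lem:oddfulljoin} almost verbatim. The only change is that we reduce from \emph{even} cycle detection, Theorem~\ref{thm:evencycle}, with $k' = 2k$. First check the parameters. Theorem~\ref{thm:evencycle} takes $m = n^{k'/(k'-2)} = n^{k/(k-1)}$, i.e.\ $n = m^{(k-1)/k}$, which is the same relation between $n$ and $m$ as in the odd case. Its forbidden running time is $O(m^{(2k'-2)/k'-\epsilon}) = O(m^{(2k-1)/k-\epsilon})$, which is again the same target exponent as in the odd case.

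Given $G=(V,E)$, fix a $(n,2k)$-perfect hash family $H$ and a permutation $p$ of $[2k]$, and colour the vertices by $f\in H$.
\begin{enumerate}
\item For every $i\in[2,2k-1]$, insert into $R_i$ all edges $(u,v)\in E$ with $f(u)=p(i-1)$ and $f(v)=p(i)$. This gives $2k-2$ edge relations, so a result of $Q_{2k}$ corresponds to a colourful directed path $x_1\to\cdots\to x_{2k-1}$ on $2k-1$ distinct vertices.
\item Then iterate over each vertex $v$ with $f(v)=p(2k)$. Insert into $R_1$ every $u$ with $(v,u)\in E$ and $f(u)=p(1)$, and insert into $R_{2k}$ every $u$ with $(u,v)\in E$ and $f(u)=p(2k-1)$.
\item Run the enumeration and stop at the first output (or at termination).
\item If an output appears, then $(v,x_1,\dots,x_{2k-1})$ is a cycle of length $2k$. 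Its vertices are distinct because $v$ carries the unique colour $p(2k)$ and all path vertices carry distinct colours. Report it.
\item Otherwise delete the tuples just inserted into $R_1$ and $R_{2k}$, and move to the next $v$.
\end{enumerate}
Conversely, any $2k$-cycle is made colourful by some $f\in H$, and is traversed in colour order by some $p$ once we start at its vertex of colour $p(2k)$. So the procedure is complete.

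For the cost accounting, each edge is inserted into an $R_i$ at most once ($2\le i\le 2k-1$). Each edge is also inserted into and deleted from $R_1$ or $R_{2k}$ at most once, namely when its endpoint of colour $p(2k)$ is processed. That gives $O(m)$ updates, which cost $O(m\cdot m^{(k-1)/k-\epsilon}) = O(m^{(2k-1)/k-\epsilon})$ in total. There are at most $n=m^{(k-1)/k}$ enumeration calls. Each is aborted after its first output, so each costs one delay $O(m^{1-\epsilon})$, for $O(m^{(2k-1)/k-\epsilon})$ overall. Summing over $|H| = k^{O(1)}\log n$ hash functions and $(2k)!$ permutations only adds constant and logarithmic factors, which are absorbed into a slightly smaller $\epsilon'$. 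The whole procedure is combinatorial, so it would contradict Theorem~\ref{thm:evencycle} and hence the combinatorial $k$-clique conjecture.

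The only delicate points are bookkeeping rather than conceptual. First, the indices must be right: $2k-2$ binary relations plus the pivot $v$ give exactly $2k$ cycle vertices. Second, the sparsity regime of Theorem~\ref{thm:evencycle} must give exactly $n=m^{(k-1)/k}$, so that the $n$ enumeration probes stay within budget. Third, $|D|=O(m)$ throughout, so that the update-time assumption translates into cost per edge. Since $\bar{y}$ is arbitrary, we also only need that $Q_{2k}(D)$ is nonempty iff the full join is nonempty, which holds for any projection.
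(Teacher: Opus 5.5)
Your proposal is correct and is the paper's argument: reuse the update sequence from the proof of Lemma~\ref{lem:oddfulljoin} with $k'=2k$, observe that Theorem~\ref{thm:evencycle} gives the same regime $n=m^{(k-1)/k}$ and target exponent $(2k'-2)/k'=(2k-1)/k$, and charge $O(m)$ updates and at most $n$ enumeration calls against that budget. Your write-up is more explicit than the paper's one-line proof (index bookkeeping, aborting each enumeration after one delay, and the remark that any projection $\bar{y}$ preserves non-emptiness), but the route is identical.
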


\begin{proof}
    Given a graph $G = (V, E)$ with $m$ edges, $n = m^{(k-1)/k}$ and $k' = 2k$, we construct the same update sequence as for the odd cycle case.  Now the total update time would be $O(m^{(2k-1)/k-\epsilon})= O(m^{(2k'-2)/k'-\epsilon})$ and the enumeration time would also be $O(m^{(2k-1)/k-\epsilon})= O(m^{(2k'-2)/k'-\epsilon})$ for any color coding $f \in H$ and permutation $p$, which would return a \rwa{combinatorial algorithm for detecting even cycle in time $O(m^{(2k'-2)/k'-\epsilon})$.  Combining with Theorem~\ref{thm:evencycle}, the existence of the maintenance algorithm suggests the combinatorial $k$-clique conjecture failed.}
\end{proof}

\begin{lemma}
    Consider the query 
    \[
        Q_{k}(x_k) \gets R_1(x_1), R_2(x_1, x_2), \cdots, R_k(x_{k-1}, x_{k}).
    \]
    If $Q_{k}$ can be maintained in amortized $O(|D|^{(k-1)/k-\epsilon})$ time while supporting $O(|D|^{1/k-\epsilon})$ delay enumeration for any constant $\epsilon > 0$, \rwa{the combinatorial $k$-clique conjecture failed.}
\end{lemma}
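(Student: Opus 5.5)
The plan is to reuse the color-coding reduction from the proof of Lemma~\ref{lem:oddfulljoin}, targeting \emph{odd} cycles of length $k'=2k-1$ in a graph with $m$ edges and $n=m^{(k-1)/k}$ vertices. Since $2k'/(k'+1)=(2k-1)/k$, Theorem~\ref{thm:oddcycle} applies in exactly this density regime. The difficulty is that $Q_k$ exposes only the endpoint $x_k$ of a path of $k-1$ edges hanging off a unary relation. So I split a colorful $(2k-1)$-cycle $v\to x_1\to\cdots\to x_{k-1}\to x_k\to z_{k-2}\to\cdots\to z_1\to v$ into two halves. Each half is recovered as an \emph{endpoint set} by a separate copy of the maintenance structure for $Q_k$, and the two sets are then intersected.

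Fix $f\in H$ and a permutation $p$ of the $2k-1$ colors; the cycle positions $v,x_1,\dots,x_k,z_{k-2},\dots,z_1$ receive distinct prescribed colors. I use two instances of $Q_k$.
\begin{enumerate}[leftmargin=*]
\item \textbf{Forward instance.} For $i\in[2,k]$, relation $R_i$ holds the edges $(x_{i-1},x_i)$ whose endpoint colors match the prescribed colors of positions $x_{i-1},x_i$.
\item \textbf{Backward instance.} Relation $R_2$ holds the reversed edges from $v$'s color class into $z_1$'s class. The relations $R_3,\dots,R_k$ hold reversed edges along $z_1,z_2,\dots,z_{k-2}$ and finally into $x_k$'s color class. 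This gives $k-1$ reversed edges from $v$ to $x_k$.
\item \textbf{Per-vertex phase.} For each $v$ of $v$'s color, insert into the forward $R_1$ the out-neighbors of $v$ of $x_1$'s color, and insert $\{v\}$ into the backward $R_1$. Enumerating the forward instance yields $A_v$, the colorful endpoints reachable from $v$ in $k$ steps. Enumerating the backward instance yields $B_v$, the endpoints from which $v$ is reachable in $k-1$ steps. Because of the coloring, every $x\in A_v\cap B_v$ witnesses a simple $(2k-1)$-cycle through $v$.
\item \textbf{Reset.} Delete the per-vertex $R_1$ tuples before moving to the next $v$.
\end{enumerate}
Intersecting $A_v$ and $B_v$ with a hash set costs $O(n)$, since both sets have size at most $n$.

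The accounting is as follows. Each edge is inserted once into the path relations, and each edge is inserted and deleted at most once in $R_1$, so there are $O(m)$ updates in total. These cost $O(m\cdot m^{(k-1)/k-\epsilon})=O(m^{(2k-1)/k-\epsilon})$. For each of the $n$ vertices $v$, each enumeration outputs at most $n$ answers with delay $O(m^{1/k-\epsilon})$. Summed over all $v$, enumeration costs $O(n\cdot n\cdot m^{1/k-\epsilon})=O(m^{2(k-1)/k+1/k-\epsilon})=O(m^{(2k-1)/k-\epsilon})$. The intersections add $O(n^2)=O(m^{2(k-1)/k})$, which is lower order. Iterating over all $f\in H$ and all $(2k-1)!$ permutations multiplies the cost by $O(\log m)$, which is absorbed into $\epsilon$. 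The result is a combinatorial $(2k-1)$-cycle detector running in $O(m^{2k'/(k'+1)-\epsilon'})$ time, contradicting Theorem~\ref{thm:oddcycle}.

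I expect the main obstacle to be the delay bound being only $m^{1/k}$ rather than $m^{1-\epsilon}$. The argument above depends on the output of $Q_k$ being bounded by $n$, which holds because the only output attribute is the single vertex $x_k$. It also depends on issuing just one enumeration per instance per start vertex $v$; the $n^2$ product is precisely what the $1/k$ delay exponent is calibrated against. I also need to check that the database size $|D|=\Theta(m)$ is maintained throughout, so that $|D|$-based bounds translate into $m$-based ones. I further need to check that the case $k=2$, a $3$-cycle using $R_1(x_1),R_2(x_1,x_2)$, fits the same scheme.
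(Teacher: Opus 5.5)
Your proposal is correct and takes essentially the same route as the paper. The paper also runs two copies of $Q_k$ encoding the forward and backward halves of a colorful $(2k-1)$-cycle, drives them with the per-vertex update rounds of Lemma~\ref{lem:oddfulljoin}, intersects the two endpoint sets after each round, and bounds enumeration by $n\cdot n\cdot m^{1/k-\epsilon}=m^{(2k-1)/k-\epsilon}$. Your choice to insert $\{v\}$ itself into the backward unary relation, so that the backward half uses $k-1$ reversed edges, is a cleaner way to make the second copy a genuine instance of $Q_k$. The paper's $Q_k'$ as written does not literally have that shape, since the back half of the $(2k-1)$-path has only $k-1$ relations.
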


\begin{proof}
    Given a graph $G = (V, E)$ with $m$ edges, we maintain two $Q_{k}(x_k)$ queries as 
    \[
        \begin{aligned}
        Q_{k}(x_k) &\gets R_1(x_1), R_2(x_1, x_2), \cdots, R_k(x_{k-1}, x_k)\\
        Q_{k}'(x_k)&\gets R_{2k-1}(x_{2k-2}), R_{2k-2}(x_{2k-3}, x_{2k-2}), \cdots, R_k(x_{k-1}, x_k).
        \end{aligned}
    \]
    The update sequence is the same as the proof for Lemma~\ref{lem:oddfulljoin}; the only difference is in the enumeration procedure.  Once we finish every round of updates, we enumerate both $Q_{k}$ and $Q_{k}'$ and check if $Q_{k} \cap Q_{k}'$ is empty.  If it is not empty, then a $k'=2k-1$ cycle is detected.  The output size is at most $n = m^{(k-1)/k}$.  Therefore, the delay should be $O(|D|^{1/k-\epsilon})$ for some $\epsilon > 0$ to guarantee strictly smaller overhead than re-evaluation.  When the amortized update time is $O(|D|^{(k-1)/k-\epsilon})$, the total running time becomes $O(m^{(2k-1)/k-\epsilon}) = O(m^{2k'/(k'+1)-\epsilon})$, \rwa{and the above procedure provides a combinatorial algorithm for odd $k'$ cycle detection problem .  Combining with Theorem~\ref{thm:oddcycle}, the existence of the maintenance algorithm suggests the combinatorial $k$-clique conjecture failed.}
\end{proof}

\subsection{Chordless Path and Heights} 
The queries presented in Section~\ref{sec:hardqueries} suggest that the hardness increases when more relations are involved.  To quantify the influence, we introduce a structure parameter on the relational hypergraph, \emph{height}, for any CQ $Q$, based on the chordless path on the relation hypergraph.

\rwb{Given an undirected graph $G = (V, E)$, a \emph{chordless path}~\cite{HAAS2006360} on $G$ is a sequence of distinct vertices $P = \{v_1, \cdots, v_{n+1}\}$ and an edge sequence $P_e = \{e_1, \cdots, e_{n}\}$, such that $e_i = (v_i, v_{i+1})$ for every $i \in [1, n]$, and the subgraph $G_P \gets (V_p = P, E_p = \{(v_i, v_j) \in E, v_i, v_j \in P\})$ induced by $P$ is the path itself. In other words, for any $i, j \in [1, n+1]$ with $|i-j|>1$, $(v_i, v_j) \notin E$.   

While the path length in graphs is naturally defined by the number of edges ($n$), generalizing this to hypergraphs requires caution because a single hyperedge may contain an arbitrary number of vertices, and given the sequence $P$, the edge sequence $P_e$ may not be unique. We propose the following generalization:}

\begin{definition}[Chordless Path on Hypergraph]
Given a hypergraph $\H = (\V, \E)$, we define the chordless path on the hypergraph $\H$ to be a sequence of distinct vertices $P$, and an \emph{edge sequence} $P_e$ of the chordless path $P$ to be a sequence of hyperedges.  A chordless path $P$ is
\begin{itemize}[leftmargin=*]
    \item a length $1$ chordless path if $P \gets \{\emptyset\}$, and $P_e \gets \{e_1\}$ as an edge sequence of $P$ with an arbitrary hyperedge $e_1$\;
    \item a length $2$ chordless path if $P \gets \{v_1\}$ and there exists two hyperedges $e_1, e_2$ with $v_1 \in e_1 \cap e_2$.  All such pair of hyperedges form an edge sequence $P_e \gets \{e_1, e_2\}$ of $P$\;
    \item a length $k$ chordless path for $k \ge 3$ if $P \gets \{v_1, \cdots, v_{k-1}\}$ such that
    \begin{enumerate}[leftmargin=*]
    \item \textbf{(Adjacency)} For every $i \in [1, k-2]$, there exists a hyperedge $e_{i+1} \in E$ with both $v_i, v_{i+1} \in e$;
    \item \textbf{(No shortcuts)} For any $i, j \in [1, k-1]$ with $|i-j| > 1$, there is no $e \in E$ with both $v_i,v_j \in e$;
    \item \textbf{(Endpoints)} there exist $e_1, e_k \in E$, such that \rwb{$v_1 \in e_1$ and $v_2 \notin e_1$, and $v_{k-1} \in e_k$ and $v_{k-2} \notin e_k$}.
    \end{enumerate}
    Any $P_e \gets \{e_1, \cdots, e_k\}$ that satisfies the above conditions is an edge sequence of $P$.  Note that all hyperedges in an edge sequence are also distinct due to the ``No shortcuts" requirement, and such an edge sequence may not be unique for a given hypergraph.
\end{itemize}
A chordless path $P$ is \emph{maximum} if there is no strictly longer chordless path $P'$ with $P \subsetneq P'$.  \rwb{This definition aligns with the standard graph definition when all hyperedges are standard edges, and the sequence $P$ is defined as the path's internal vertices $v_2, ..., v_n$. The `Adjacency' and `No shortcuts' requirements are both naturally generalized from the standard graph. However, only one vertex is kept for the two `endpoint' hyperedges due to the possible existence of unary hyperedges.  }

In addition, for non-full and non-Boolean queries, we further define the concept of a q-chordless path \rwb{as a chordless path with only non-output attributes in $P$, while one `endpoint' hyperedge contains output attributes. This generalizes the notions of q-cores and hook cores from \cite{berkholz17:_answer,pods25}.} The length $1$ chordless path is also a q-chordless path, with the edge sequence $\{e_1\}$ satisfying $e_1 \cap \bar{y} \neq \emptyset$ and $e_1 \setminus \bar{y} \neq \emptyset$, i.e., $e_1$ that contains both output and non-output attributes.  A length $k$ ($k \ge 2$) q-chordless path is a chordless path $P \gets \{v_1, \cdots, v_{k-1}\}$ on $\H$, such that
\begin{enumerate}[leftmargin=*]\setcounter{enumi}{4}
    \item \textbf{(Non-output Vertices)} for all $x_i \in P$, $x_i \notin \bar{y}$, i.e. all attributes in $P$ are not output attributes;
    \item \textbf{(Output Endpoint)} there exists a hyperedge $R(\bar{x})$ with $x_1 = \bar{x} \cap P$, and $\bar{x} \cap \bar{y} \neq \emptyset$, i.e., $\bar{x}$ contains both output attributes and an endpoint $x_1$ of $P$.  The edge sequence of a q-chordless path must contain one of such hyperedges $R$.
\end{enumerate}
\end{definition}

\begin{example}
\label{ex:3}
    Consider $Q_1$ from Example~\ref{ex:1}.  We can find a chordless path $P \gets \{x_2, x_3, x_4, x_7, x_8\}$ of length $6$ on its relational hypergraph (which is marked in red in Figure~\ref{fig:hypergraph}) with $P_e \gets \{R_1, R_2, R_3, R_4, R_5, R_6\}$.  Note that $x_1$ cannot serve as one endpoint because there is no relation that contains only $x_1$ but no other $x \in P$.  Additionally, we cannot include $x_6$ and $x_5$ in the chordless path between $x_3$ and $x_4$, as $R_3$ can serve as a shortcut.

    Furthermore, we can find a length $3$ q-chordless path $P \gets \{x_7, x_8\}$ with edge sequence $P_3 \gets \{R_4, R_5, R_6\}$, such that $R_4$ is the output endpoint with output attribute $x_4$.
\end{example}

With the chordless path, we can now define the height of a query:
\begin{definition}
    Given a conjunctive query $Q$, let $\H = (\V, \E)$ be $Q$'s relational hypergraph, query $Q$ is a height $k$ query if $\H$ contains no chordless path with length greater than $2k$, $\H$ contains no q-chordless path with length greater than $k$, and 
    \begin{itemize}[leftmargin=*]
        \item $\H$ contains a length $2k$ or length $2k-1$ chordless path; and/or 
        \item $\H$ contains a length $k$ q-chordless path.
    \end{itemize}
\end{definition}

\rwb{
\begin{lemma}
\label{lem:height-hierarchical}
A conjunctive query $Q$ is height $1$ if and only if it is q-hierarchical.
\end{lemma}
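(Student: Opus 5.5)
The plan is to prove both directions by working directly with the two conditions defining q-hierarchical queries. Each violation of those conditions should correspond to a short chordless or q-chordless path. Recall that height $1$ means three things. First, $\H$ has no chordless path of length greater than $2$. Second, $\H$ has no q-chordless path of length greater than $1$. Third, $\H$ contains a chordless path of length $1$ or $2$ (or a q-chordless path of length $1$). The third requirement is automatic for any query with at least one relation, because any single hyperedge $e_1$ gives a length-$1$ chordless path $P=\{\emptyset\}$ with $P_e=\{e_1\}$. So the whole statement reduces to two claims: $Q$ is q-hierarchical if and only if there is no chordless path of length $\ge 3$ and no q-chordless path of length $\ge 2$.

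\emph{($\Leftarrow$, contrapositive of ``q-hierarchical $\Rightarrow$ height $1$'').} Suppose $\H$ has a chordless path $P=\{v_1,\dots,v_{k-1}\}$ with $k\ge 3$ and edge sequence $e_1,\dots,e_k$.
\begin{itemize}
\item By (Endpoints), $e_1$ contains $v_1$ but not $v_2$.
\item By (Adjacency), $e_2$ contains both $v_1$ and $v_2$.
\item There is also an edge containing $v_2$ but not $v_1$. If $k=3$ this is $e_k$, again by (Endpoints). If $k>3$ it is $e_3\ni v_2,v_3$, which cannot contain $v_1$ by (No shortcuts).
\end{itemize}
Hence $\E[v_1]\cap\E[v_2]\neq\emptyset$ and the two sets are incomparable, which violates condition (1).

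Now suppose instead that there is a q-chordless path of length $k\ge 3$. Its underlying chordless path then already has length $\ge 3$, so the previous case applies. The remaining case is a length-$2$ q-chordless path $P=\{v_1\}$ with $v_1\notin\bar y$. Here we have an output-endpoint edge $R\ni v_1$ with some output attribute $y\in R$, and a second edge $e'\ni v_1$. We want $y\notin e'$; then $\E[y]\cap\E[v_1]\ni R$ and $\E[v_1]\not\subseteq\E[y]$. Condition (1) would then force $\E[y]\subsetneq\E[v_1]$, and condition (2) would force $v_1\in\bar y$, a contradiction.

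\emph{($\Rightarrow$, contrapositive of ``height $1$ $\Rightarrow$ q-hierarchical'').} Suppose $Q$ is not q-hierarchical, and split into cases according to which condition fails.
\begin{itemize}
\item If condition (1) fails, there are $x_1,x_2$ with $\E[x_1],\E[x_2]$ intersecting but incomparable. Pick $e_1\in\E[x_1]\setminus\E[x_2]$, $e_2\in\E[x_1]\cap\E[x_2]$ and $e_3\in\E[x_2]\setminus\E[x_1]$. Then $P=\{x_1,x_2\}$ with $P_e=\{e_1,e_2,e_3\}$ satisfies Adjacency and Endpoints, and No shortcuts is vacuous. This is a chordless path of length $3>2$.
\item If condition (1) holds but (2) fails, there are $x_1\in\bar y$ and $x_2\notin\bar y$ with $\E[x_1]\subsetneq\E[x_2]$. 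Choose $R\in\E[x_1]$; it contains $x_2$ and the output $x_1$. Choose $e'\in\E[x_2]\setminus\E[x_1]$. Then $P=\{x_2\}$ with $P_e=\{R,e'\}$ is a length-$2$ q-chordless path: $x_2$ is non-output and $R$ is an output endpoint.
\end{itemize}
Either way $Q$ is not height $1$.

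The main obstacle I expect is the length-$2$ q-chordless case in the ($\Leftarrow$) direction. As literally stated, the definition requires only two edges through $v_1$, one of which contains an output attribute. It does not explicitly require that the other edge $e'$ miss the outputs of $R$. If $\E[v_1]\subseteq\E[y]$ for every output $y$ of $R$, the path might exist while $Q$ is still q-hierarchical. I would resolve this by reading the definition in line with the hook cores of \cite{berkholz17:_answer,pods25} that it generalizes: the non-output-endpoint edge must contain $v_1$ but not the output attributes of $R$. This mirrors the Endpoints condition, where $e_k$ excludes the neighbouring path vertex. With that reading the argument above goes through. I would state this reading explicitly and check that the ($\Rightarrow$) construction respects it, which it does, since $x_1\notin e'$.
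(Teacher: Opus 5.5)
Your argument is correct and takes the same route as the paper's proof. A failure of condition (1) is matched with a length-$3$ chordless path $\{x_1,x_2\}$, and a failure of condition (2) with a length-$2$ q-chordless path $\{x_2\}$. The paper only asserts that a q-chordless path of length $\ge 2$ in a q-hierarchical query yields a forbidden pair. The extra requirement you flag is genuinely needed for that step and is exactly what it silently uses: for instance, $Q(y)\gets R_1(y,v,w),R_2(y,v)$ is q-hierarchical, yet literally has such a path. One fix to your wording: state the repair as ``$e'$ omits \emph{some} output attribute of $R$'', which is what both of your directions actually use, rather than ``$e'$ contains no output of $R$''. Under the latter reading, $Q(x_1,z)\gets R(x_1,x_2,z),S(x_2,z)$ would have height $1$ without being q-hierarchical.
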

}

When limiting the query to free-connex, we can relate the query height to the free-connex join tree height.  The key idea is that removing all ear nodes from $Q$ produces a residual query $Q'$ whose height drops by one.  Moreover, a free-connex join tree for $Q'$ can be lifted to a free-connex join tree for $Q$ by re-attaching the removed ears and increasing the tree height by one. Since a q-hierarchical (height $1$) query admits a height $1$ free-connex join tree, an induction on the height shows that every height $k$ query admits a height $k$ free-connex join tree.  

On the other hand, every chordless path in the relational hypergraph of $Q$ must be realized within any free-connex join tree, which rules out the existence of a height $k-1$ free-connex join tree for a height $k$ query. Putting the two directions together yields the following lemma with the detail proof given in Appendix~\ref{app:heights}.

\begin{lemma}
\label{thm:tree-height}
    Given a height $k$ free-connex query $Q$, $Q$ admits a free-connex join tree of height $k$, but not of height $k-1$.
\end{lemma}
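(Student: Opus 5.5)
We prove the two directions separately: existence of a height $k$ free-connex join tree by induction on $k$ via the ear reduction (Algorithm~\ref{alg:trim}), and non-existence of a height $k-1$ tree by showing that any long chordless or q-chordless path forces a long chain of input relations in every free-connex join tree.

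\textbf{Upper bound (existence).} We induct on $k$. The base case $k=1$ combines Lemma~\ref{lem:height-hierarchical} with the cited result of~\cite{pods25}: a height $1$ query is q-hierarchical, and q-hierarchical queries admit height $1$ free-connex join trees. For $k\ge 2$ we compute the skeleton $\E_*$ and residual query $Q_*$ with Algorithm~\ref{alg:trim}; by Lemma~\ref{lem:2.9}, which skeleton we pick does not matter. The key claim is that $Q_*$ is free-connex of height at most $k-1$. Take any maximum chordless path in $Q_*$ of length $\ell$. Its two endpoint hyperedges $e_1,e_\ell$ can each be extended outward by an ear of $Q$: an ear attached at that end supplies a new endpoint vertex and a new endpoint hyperedge on each side. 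This gives a chordless path of length $\ell+2$ in $Q$, so $\ell+2\le 2k$ and $\ell\le 2(k-1)$. Similarly, a q-chordless path of length $\ell$ in $Q_*$ extends by one ear at its non-output end to length $\ell+1\le k$. By induction, $Q_*$ has a free-connex join tree $\T_*$ of height $k-1$. We then reattach every removed ear $R$ as a child of a node covering its anchor's relevant attributes. For an ear of type (ii) we insert a generalized node $[\bar{x}\setminus\V_\bullet]$, which keeps the Guard, Above and Connex properties. This raises the height by at most one, giving height $k$.

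\textbf{Lower bound.} Suppose $\T$ is a free-connex join tree of height $k-1$. Take a chordless path $P=\{v_1,\dots,v_{\ell-1}\}$ with edge sequence $e_1,\dots,e_\ell$, where $\ell\in\{2k-1,2k\}$. We argue that the input-relation nodes $e_1,\dots,e_\ell$ must lie on a tree path of $\T$ in this order. By Connect, consecutive $e_i,e_{i+1}$ are joined through nodes containing $v_i$. By No shortcuts, no single node contains both $v_i$ and $v_j$ with $|i-j|>1$, including generalized nodes, since these are subsets of input relations. Hence the subtrees $\T[v_i]$ are pairwise disjoint for non-adjacent indices and form a chain. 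Because generalized nodes only sit above input relations (Above and Guard), this chain, together with the subtree on the root side, yields a root-to-leaf path through at least $\lceil \ell/2\rceil = k$ input relations. The reason is that the path turns at most once toward the root, so one of its two halves is monotone in depth. That contradicts height $k-1$.

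For a q-chordless path of length $k$, we use Connex instead. The endpoint hyperedge containing output attributes must connect to the connex subtree, while the non-output vertices $v_i$ cannot appear in any connex node, by condition (ii) and the Guard property. So the whole path hangs monotonically below the connex set, giving $k$ input relations on a single root-leaf path.

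\textbf{Main obstacle.} The hardest step is the height-decrease claim for $Q_*$, and in particular showing that every endpoint of a maximum chordless path of $Q_*$ admits an extending ear of $Q$ while preserving No shortcuts. The first approach to try is an argument by contradiction: if no ear extends that endpoint, then the endpoint hyperedge $e_1$ has all its non-unique attributes covered by its neighbor. That would make $e_1$ itself an ear of $Q$ with an anchor in $\E_*$, so it would have been removed and could not lie in $\E_*$.
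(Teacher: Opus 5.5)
Your plan has the same shape as the paper's proof: an ear-peeling induction for existence, and realizing paths inside every tree for non-existence. However, the step you flag as the main obstacle is false, and your fallback misses the case where it fails.

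\textbf{The extension step fails.} Take the Boolean query $Q() \gets R_1(a,b), S(a,b), R_2(b,c), R_3(c,d), T(d)$.
\begin{itemize}
\item Its ears are $T$ and the pair $R_1,S$, each the only anchor of the other. Algorithm~\ref{alg:trim} therefore leaves, say, $\E_*=\{S,R_2,R_3\}$.
\item In $Q_*$ the longest chordless path is $\{b,c\}$ with edge sequence $(S,R_2,R_3)$, of length $3$.
\item At the $S$ end no ear supplies a new endpoint. The only candidate, $R_1$, contains $b$. Indeed $\{a,b,c,d\}$ is not chordless in $Q$, because no hyperedge contains $a$ without $b$.
\item So the longest chordless path of $Q$ has length $4$. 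Both $Q$ and $Q_*$ have height $2$, and $\ell+2\le 2k$ fails.
\end{itemize}
Your fallback dichotomy misses exactly this case. The non-unique attributes $\{a,b\}$ of $S$ are not covered by its path neighbour $R_2$; they are covered only by the removed twin $R_1$. Mutual ears are not essential: replacing $R_1,S$ by $E(b,a_1,a_2),F_1(b,a_1),F_2(b,a_2)$ makes $E$ a non-ear, yet every ear attached to it contains $b$.

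\textbf{The statement itself fails here.} $Q$ has height $2$ under the chordless-path definition. Its residual is the non-q-hierarchical path $S,R_2,R_3$, so $Q$ is neither q-hierarchical nor weak-q-hierarchical. By the characterization from \cite{pods25} quoted in the preliminaries, $Q$ therefore has no free-connex join tree of height $\le 2$; a short case analysis with Above, Guard and Connect confirms this directly. Rooting at $R_2$, with $S$ below $R_1$ and $T$ below $R_3$, gives height $3$.

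\textbf{The paper's proof has the same hole.} Lemma~\ref{lem:3.5} is argued by reading ``height $k$'' as ``minimum free-connex join-tree height $k$''. That is exactly what Lemma~\ref{thm:tree-height} is supposed to establish, so the argument is circular.

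\textbf{The lower-bound direction also fails for q-chordless paths, and with it your Connex argument.} Consider $Q(x)\gets A(x,v_1),B(x,v_1,v_2),C(v_2,v_3),D(v_3)$.
\begin{itemize}
\item The set $\{v_1,v_2,v_3\}$ with edge sequence $(A,B,C,D)$ is a length-$4$ q-chordless path with output endpoint $A$, so $Q$ has height $4$.
\item Take the tree with root $[x]$, child $[x,v_1]$, grandchildren $A$ and $B$, then $C$ below $B$ and $D$ below $C$. With connex set $\{[x]\}$ it is a valid free-connex join tree of height $3$.
\item The tree path from $A$ to $D$ turns at the non-connex generalized node $[x,v_1]$. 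So the path does not hang monotonically below the connex set, as your argument requires.
\end{itemize}

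\textbf{What survives.} Your chordless-path lower bound is fine once each intermediate $e_i$ is replaced by an input relation containing $\{v_{i-1},v_i\}$ that lies on the tree path. You can find one by descending from any generalized node on the path using Guard; the given $e_i$ themselves need not lie on it.

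With the definitions as stated, neither direction of the lemma holds in general. The notions of (q-)chordless path, or of height, must be amended before any proof can go through.
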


\subsection{Proof of Theorem~\ref{thm:height}}

After identifying a length $2k$ or $2k-1$ chordless path in height $k$ query, or a length $k$ q-chordless path in the non-full and non-boolean height $k$ query, we can now draw a connection between height $k$ queries and cycle detection problem using the chordless path and its induced query, and prove Theorem~\ref{thm:height} for the hardness of maintaining a height $k$ query. 

Given a height $k$ query $Q = (\V, \E, \bar{y})$, we can then simulate $Q_{2k-1}$, $Q_{2k}$, or $Q'_{k}$.  Without loss of generality, we use $Q_{2k-1}$ as an example.  Given an update sequence $S$ for $Q_{2k-1}$, we first identify a length $2k-1$ chordless path $P = \{x_1, \cdots, x_{k-2}\}$ from $Q$, which corresponds to all attributes in $Q_{2k-1}$ . For every $v \in \V$ such that $v \notin P$, we let their domain be a special value $\perp$.  We can then create an equivalent update sequence $S'$ for $Q$, such that for every $R(\bar{x}) \in \E$, we create the updates for $R$ by the following rules:
\begin{itemize}[leftmargin=*]
    \item \textbf{$\bar{x} \cap P = \emptyset$.} We insert the tuple $(\perp, \cdots, \perp)$ for $R$ at the beginning of all updates;
    \item \textbf{$\bar{x} \cap P = \{x_i\}$, where $x_i$ is not an endpoint of $P$.} We insert the tuple $(u, \perp, \cdots, \perp)$ by assigning $x_i = u$ for all $u \in \dom(x_i)$ and all other attributes to $\perp$ at the beginning of all updates;
    \item \textbf{$\bar{x} \cap P = \{x_i\}$, where $x_i$ is an endpoint of $P$.}  When there is an update $(u)$ to $R_1$ (or $R_{k-1}$) for $Q_{2k-1}$, we update $(u, \perp, \cdots, \perp)$ from $R$ with $x_1 = u$ (or $x_{k-2} = u$) and all other attributes to $\perp$;
    \item \textbf{$\bar{x} \cap P = \{x_i, x_{i+1} \}$}.  When there is an update $(u, v)$ to $R_{i}$, we insert or delete $(u, v, \perp, \cdots, \perp)$ from $R$ with $x_i = u$ and $x_{i+1} = v$ and all other attributes to $\perp$.
\end{itemize}
Due to the property of a chordless path, there is no $R$ with more than two attributes in $P$, or $\bar{x} \cap P = \{x_i, x_j\}$ with $|j-i| > 1$.  It is not hard to see that there is a one-to-one mapping for the query result $Q$ and the $Q_{2k-1}$, and the result can be easily generalized to $Q_{2k}$ and $Q'_{k}$.  Therefore, given any update sequence $S$ for $Q_{2k-1}$, $Q_{2k}$ or $Q'_{k}$, we can create an equivalent update sequence $S'$ on $Q$ as long as $Q$ has a length $2k-1$, $2k$ chordless path, or a length $k$ q-chordless path, thus maintaining a height $k$ query should be as hard as maintaining $Q_{2k-1}$, $Q_{2k}$ or $Q'_{k}$.

\section{Maintenance Hardness when Scaling Up}
\label{sec:dimension}

In Section~\ref{sec:height}, we analyzed the parameterized hardness of extending a query with additional relations. It is natural to ask whether two queries of the same height can still differ in maintenance difficulty due to the structure of their join attributes or unique output attributes\footnote{For unique non-output attributes, we can project those attributes out without affecting the query result.}. To capture this effect, we introduce a second parameter, the \emph{dimension} of a query. To define the dimension, we first formalize the notion of a \emph{connected subset} on a query.

\begin{definition}[Connected Subset]
    Given a conjunctive query $Q = (\V, \E, \bar{y})$, let $\V_c \subseteq \V \setminus \V_\bullet$ be a subset of join attributes. We define the induced active relations $\E_c = \{R(\bar{x}) \in \E \mid \bar{x} \cap \V_c \neq \emptyset\}$.  To refine the definition, when $\V_c = \{\emptyset\}$, every $R_i(\bar{x}_i) \in \E$ can be the active relation $\E_c = \{R_i\}$.    
    
    We say $\V_c$ is a \emph{connected subset} if the primal graph $G(\V_c)$ is connected.  
    Let $\V(\E_c) = \cup_{R(\bar{x}) \in \E_c} \bar{x}$ denote the set of attributes appearing in $\E_c$. We define the set of \emph{active attributes} as $\V_\star = \V(\E_c) \setminus \V_c$.
\end{definition}

\begin{definition}[Dimension]
\label{def:dimension}
    Let $Q = (\V, \E, \bar{y})$ be a conjunctive query. Let $\V_c \subseteq \V$ be a connected subset with induced relations $\E_c$. Let the remaining relations be $\E_r = \E \setminus \E_c$.
    
    A subset of relations $\mathcal{N} \subseteq \E_r$ is called a \emph{distinct neighbor set} if there exists an injective mapping $\pi: \mathcal{N} \to \V_\star$ such that for every $R(\bar{x}) \in \mathcal{N}$, $\pi(R) \in \bar{x} \cap \V_\star$. We refer to the image of this mapping, $\V_k(\mathcal{N}) = \{\pi(R) \mid R \in \mathcal{N}\}$, as a \emph{key set}. Note that multiple key sets may exist for a fixed $\mathcal{N}$.
    
    The \emph{local dimension} of $\V_c$, denoted $\dim(\V_c)$, is defined as:
    \[
        \dim(\V_c) = \max_{\substack{\mathcal{N} \subseteq \E_r, \ \V_k(\mathcal{N})}} 
        \begin{cases}
            |\V_k(\mathcal{N})| & \text{if } \V_c \cap \bar{y} \neq \emptyset \text{ or } \V_k(\mathcal{N}) \subseteq \bar{y}, \\
            |\V_k(\mathcal{N}) \cup (\bar{y} \cap \V_{\star})| & \text{otherwise.}
        \end{cases}
    \]
    Intuitively, if $\V_c$ contains output attributes, or $\V_k(\mathcal{N})$ contains only output attributes, we only count the contribution of $\V_k(\mathcal{N})$. Otherwise, we additionally count any other active output attributes.
    
    The \emph{dimension} of the query $Q$ is the maximum local dimension over all connected subsets:
    \[
        \dim(Q) = \max_{\V_c \subseteq \V} \dim(\V_c).
    \]
\end{definition}

We now establish a connection between the query dimension and the hierarchical structure.

\begin{lemma}
    \label{lem:hierarchical}
    A CQ is q-hierarchical if and only if $\dim(\V_c) = 1$ for all $\V_c \subseteq \V$.
\end{lemma}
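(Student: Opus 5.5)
We prove the two directions separately, working directly with the definitions of q-hierarchical queries and of $\dim(\V_c)$. One caveat first: since $\dim(\V_c)$ is a maximum over distinct neighbor sets that includes $\mathcal{N}=\emptyset$, it can be $0$. So we read ``$\dim(\V_c)=1$ for all $\V_c$'' as ``$\dim(\V_c)\le 1$ for all $\V_c$'', with equality attained by some $\V_c$ (e.g. the empty subset with a single active relation, whenever some attribute is nonempty). The real content is the bound $\dim(\V_c)\le 1$.

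\emph{(q-hierarchical $\Rightarrow$ $\dim \le 1$).} Fix a connected subset $\V_c$ of join attributes.
\begin{enumerate}
\item Connectivity of $G(\V_c)$ together with condition (1) of q-hierarchicality shows that the sets $\E[x]$ for $x\in\V_c$ form a chain under inclusion. Two attributes that co-occur in a relation have intersecting $\E$-sets, so these sets are comparable, and comparability propagates along paths of $G(\V_c)$ via the laminar structure.
\item Let $x_{\max}\in\V_c$ be the attribute with the largest set, so $\E_c=\E[x_{\max}]$.
\item Consider any $R\in\mathcal{N}\subseteq\E_r$ with $\pi(R)=z\in\V_\star$. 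Then $z$ lies in some relation of $\E_c$, so $\E[z]\cap\E[x_{\max}]\neq\emptyset$. Since $R\notin\E[x_{\max}]$, we cannot have $\E[z]\subseteq\E[x_{\max}]$, so $\E[x_{\max}]\subsetneq\E[z]$.
\item Any two key attributes $z,z'$ therefore both strictly contain $\E[x_{\max}]$, so they are comparable. Say $\E[z]\subseteq\E[z']$. Then the relation $R$ with $\pi(R)=z$ also contains $z'$, and both relations lie in $\E[z']$.
\end{enumerate}
The main obstacle is turning step 4 into $|\V_k(\mathcal{N})|\le 1$. Injectivity alone does not forbid two relations with two distinct keys, so we need an additional argument. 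We would first try to use that every active attribute co-occurs with $x_{\max}$ in some relation of $\E_c$, and then check the candidate pattern $R_1(z,z')$, $R_2(z')$, $S(x,z,z')$ carefully against the definition. If the pure counting bound fails, we must reinterpret the key set as requiring keys from distinct ``branches''. We expect this step to need the most care.

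For the output-attribute case, suppose $\V_c\cap\bar{y}=\emptyset$ and $\V_k\not\subseteq\bar{y}$. Condition (2) of q-hierarchicality says output attributes are upward closed. So an active output attribute $w$ would force $z\in\bar{y}$ whenever $\E[w]\subsetneq\E[z]$, and this yields $|\V_k\cup(\bar y\cap\V_\star)|\le 1$ by the same chain argument.

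\emph{(Not q-hierarchical $\Rightarrow$ some $\dim(\V_c)\ge 2$).} We split into the two ways q-hierarchicality can fail.
\begin{itemize}
\item \textbf{Condition (1) fails.} Then there are attributes $x_1,x_2$ and relations $R_1\ni x_1$, $R_2\ni x_2$, $R_3\ni x_1,x_2$, with $x_2\notin R_1$ and $x_1\notin R_2$. Take $\V_c=\{v\}$ for a suitable join attribute $v\in R_3$ not in $R_1\cup R_2$, if one exists. Then $R_1,R_2\in\E_r$ with distinct keys $x_1,x_2$, so $\dim\ge 2$. If no such $v$ exists, we pick $\V_c$ instead as a connected set inside $R_3$ avoiding $x_1,x_2$.
\item \textbf{Condition (2) fails.} We have $x_1\in\bar y$, $x_2\notin\bar y$, and $\E[x_1]\subsetneq\E[x_2]$. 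Take $\V_c=\{x_2\}$, which contains no output attribute. Take the key to be an attribute of some $R\in\E_r$ adjacent to $\E[x_2]$. The "otherwise" clause then additionally counts the active output attribute $x_1$, giving $\dim\ge 2$. If no such $R$ exists, we use a relation in $\E[x_2]\setminus\E[x_1]$ to supply a second key.
\end{itemize}
We expect these degenerate subcases, where the natural $\V_c$ is empty or the key attribute is not in $\V_\star$, to need the most case work.
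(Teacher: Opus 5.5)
\textbf{Forward direction.} There are genuine gaps in both directions. In the forward direction you stop exactly where the argument breaks, and your suspicion is right. When a distinct neighbor set is defined only by an injective key map, the statement as written is false, and your candidate pattern is a counterexample.

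In $Q()\gets S(x,z,z'),R_1(z,z'),R_2(z')$ we have $\E[x]\subseteq\E[z]\subseteq\E[z']$, so $Q$ is q-hierarchical. Yet $\V_c=\emptyset$, $\E_c=\{S\}$, $\N=\{R_1,R_2\}$, $\pi(R_1)=z$, $\pi(R_2)=z'$ gives $\dim(\emptyset)=2$. The choice $\V_c=\{x\}$ is inadmissible because $x\in\V_\bullet$, but adding $T(x,z,z')$ makes $\V_c=\{x\}$ work too. Theorem~\ref{thm:dimension} would then rule out $O(|D|^{1/2-\epsilon})$ update time for a q-hierarchical query, contradicting the known constant-time algorithm. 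So the defect is in the definition, not in your reasoning.

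The paper's proof has the same hole. From $x_1\in R'$ and $x_2\in R''$ it infers $\E[x_1]\nsubseteq\E[x_2]$ and $\E[x_2]\nsubseteq\E[x_1]$, silently assuming $x_2\notin R'$ and $x_1\notin R''$. What is missing is an extra condition on $\N$. Count $\V_k(\N)$ together with a chosen set $O\subseteq\bar y\cap\V_\star$, and require each $R\in\N$ to meet $\V_k(\N)\cup O$ only in $\pi(R)$. This is what the reduction actually uses when it needs neighbor relations with $\bar x\cap\V'=\{v\}$. Under this condition your steps 3--4 close at once: if $\pi(R)=z$ and $\pi(R')=z'$, then $R\in\E[z]\setminus\E[z']$ and $R'\in\E[z']\setminus\E[z]$ contradict the comparability you derived.

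Your output-clause sketch needs the same condition, because it misses the non-strict case. Condition (2) says nothing when $\E[w]=\E[z]$. Indeed $Q(w)\gets S(c,w,z),T(c,w,z),R(w,z)$ is q-hierarchical, while $\V_c=\{c\}$, $\N=\{R\}$, $\pi(R)=z$ counts both $z$ and $w$. Once $w\notin R$ is required, the argument goes through:
\begin{itemize}
\item $\E[w]\subsetneq\E[x_{\max}]$ is excluded by $\V_c\cap\bar y=\emptyset$;
\item hence $\E[w]$ meets $\E[z]$;
\item then $R\in\E[z]\setminus\E[w]$ forces $\E[w]\subsetneq\E[z]$, so $z\in\bar y$, a contradiction.
\end{itemize}

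Two smaller points. First, the sets $\E[x]$ for $x\in\V_c$ need not form a chain: $\E[x_1]$ and $\E[x_3]$ may be disjoint subsets of $\E[x_2]$ with $x_1,x_2,x_3\in\V_c$. A connected laminar family does, however, have a unique maximal member, and that is all step 2 uses. Second, $\V_c=\emptyset$ needs its own case; it is easy, since both keys lie in the single active relation, so their $\E$-sets meet.

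\textbf{Converse direction.} Your witness for a failure of condition (2) does not work. Choosing $\V_c=\{x_2\}$ puts all of $\E[x_2]$ into $\E_c$. Your fallback relation from $\E[x_2]\setminus\E[x_1]$ therefore lies in $\E_c$ and cannot be a neighbor. Already for $Q(x_1)\gets R_1(x_1,x_2),R_2(x_2)$ you get $\E_r=\emptyset$ and $\dim(\{x_2\})=0$. When an adjacent $R\in\E_r$ does exist, its key may be an output attribute, and then the ``otherwise'' clause never applies.

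The missing idea is the paper's: take $\V_c=\emptyset$ with a single active relation.
\begin{itemize}
\item For condition (2), take $\E_c=\{R\}$ with $R\in\E[x_1]$, so $x_1,x_2\in R$. Take $\N=\{R'\}$ with $R'\in\E[x_2]\setminus\E[x_1]$ and $\pi(R')=x_2\notin\bar y$. The ``otherwise'' clause then also counts $x_1$, giving $2$.
\item For condition (1), $\E_c=\{R_3\}$ with $\N=\{R_1,R_2\}$ and keys $x_1,x_2$ works uniformly, so the search for $v$ is unnecessary.
\end{itemize}
Your fallback for condition (1) is only correct if read as this choice. If no suitable $v$ exists, every nonempty set of join attributes of $R_3$ contains an attribute of $R_1$ or $R_2$, which drags that relation into $\E_c$. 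Both witnesses also satisfy the repaired neighbor condition above, with $O=\{x_1\}$ and $O=\emptyset$ respectively.
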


\rwb{Combining with Lemma~\ref{lem:height-hierarchical}, we are able to show the equivalence between dimension-$1$ queries/height $1$ queries/q-hierarchical queries and prove Proposition~\ref{thm:equiv}.}

\begin{example}
    Consider query $Q_1$ from Example~\ref{ex:1}. It is a dimension-$4$ query with $\dim(\{x_5\}) = 4$. Here, the connected set is $\V_c = \{x_5\}$ and the induced relations are $\E_c = \{R_3, R_7\}$. Figure~\ref{fig:bipartite} illustrates a distinct neighbor set $\mathcal{N} = \{R_2, R_4, R_8, R_{10}\}$ maximizing the dimension. The corresponding key set $V_k(\mathcal{N})$ includes $\{x_3, x_4, x_6, x_9\}$.
\end{example}

\subsection{Star Queries}
We link a dimension-$d$ conjunctive query to the hardness of the Generalized Online Matrix-Vector Multiplication (OuMv$_k$) problem using the construct of \emph{Star Queries}, where there exists a \emph{center} relation that is connected to a set of \emph{satellite} relations.  All these satellite relations do not share any common attributes, whereas the center relation does share common attributes with each of them.  Without loss of generality, we consider star queries in the following form:

\begin{definition}[Star Queries]
\label{def:stars}
    A query $Q = (\V, \E, \bar{y})$ is a \emph{star query} if it is of the form:
    \[
        Q(x_{d-j+1}, \dots, x_d) \leftarrow R_1(x_1, \cdots, x_d), R_2(x_1), \dots, R_{k+1}(x_k)
    \]
    for any $0 \le d-j \le k \le d$, where $R_1$ is the \emph{center} relation, $R_2, \dots, R_{k+1}$ are \emph{satellite} relations, $d$ is the dimension of the query and $j$ is the number of output attributes.  There are two special cases: $j = d, k=d$ (Full query) or $j = 0, k = d$ (Boolean query with no output attributes).   
\end{definition}

Star-shaped schemas are ubiquitous in data warehouse design: the center relation $R_1$ is the fact table, while all other satellite relations are dimension tables.  
In practice, these joins are usually between primary and foreign keys.  Although we do not impose key constraints in the query definition, our lower bound instances respect them; meanwhile, prior work shows that, under restricted update sequences such as first-in-first-out, imposing key constraints can change the hardness of maintaining these sequences~\cite{wang2020maintaining}. 

Any star query is a height $2$ query: a height $2$ free-connex join tree $\T$ can be obtained by taking $R_1$ as the root node and all satellites as leaf nodes attached to $R_1$.  However, we can show that their lower bound can exceed the $O(\sqrt{|D|})$ bound established in Theorem~\ref{thm:height}:

\begin{lemma}
    Consider the star query:
    \(
        Q_d() \gets R_1(x_1, \dots, x_d), R_2(x_1), \dots, R_{d+1}(x_d).
    \)
    This query has a dimension of $d$. Unless the OuMv$_k$ Conjecture fails, $Q_d$ cannot be maintained with amortized $O(|D|^{(d-1)/d - \epsilon})$ time and $O(1)$-delay enumeration for any constant $\epsilon > 0$.
\end{lemma}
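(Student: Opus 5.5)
The proof has two parts: computing the dimension of $Q_d$, and reducing OuMv$_d$ to the maintenance of $Q_d$.

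\textbf{Dimension of $Q_d$.} For the lower bound, take the connected subset $\V_c=\emptyset$ with active relation $\E_c=\{R_1\}$. Then $\V_\star=\{x_1,\dots,x_d\}$ and $\E_r=\{R_2,\dots,R_{d+1}\}$. The map $\pi(R_{i+1})=x_i$ is injective, so the key set has size $d$. (If $\V_c=\{\emptyset\}$ is judged degenerate, any single $\{x_i\}$ also works. Its active relations are $R_1$ and $R_{i+1}$, and the remaining satellites supply $d-1$ distinct keys; adding the empty-subset case still gives $d$.)

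For the upper bound, note that $\V_\star\subseteq\{x_1,\dots,x_d\}$ in every case, and $\bar y=\emptyset$. Hence every key set has size at most $d$, and so $\dim(Q_d)=d$.

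\textbf{Reduction setup.} Start from an OuMv$_d$ instance: a tensor $M\subseteq[n]^d$ together with $n^\gamma$ rounds, where round $r$ supplies vectors $u^{(1)},\dots,u^{(d)}$. In preprocessing, insert every tuple of $M$ into $R_1$. This costs at most $n^d$ updates, and one may choose $|M|=\Theta(n^d)$, which is the hard regime. From then on $|D|=\Theta(n^d)$. Satellite contents are bounded by $dn$ and do not change this order.

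\textbf{Processing each round.} For round $r$:
\begin{enumerate}
\item For each $i$, delete from $R_{i+1}$ the indices of $u^{(i)}_{r-1}$ and insert the support of $u^{(i)}_r$. This takes at most $2dn$ updates.
\item Ask the Boolean query $Q_d()$ by enumerating, which costs $O(1)$ time.
\end{enumerate}
The answer is true exactly when some $(a_1,\dots,a_d)\in M$ satisfies $u^{(i)}_{a_i}=1$ for all $i$. This is precisely the OuMv$_d$ answer for that round.

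\textbf{Running time.} Suppose the amortized update time were $O(|D|^{(d-1)/d-\epsilon})=O(n^{d-1-d\epsilon})$. The total cost would then be
\[
O\bigl(n^d\cdot n^{d-1-d\epsilon}+n^{\gamma}\cdot dn\cdot n^{d-1-d\epsilon}\bigr).
\]
The second term equals $O(n^{\gamma+d-d\epsilon})$, which is what the OuMv$_d$ conjecture forbids.

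\textbf{Main obstacle.} The difficulty is the preprocessing term $n^{2d-1-d\epsilon}$. The conjecture allows only linear preprocessing, and this term may exceed $n^{\gamma+d-\epsilon'}$ unless $\gamma$ is large.

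I would try two fixes. The first is to take $\gamma\ge d-1$ large enough that the preprocessing term is dominated. The conjecture is stated for all $\gamma>0$, so picking $\gamma=d$ makes both terms $O(n^{2d-d\epsilon})$, well below $n^{2d-\epsilon'}$. The second is to note that amortized bounds let the $n^d$ insertions be charged within the same budget.

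I also need $M$ itself to be dense, so that $|D|$ really is $\Theta(n^d)$. If it is not, I would pad with dummy tuples on fresh domain values that never join with any satellite.
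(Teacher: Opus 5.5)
Your proposal is correct and follows essentially the same route as the paper: encode $M$ into $R_1$, load each round's vectors into the satellites, issue the Boolean query, delete, and take $\gamma \ge d-1$ so the $n^d$ tensor insertions are dominated by the $n^{\gamma+1}$ vector updates; your explicit dimension computation and your handling of the linear-preprocessing requirement (charging the tensor insertions to the query phase) fill in details the paper leaves implicit. One small correction: your fallback with $\mathcal{V}_c=\{x_i\}$ gives only $d-1$, not $d$, so the dimension-$d$ witness really does need the empty connected subset with $\mathcal{E}_c=\{R_1\}$, which the paper's definition explicitly allows.
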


\begin{proof}
    We reduce the OuMv$_k$ problem to maintaining $Q_d$. Let $M$ be an $d$-dimensional Boolean tensor of size $n \times \dots \times n$, and let $u_1, \dots, u_d$ be a stream of $n^\gamma$ vector sets. Since $n = |D|^{1/d}$, when $\gamma \ge d-1$, the total input size is bounded by $(|D|^{1/d})\times(|D|^{1/d})^\gamma = |D|^{(\gamma+1)/d}$.  Given an OuMv$_k$ instance, we can solve the problem with the following reduction:
    \begin{enumerate}[leftmargin=*]
        \item Encode tensor $M$ into $R_1$: insert tuple $(t_1, \dots, t_d)$ into $R_1$ if and only if $M[t_1, \dots, t_d] = 1$.
        \item For each round $j$ of vector updates and queries:
        \begin{itemize}[leftmargin=*]
            \item For each $i \in [1, d]$, update $R_{i+1}$ to represent vector $u_i^{(j)}$ by inserting $(x_i)$ if $u_i^{(j)}[x_i] = 1$.
            \item Enumerate the result of $Q_d$. The result is non-empty if and only if $M$ has a non-empty intersection with $u_1^{(j)} \times \dots \times u_d^{(j)} = 1$.
            \item Delete tuples from $R_2, \dots, R_{d+1}$ to prepare for round $j+1$.
        \end{itemize}
    \end{enumerate}
    If $Q_d$ can be maintained in $O(|D|^{(d-1)/d - \epsilon})$ time per update, the total update time would be $O(|D|^{(d-1)/d+(\gamma+1)/d - \epsilon}) = O(|D|^{1+\gamma/d-\epsilon}) = O(n^{d+\gamma-d\epsilon})$, and the enumeration takes $O(1)$ per query, which thus solves OuMv$_k$ problem in time $O(n^{d+\gamma-\epsilon})$ and violates OuMv$_k$ Conjecture.
\end{proof}

\begin{lemma}
    Consider the star query
    \(
        Q_d'(x_{d-j+1}, \cdots, x_d) \gets R_1(x_1, x_2, \cdots, x_d), R_2(x_1), \cdots, \\ R_{k+1}(x_{k}),
    \)
    for any $0 \le d-j \le k \le d$. Unless OuMv$_k$ Conjecture failed, $Q_d'$ cannot be maintained in $O(|D|^{(d-1)/d-\epsilon})$ time while supporting $O(|D|^{(d-j)/d-\epsilon})$-delay enumeration for any $\epsilon > 0$.
\end{lemma}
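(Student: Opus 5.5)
We reduce OuMv$_d$ to maintaining $Q_d'$, following the proof for $Q_d$ but changing how a round is answered. Let $M$ be an $n\times\cdots\times n$ Boolean tensor with $n=|D|^{1/d}$. Encode $M$ into the center relation $R_1$, inserting $(t_1,\dots,t_d)$ iff $M[t_1,\dots,t_d]=1$. In round $j'$, insert into each satellite $R_{i+1}$, $i\in[1,k]$, the tuples $(x_i)$ with $u_i^{(j')}[x_i]=1$.

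The coordinates $x_{k+1},\dots,x_d$ have no satellite. Since $d-j\le k$, all of them are output attributes, so we enforce the vectors $u_{k+1}^{(j')},\dots,u_d^{(j')}$ during enumeration. When every satellite is present, the round's answer is ``yes'' iff $Q_d'(D)$ contains a tuple whose coordinates $x_{k+1},\dots,x_d$ are selected by the corresponding vectors.

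To avoid enumerating the whole output, which can have size up to $n^{j}$, we restrict $M$ itself. Fold the output coordinates that have no satellite into the encoding: in each round, delete from $R_1$ the tuples whose coordinates $x_{k+1},\dots,x_d$ are not selected, and reinsert them afterwards. Doing this naively costs too many updates.

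We therefore use the standard alternative. Make the OuMv instance have $u_{k+1},\dots,u_d$ fixed to all-ones; this loses no generality because the hard instances of OuMv$_d$ embed into OuMv$_k$ on the first $k$ coordinates times a fixed hard slice. For the case $k=d$ the enumeration only needs to find a single output tuple.

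If $k<d$, we instead apply the reduction to a hard OuMv$_k$ instance on the first $k$ dimensions. We pad the remaining $d-k$ coordinates with dummy domains of size $n$: the tensor entry $(t_1,\dots,t_k)$ is replicated for all values of the padded coordinates, so $|D|=n^d$ as before.

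The cost accounting follows the $Q_d$ proof.
\begin{itemize}
\item Per round we make $O(n)$ satellite updates, giving total update time $O(n^\gamma\cdot n\cdot |D|^{(d-1)/d-\epsilon})=O(n^{d+\gamma-d\epsilon})$.
\item The round's answer is decided by requesting just the \emph{first} output tuple. This costs one delay, $O(|D|^{(d-j)/d-\epsilon})=O(n^{d-j-d\epsilon})$, per round. Over $n^\gamma$ rounds this is $O(n^{\gamma+d-j-d\epsilon})$, which is dominated by the total update time.
\end{itemize}
The delay allowance $|D|^{(d-j)/d}$ is exactly the budget allowed by the non-re-evaluation requirement, because the output can be of size $n^{j}$ while re-evaluation costs $n^d$. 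Choosing $\gamma\ge d-1$ gives total time $O(n^{d+\gamma-\epsilon'})$ for OuMv, contradicting the conjecture.

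\textbf{Main obstacle.} The hard step is the padding when $k<d$. We must ensure that the replicated tensor still encodes a hard instance whose satellite updates touch only the first $k$ coordinates, and that the hardness exponent stays $(d-1)/d$ rather than $(k-1)/k$. This requires balancing domain sizes: the first $k$ coordinates must have size $n$ while $|D|=n^d$, so the dummy coordinates must contribute a factor $n^{d-k}$ without reducing the per-round work.

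The first approach to try is to take OuMv$_k$ with an $n\times\cdots\times n$ tensor of $n^k$ entries. Its trivial per-round cost is $n^k$ against $|D|^{k/d}$, and we should check whether the exponents still yield $(d-1)/d$. If they do not, the alternative is to keep the full OuMv$_d$ instance and simulate the vectors $u_{k+1},\dots,u_d$ with selection through the output attributes, enumerating until the first selected tuple is found.
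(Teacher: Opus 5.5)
Your proof has a genuine gap at the step where you fix $u_{k+1},\dots,u_d$ to all-ones ``without loss of generality''. Equivalently, the gap is where you reduce from an order-$k$ OuMv instance on the first $k$ coordinates and pad the rest.

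This is not without loss of generality. Once those vectors are fixed, a round only asks whether $u_1\times\cdots\times u_k$ meets the projection of $M$ onto its first $k$ coordinates. That is an order-$k$ instance, whose conjectured cost is $n^{\gamma+k}$, not $n^{\gamma+d}$. For $d=2$, $k=j=1$ it becomes trivial: precompute the nonempty rows of $M$ and answer each round in $O(n)$ time.

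Accordingly, your closing step fails for $k<d$. A total time of $O(n^{\gamma+d-\epsilon'})$ does not contradict the order-$k$ conjecture, which only forbids $O(n^{\gamma+k-\epsilon})$. Reducing from an order-$k$ instance through the $k$ satellites gives only an update lower bound of $|D|^{(k-1)/k}$ (unpadded) or $|D|^{(k-1)/d}$ (with your padding to $|D|=n^d$). Neither reaches $|D|^{(d-1)/d}$ when $k<d$. The extra hardness has to enter through enumeration, which is exactly why the lemma restricts the delay.

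The missing idea is one you state but do not use: enumerating the \emph{entire} output each round is affordable, so there is nothing to avoid. The argument then runs as follows:
\begin{enumerate}
\item Keep the full order-$d$ instance.
\item In each round, perform the $O(n)$ satellite updates and enumerate all of $Q_d'(D)$.
\item Test the coordinates $x_{k+1},\dots,x_d$ of each output tuple against $u_{k+1},\dots,u_d$, using membership tables built in $O(n)$ time. All of these coordinates are output attributes because $d-j\le k$.
\end{enumerate}
Since $|Q_d'(D)|\le n^{j}$ and the delay is $O(n^{d-j-d\epsilon})$, a round costs $O(n\cdot n^{d-1-d\epsilon}+n^{j}\cdot n^{d-j-d\epsilon}+n)=O(n^{d-d\epsilon})$. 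With $\gamma\ge d-1$ absorbing the $O(n^{2d-1-d\epsilon})$ cost of loading $M$, the total is $O(n^{\gamma+d-d\epsilon})$. This contradicts the conjecture for order-$d$ tensors.

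This is essentially the paper's proof. It builds a hash index on $u_{k+1}\times\cdots\times u_d$ in $O(n^{d-k})$ time per round and enumerates the distinct projection onto $x_{k+1},\dots,x_d$. It is also the ``alternative'' in your last sentence, so you should make it the main argument and drop the all-ones/padding route. Stopping at the first selected tuple gains nothing in the worst case, since a ``no'' round forces full enumeration; it is the full-enumeration bound above that carries the argument. Your $k=d$ case, where one delay per round suffices, is fine as written.
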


\begin{proof}
    We adapt the construction from the proof for $Q_d$. In this scenario, enumeration requires outputting slices of the tensor $M$. For every output tuple $(t_{d-j+1}, \dots, t_d)$, a valid join result exists for the first $k$ dimensions. Consequently, for each candidate output, we only need to verify if the intersection of the remaining dimensions ($k+1$ to $d$) exists in $u_{k+1} \times \dots \times u_d$.   We can construct a hash index on the Cartesian product $u_{k+1} \times \dots \times u_d$ to perform this check in $O(1)$ time per entry. Computing this index takes $O(|D|^{(d-k)/d})$ time per round, and performing the distinct projection on $Q'_d$ to the remaining dimensions raises an additional $O(|D|^{(k+j-d)/d })$ delay. 
    
    If the amortized update time were $O(|D|^{(d-1)/d - \epsilon})$ and enumeration delay for enumerating $\pi_{x_{k+1}, \cdots, x_d} Q_d'$ were $O(|D|^{(d-j)/d-\epsilon}\times |D|^{(k+j-d)/d }) = O(|D|^{k/d}-\epsilon)$, since there are total $O(|D|^{(d-k)/d})$ output per round, the total running time would be $O(n^{\gamma+d-1-\epsilon'})$, which again violates the conjecture. 
\end{proof}

\subsection{Proof of Theorem~\ref{thm:dimension}}

We prove the theorem via a reduction from the hardness of the Star Queries ($Q_d$ and $Q_d'$) to an arbitrary query $Q$ with $\dim(Q) = d$.  Let $\V_c$ be the connected subset, with $\E_c$ the induced active relations, $\mathcal{N}$ the distinct neighbor set, and $\V_k(\mathcal{N})$ the key set that attains the maximum dimension (i.e., $\dim(\V_c) = d$). We construct an instance of $Q$ to simulate the hard query $Q_d$ (or $Q_d'$) as follows.

Let $M$ be the input tensor and $u_1, \dots, u_d$ be the input vectors for the OuMv$_k$ problem instance associated with $Q_d$. We map this instance to $Q$ through the following steps:

\begin{enumerate}[leftmargin=*]
    \item We map the $d$ dimensions to the attributes in $Q$.  If $\V_c \cap \bar{y} \neq \emptyset$ or $\V_k(\mathcal{N}) \subseteq \bar{y}$, we simply map the dimensions $x_1, \dots, x_d$ to the attributes in $\V_k(\mathcal{N})$.  Otherwise, we map the first $x_1, \cdots, x_k$ to the attributes in $\V_k(\mathcal{N})$, and the remaining dimensions $x_{k+1}, \dots, x_d$ to $(\bar{y} \cap \V_\star) \setminus \V_k(\mathcal{N})$.
    All attributes involved in this mapping are assigned the domain $[n]$.
    
    \item The attributes in the connected subset $\V_c$ serve as "glue" to reconstruct the tensor $M$. We assign their domain to be $[N]$, where $N = n^d$.
    Let $\V'$ be the set of all active attributes in the reduction: $\V' = \V_c \cup \V_k(\mathcal{N})$ or $\V' = \V_c \cup \V_k(\mathcal{N}) \cup (\bar{y} \cap \V_\star)$ depends on $\V_c$ and $\V_k(\mathcal{N})$. Any attributes $v \in \V \setminus \V'$ are fixed to a constant value $\perp$.

    We define the update sequence for $R(\bar{x})$ for all $\bar{x} \cap \V' \neq \emptyset$ as follows:
    
   \item \textbf{Tensor Encoding:} Consider relations where $|\bar{x} \cap \V'| \ge 2$. These relations form the core structure. We index the non-zero entries of the tensor $M$ with a unique identifier $ID \in [1, N]$. For each relation $R(\bar{x})$ with $\V_c \cap \bar{x} \neq \emptyset$, we project (without removing the duplicates) the tensor $M$ onto the attributes $\bar{x} \setminus \V_c$, and set those variables in $\V_c$ to the corresponding tensor entry $ID$.  If $R(\bar{x})$ contains \emph{only} variables from $\V_c$ (i.e., $\bar{x} \cap \V' \subseteq \V_c$), we simply insert tuples $(ID, \dots, ID)$ for all active IDs. This ensures the connectivity of $\V_c$ propagates the synchronization $ID$ across the query.  If $R(\bar{x})$ contains no variables from $\V_c$, we insert the distinct projection of the tensor $M$ for attributes $\bar{x} \cap \V'$ without constraining the query outputs.  

    \item \textbf{Neutralizing Unused Relations.} For any relation $R(\bar{x})$ with $\bar{x} \cap \V' = \emptyset$, we populate it with the single tuple $(\perp, \dots, \perp)$ at the beginning of the stream. This ensures these relations do not restrict the join result.

    \item \textbf{Vector Updates:} After (3) and (4), we start to update vectors.
    Consider relations where $\bar{x} \cap \V' = \{v\}$ for some $v \in \V_k(\mathcal{N})$. By the definition of the Distinct Neighbor Set, such relations exist and uniquely correspond to specific dimensions. We map the update stream of vector $u_i$ directly to this relation. Inserting a value $c$ into this relation corresponds to setting $u_i[c] = 1$.
    
\end{enumerate}

\paragraph{Correctness.} 
The construction ensures that a join result exists if and only if there is a tuple identifier $ID$ (representing a non-zero entry in $M$) such that all its coordinate values are present in the corresponding satellite relations (representing the vectors $u_i$ being 1). This exactly reconstructs the condition $M(x_1, \dots, x_d) \land \bigwedge u_i(x_i)$. The size of the database is dominated by $M$ and the vector updates, preserving the $O(|D|^{(d-1)/d-\epsilon})$ lower bound.

\section{Maintaining Star queries with Parameterized Complexity}
\label{sec:upper}
In Sections~\ref{sec:height} and \ref{sec:dimension}, we established that a query $Q$ with height $k$ and dimension $d$ cannot be maintained better than amortized $O\bigl(|D|^{1 - 1/\max\{k,d\} - \epsilon}\bigr)$ time. In this section, we provide a matching upper bound for the class of Star Queries. We introduce a data-dependent parameter, the \emph{Generalized H-index}, and design an algorithm with amortized update time $O(h(D)^{\dim(Q)-1})$. Since $h(D)$ is naturally bounded by $|D|^{1/\dim(Q)}$, this proves the optimality of our lower bound for these queries.

\subsection{Generalized H-index}
The classical H-index is a widely used metric that measures both the productivity and the citation impact of a scholar's publications.  A scholar has an H-index of $10$ if they have at least $10$ publications, each with at least $10$ citations, and they do not have $11$ publications with at least $11$ citations.  The same idea has been used in parameterized graph algorithms~\cite{10.1007/978-3-642-03367-4_25,EPPSTEIN201244}, where one partitions the graph into a dense part (induced by the vertices with many neighbors) and a sparse remainder.  However, the standard H-index is defined for binary relations such as graphs.  We therefore introduce a generalized version to measure the density of a relation in a star query.

\begin{definition}[Generalized H-index]
    Consider a star query $Q$ with center relation $R_1(x_1, \dots, x_d)$ and dimension $d = \dim(Q)$. For any attribute $x_i \in \{x_1, \dots, x_d\}$, let $\deg_{R_1}(v)$ denote the frequency of a value $v$ in the projection $\pi_{x_i} R_1$.  We define the H-index of a specific attribute $x_i$, denoted $h_i(D)$, as the largest integer such that there are at least $h_i(D)$ distinct values in $\pi_{x_i} R_1$ with frequency at least $h_i(D)^{d-1}$. Formally:
    \[
        h_i(D) = \max \left\{ k \in \mathbb{N} \;\middle|\; \left| \left\{ v \in \pi_{x_i} R_1 : \deg_{R_1}(v) \ge k^{d-1} \right\} \right| \ge k \right\}.
    \]
    The \emph{Generalized H-index} of the query $Q$ over database $D$ is defined as $h(D) = \max_{i=1}^d h_i(D)$.
\end{definition}

\begin{proposition}
    For any database instance $D$, the generalized H-index satisfies $h(D) \le |D|^{1/d}$.
\end{proposition}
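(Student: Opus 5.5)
The plan is a direct counting argument on the center relation $R_1$, using only the definition of $h_i(D)$ and the fact that $|R_1| \le |D|$.

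Fix an attribute $x_i$ with $h(D) = h_i(D) =: h$. If $h = 0$ there is nothing to prove, so assume $h \ge 1$. By definition of $h_i(D)$, there is a set $S \subseteq \pi_{x_i} R_1$ of at least $h$ distinct values, and each $v \in S$ has $\deg_{R_1}(v) \ge h^{d-1}$.

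Since $\deg_{R_1}(v)$ counts the tuples of $R_1$ whose $x_i$-coordinate equals $v$, the sets of tuples witnessing distinct values of $S$ are pairwise disjoint. Summing over $S$ therefore gives
\[
|D| \;\ge\; |R_1| \;\ge\; \sum_{v \in S} \deg_{R_1}(v) \;\ge\; h \cdot h^{d-1} \;=\; h^d .
\]
Taking $d$-th roots yields $h \le |D|^{1/d}$. Because this holds for the maximizing index $i$, it holds for $h(D) = \max_i h_i(D)$.

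The only point that needs care is the reading of "frequency in the projection $\pi_{x_i} R_1$". It must mean the number of tuples of $R_1$, under set semantics, that carry value $v$ at position $x_i$; it cannot mean a multiplicity in the deduplicated projection, which would always be $1$. Under this reading the degree sets partition $R_1$, which justifies the disjointness used in the sum. No further obstacle is expected.
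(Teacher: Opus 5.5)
Your proof is correct and is essentially the paper's argument: pick the attribute attaining $h(D)$, note the $h(D)$ groups of at least $h(D)^{d-1}$ tuples are disjoint because they are grouped by the value of $x_i$, and conclude $h(D)^d \le |R_1| \le |D|$. Your remarks on the $h=0$ case and on reading $\deg_{R_1}(v)$ as a tuple count in $R_1$, rather than a multiplicity in the deduplicated projection, are sound clarifications that the paper leaves implicit.
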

\begin{proof}
    By definition, there exists an attribute $x_i$ with at least $h(D)$ distinct values, each appearing in at least $h(D)^{d-1}$ tuples in $R_1$. These sets of tuples are disjoint (grouped by $x_i$), so $|R_1| \ge h(D) \times h(D)^{d-1} = h(D)^d$. Thus, $h(D) \le |R_1|^{1/d} \le |D|^{1/d}$.
\end{proof}

\subsection{Algorithm Designs}

Our maintenance strategy relies on a \emph{Heavy-Light Decomposition} of the data space. Let $h(D)$ be the current generalized H-index. We classify values in the domain of each attribute $x_i$ as \emph{Heavy} if their degree exceeds $\tau_i = (h_i(D)+1)^{d-1}$, and \emph{Light} otherwise.

\paragraph{Partitioning}
For each attribute $x_i$, we define the set of heavy keys $K_i^H = \{v \in \pi_{x_i}R_1 \mid \deg_{R_1}(v) \ge \tau_i\}$.
We partition the center relation $R_1$ into $2^d$ logical partitions based on the heaviness of its attribute values. Each partition is indexed by a bit-vector $\mathbf{b} \in \{0,1\}^d$:
\[
    R_1^{(\mathbf{b})} = \left\{ t \in R_1 \;\middle|\; \forall i \in [1, d], \ 
    \begin{cases} 
        t[x_i] \in K_i^H & \text{if } \mathbf{b}[i] = 1 \text{ (Heavy)} \\
        t[x_i] \notin K_i^H & \text{if } \mathbf{b}[i] = 0 \text{ (Light)}
    \end{cases}
    \right\}
\]
Similarly, each satellite relation $R_i$ is partitioned into $R_i^{(1)}$ and $R_i^{(0)}$.  In addition, for $x_{k+1}$ to $x_{d}$, let $k+1 \le i \le d$ we conceptually maintain a vector $R_{i+1}^{(1)} \gets \pi_{x_i} R_1^{\mathbf{b}[i] = 1}$ for those heavy value.  

\paragraph{View Maintenance}
We maintain the query $Q$ as a union of $2^d$ subqueries $Q^{(\mathbf{b})}$, corresponding to $R_1^{(\mathbf{b})}$. For a fixed $\mathbf{b}$, let $I_H = \{i \mid \mathbf{b}[i] = 1\}$ be the indices of heavy attributes. We maintain:

\begin{enumerate}[leftmargin=*]
    \item \textbf{Satellite Projections:} For every $R_i^{(\mathbf{b}[i])}$, we maintain its projection: $V_{R_i}^{(\mathbf{b}[i])} = \pi_{x_{i-1}} R_i^{(\mathbf{b}[i])}$.
    
    \item \textbf{Light-Filtered Core ($V_S$):} For the center relation $R_1^{(\mathbf{b})}$, we actively filter it by the \emph{light} satellites. We maintain the semi-join view:
    \(
        V_S^{(\mathbf{b})} = R_1^{(\mathbf{b})} \ltimes \left( \mathop{\ltimes}_{i \notin I_H} V_{R_{i+1}}^{(0)} \right).
    \)
    
    Since light keys have a low degree, updates to light satellites are cheap to propagate.
    
    \item \textbf{Heavy Core ($V_C$):} We further project the light-filtered core onto the \emph{heavy} attributes and filter by the Cartesian product of \emph{heavy} satellites:
    \(
        V_C^{(\mathbf{b})} = \left( \pi_{\{x_i\}_{i \in I_H}} V_S^{(\mathbf{b})} \right) \ltimes \left( \mathop{\times}_{i \in I_H} V_{R_{i+1}}^{(1)}\right).
    \)
    
    $V_C$ contains only tuples composed entirely of heavy keys that have survived all satellite filters.
\end{enumerate}

\paragraph{Enumeration}
To enumerate the result of $Q$, we iterate through all $\mathbf{b} \in \{0,1\}^d$. For each $\mathbf{b}$, we
1. Enumerate tuples $t_{core} \in V_C^{(\mathbf{b})}$.
2. Use $t_{core}$ to probe the index of $V_S^{(\mathbf{b})}$ to find matching tuples $t_{full} \in R_1$ that survived the light filters.
3. For each $t_{full}$, extend it to the final result by joining with the satellites (using standard constant delay lookups).
This process guarantees $O(1)$-delay enumeration because the query is free-connex~\cite{wang2023change,idris17:_dynam}.  We then perform a distinct project on the output attributes $\bar{y}$, which raises an additional $O(|D|^{(d-j)/d})$ delay.

\paragraph{Rebalance} So far, we have assumed that the heavy/light partition is fixed, and $h(D)$ is also fixed.  In the dynamic setting, however, the classification of a key as heavy or light, and the generalized H-index $h(D)$, may change over time as updates arrive.  We handle these changes with a standard rebalancing technique similar to the textbook method for dynamic tables~\cite{cormen09:_introd}.

\begin{enumerate}[leftmargin=*]
    \item \textbf{Key Reclassification:} We promote a key $x_i$ to heavy only when its degree exceeds $2\tau_i$ and demote it only when it drops below $\frac{1}{2}\tau_i$. This ensures that between any two status changes of a key, $\Omega(h_i(D)^{d-1})$ updates involving that key must have occurred.  
    
    To perform such a rebalance, we first delete all associated tuples from $R_1$, and then all associated tuples from $R_{i+1}$, and reinsert them back to $R_{i+1}$ and then $R_1$ with the updated label.  These updates pay for the $O(h_i(D)^{d-1})$ work required to move tuples between partitions, as the frequency of such a key would be $O(h_i(D)^{d-1})$, and incur an amortized $O(1)$ cost per update.
    
    \item \textbf{Global Parameter Update:} If the number of heavy keys in any dimension exceeds $2h(D)$, or if $h(D)$ decreases by half compared with the last parameter update, we rebuild the data structures with a new $h(D)$ that can be tracked in real time. 
    
    When a global rebalance happens, at most $O(h(D))$ keys need to be moved between the heavy and light sets, and in the worst case, it costs $O(|D|) = O(h(D)^d)$. However, to reach a state requiring a rebuild, $\Omega(h(D) \cdot h(D)^{d-1}) = \Omega(h(D)^d)$ updates must have occurred, e.g., at least $h(D)$ keys are moved from light to heavy, with every key's frequency changes from at most $h(D)^{d-1}$ in previous global rebalance to $2h(D)^{d-1}$; or at least $0.5h(D)$ heavy keys are moved from heavy to light, with frequency changes from at least $h(D)^{d-1}$ in previous global rebalance to $0.5h(D)^{d-1}$.   Distributing the cost over these updates yields an amortized cost of $O(1)$.
\end{enumerate}

\subsection{Complexity Analysis}

It is not hard to see that the space requirement for the algorithm is linear to $O(|D|)$; therefore, to complete the proof of Theorem~\ref{thm:upper}, it suffices to analyze the amortized update complexity.  The update cost consists of three parts: maintenance cost, rebalancing cost, and the cost of tracking $h(D)$. In these three parts, we have also demonstrated that the rebalancing step incurs only an amortized constant cost per update; we now focus on the remaining two parts.

\paragraph{Maintenance cost} 
We analyze the cost of a single tuple update, assuming $h(D)$ is constant.
\begin{itemize}[leftmargin=*]
    \item \textbf{Update to Center $R_1$:} Inserting a tuple $t$ into $R_1$ affects exactly one partition $R_1^{(\mathbf{b})}$. Checking the semi-joins for $V_S$ and updating $V_C$ takes $O(1)$ time per update with proper hash indices.
    \item \textbf{Update to Light Satellite $R_i^{(0)}$:} A tuple inserted here has a join key $k$ with $\deg_{R_1}(k) < \tau_i$. This update may trigger insertions into $V_S^{(\mathbf{b})}$. Since the degree is bounded by $\tau_i = O(h_i(D)^{d-1})$, at most $O(h_i(D)^{d-1})$ tuples in $R_1$ are activated. Propagating these to $V_C$ takes an additional constant time per $R_1$ tuple, making the total cost $O(h_i(D)^{d-1})$.
    \item \textbf{Update to Heavy Satellite $R_i^{(1)}$:} The join key $k$ is heavy. This does not affect $V_S$ but may update the Cartesian product $ \mathop{\times}_{i \in I_H} V_{R_{i+1}}^{(1)}$. Since there are at most $h_i(D)$ heavy values per dimension, the number of the affected tuples in the Cartesian product is effectively bounded by $O(h(D)^{|I_H|-1})$, thus the update on $V_C$ is upper bounded by $O(h(D)^{|I_H|-1}) \le O(h(D)^{d-1})$.
\end{itemize}
Therefore, with a fixed heavy/light classification, the update cost is $O(h(D)^{d-1})$.

\paragraph{Tracking $h(D)$} One additional requirement for the maintenance algorithm is to track the change of $h_i(D)$ for all dimensions $i$ and thus $h(D)$.  To do that, for every dimension $i$, we maintain two group-by-count queries $V_{\mathsf{cnt}_i} = \pi_{x_i, \mathsf{count}(*) \text{ as deg}} R_1$ and $V_{\mathsf{feq}_i} = \pi_{\mathsf{deg}, \mathsf{count}(*) \text{ as feq}} V_{\mathsf{cnt}_i}$.  These two queries can be maintained in $O(1)$ time~\cite{chirkova2012materialized} under single-tuple update on $R_1$, and we can obtain the real-time $h_i$ by maintaining the result 
\(h_i = \max \left\{\mathsf{cnt} \left| \sum_{i = \mathsf{cnt}}^{|D|^{1/\dim(Q)}} \left( V_{\mathsf{feq}_i}[\mathsf{cnt}].\mathsf{feq}\right) \ge \mathsf{cnt}^{\dim(Q) - 1}  \right.\right\},\)
which can also be maintained in $O(1)$ time by keeping a frequency histogram.   

Combining all parts, the amortized update time per tuple update is $O\bigl(h(D)^{\dim(Q)-1}\bigr)$.  Since we always have $h(D) \le |D|^{1/\dim(Q)}$, this matches the lower bound from Theorem~\ref{thm:dimension} for all star queries, and thus proves tightness for this class.

\section{Additional Discussions and Future Works}
\label{sec:con}

In this work, we introduce two structural parameters, \emph{height} and \emph{dimension}, to establish parameterized lower bounds for maintaining conjunctive queries under updates.  We shall note that \emph{height} assumes the combinatorial $k$-clique conjecture, which only rules out the existence of a combinatorial algorithm.  Although the notion of \emph{combinatorial algorithm} is not formally defined, it is widely used across different research communities, including in database theory \cite{10.1145/3584372.3588666, fan2023fine, conjunctiveLB}, due to the practical efficiency, elegance, and generalizability of combinatorial algorithms \cite{abboud2024faster}. This also immediately raises an interesting research problem for both upper-bound and lower-bound designs: can we apply fast matrix multiplication to accelerate dynamic query evaluation?


Another critical question is whether a maintenance algorithm can be designed to match these lower bounds.  To complement our lower bounds, we investigated the maintenance of \emph{star queries}, a central subclass of analytical workloads. We established a parameterized upper bound based on the \emph{Generalized H-index} $H(D)$. It generalizes prior trade-off algorithms~\cite{kara2020trade,kara2020maintaining} to an instance-dependent setting, demonstrating that efficiency often depends on data skew rather than worst-case size.  The generalization of such techniques would benefit both theory and system research.  \rwa{On the other hand, it remains an open question to achieve a trade-off between the enumeration delay and maintenance cost, especially for queries with $\omega(Q) > 1$. }

Furthermore, closing the gap for general queries remains a significant challenge. Specifically, new algorithmic designs are required to handle \emph{Loomis-Whitney}-like queries~\cite{10.1145/3695837, kara2020maintaining}, which contain nested join keys that complicate the new maintenance techniques for star queries. Future avenues for research include extending both upper bounds and lower bounds to general scenarios, such as comparisons~\cite{idris2018conjunctive, comparison}, unions~\cite{carmeli2019enumeration}, and queries over specific update patterns~\cite{10.1145/3695837, pods25}.

\appendix
\section{Proof of Lemma~\ref{lem:height-hierarchical}}

\textbf{If Direction: } Let $Q = (\V, \E, \bar{y})$ be a height $1$ query.  If $Q$ is not q-hierarchical, then (1) there exists $x_1, x_2 \in \V$ such that $\E[x_1] \cap \E[x_2] \neq \emptyset$ and $\E[x_1] \setminus \E[x_2] \neq \emptyset$, $\E[x_2] \setminus \E[x_1] \neq \emptyset$; or (2) $x_1 \in \bar{y}$, $x_2 \notin \bar{y}$, while $\E[x_1] \subsetneq \E[x_2]$.  However, for (1), let $e_1 \in \E[x_1] \cap \E[x_2]$, $e_2 \in \E[x_1] \setminus \E[x_2]$, $e_3 \in \E[x_2] \setminus \E[x_1]$, $P \leftarrow \{x_1, x_2\}$ forms a length $3$ chordless path with $P_e \gets \{e_2, e_1, e_3\}$ be one of its edge sequence, contradicting that $Q$ is height $1$; and for (2), let $e_1 \in \E[x_1] \cap \E[x_2]$, $e_2 \in \E[x_1] \setminus \E[x_2]$, then $P\gets \{x_2\}$ is a length $2$ q-chordless path, with $P_e \gets \{e_1, e_2\}$ be an edge sequence with $e_1$ contains output attributes $x_2$, which also contradicts to $Q$ is height $1$.  Therefore, the assumption that $Q$ is not q-hierarchical is incorrect.  

\textbf{Only-if Direction: } Similarly, let $Q$ be a q-hierarchical query; we cannot find a length $\ge 3$ chordless path, otherwise we can find a pair of attributes $x_1, x_2$ such that $\E[x_1] \cap \E[x_2] \neq \emptyset$ and $\E[x_1] \setminus \E[x_2] \neq \emptyset$, $\E[x_2] \setminus \E[x_1] \neq \emptyset$.  Similarly, we cannot find a length $\ge 2$ q-chordless path; otherwise, we can find attributes $x_1, x_2$ such that $x_1 \in \bar{y}$, $x_2 \notin \bar{y}$, while $\E[x_1] \subsetneq \E[x_2]$.  Both cases violate the fact that $Q$ is q-hierarchical; therefore, the query must have height $1$.

\section{Proof of Lemma~\ref{thm:tree-height}}
\label{app:heights}



To prove Lemma~\ref{thm:tree-height}, we first demonstrate the recursive properties of the height definition.

\begin{lemma}
\label{lem:3.5}
    A free-connex query $Q$ is a height $k$ query for $k > 1$ if and only if its residual query $Q'$ is a height $k-1$ query.
\end{lemma}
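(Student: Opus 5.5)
The plan is to compare maximum chordless paths and maximum q-chordless paths of $Q$ with those of its residual query $Q' = Q_*$. The target is a ``length shift'' statement: a longest chordless path in $Q$ has length exactly two more than a longest one in $Q'$, and a longest q-chordless path has length exactly one more. Since height $k$ is characterised by a longest chordless path of length $2k-1$ or $2k$ and/or a longest q-chordless path of length $k$, such a shift gives both directions of the equivalence at once, provided $k>1$. The case $k=1$ is excluded because for q-hierarchical queries the residual may be a single relation, by Lemma~\ref{lem:height-hierarchical}.

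\textbf{Step 1 (contraction).} Take a chordless path $P=\{v_1,\dots,v_{m-1}\}$ in $Q$ with edge sequence $\{e_1,\dots,e_m\}$ and $m\ge 3$. I will show that the endpoint hyperedges $e_1,e_m$ can be taken to be ears, or can be replaced by ears without shortening the path. The interior edges $e_2,\dots,e_{m-1}$ must contain two non-unique path attributes that are not jointly contained in another relation; otherwise ``No shortcuts'' fails. Hence they are not ears and survive in $\E_*$.

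Now drop $v_1$ and $v_{m-1}$. In $Q'$ the attribute $v_2$ lies in $e_2\in\E_*$ but not in $e_3$. So $\{v_2,\dots,v_{m-2}\}$ with edges $e_2,\dots,e_{m-1}$ satisfies the Endpoints condition in $Q'$. Here I use the fact that $v_1$ has become unique in $Q'$, or $e_2$ still witnesses $v_2\in e_2$ with $v_3\notin e_2$. This gives a chordless path of length $m-2$ in $Q'$.

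For q-chordless paths, only the non-output end is contracted. The output endpoint relation contains output attributes together with $v_1$. It is not an ear of type (ii) unless its non-unique attributes are outputs, and the non-output $v_1$ rules that out.

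\textbf{Step 2 (extension).} Conversely, take a maximum chordless path in $Q'$ with endpoint edges $e_1',e_m'$. Each endpoint edge contains an attribute $v$ that is unique in $Q'$ but not in $Q$, because otherwise the edge would have been an ear relative to its neighbour. Such a $v$ is shared with some removed ear $E$. Prepending $v$ with $E$ as the new endpoint edge, and doing the same at the other end, yields a chordless path of length $m+2$ in $Q$. The No-shortcuts condition holds because ears only meet the skeleton inside their anchor. The analogous one-sided extension handles q-chordless paths, using ear condition (i) to keep the new vertex non-output.

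\textbf{Main obstacle.} The delicate part is Step 2: showing that a maximum path in $Q'$ always has endpoint edges that actually gain an ear in $Q$, and that the parity (length $2k-1$ vs.\ $2k$) is preserved. I would first try to prove the contrapositive. If an endpoint edge of a maximum path in $Q'$ has no removed ear attached through a new attribute, then that endpoint edge was itself an ear of $Q$, contradicting its membership in $\E_*$. Lemma~\ref{lem:2.9} handles the non-uniqueness of the skeleton, since isomorphic skeletons give the same path lengths. Combining the two shifts gives that $Q$ has height $k$ if and only if $Q'$ has height $k-1$.
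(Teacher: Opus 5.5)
\textbf{The gap is in Step~2 (extension), and it cannot be repaired inside a path-length argument.} It is not enough to find an attribute $v\in e'_m$ that is unique in $Q'$ but shared with a removed ear $E$. To append $v$ you also need an edge that contains $v$ but \emph{not} $v_{m-1}$ (the Endpoints condition). Every ear hanging off $e'_m$ may contain $v_{m-1}$.

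Take the Boolean query
\[
Q() \gets A(a,v_1),\, A'(a),\, B(v_1,v_2),\, C(v_2,u,t),\, U(v_2,u),\, T(v_2,t).
\]
No attribute is unique, and the ears are $A',U,T$, so $Q'$ consists of $A,B,C$.
\begin{itemize}
\item In $Q'$ the longest chordless path is $\{v_1,v_2\}$ with edges $A,B,C$, of length $3$. The only edges containing $a$, $u$ or $t$ also contain $v_1$ resp.\ $v_2$.
\item In $Q$ the left end extends via $A'$ to $\{a,v_1,v_2\}$, of length $4$. The right end does not extend. $C$ does acquire ears through the new attributes $u,t$, exactly as your contrapositive predicts. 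But every edge containing $u$ or $t$ (namely $C$, $U$, $T$) also contains $v_2$.
\end{itemize}
So $L=L'+1$ and the parity flips. Since there are no q-chordless paths (the query is Boolean), $Q$ and $Q'$ both have height $2$ under the chordless-path definition.

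Hence the exact $+2$ shift is false. Under that definition this example even contradicts the statement itself: ``only if'' fails at $k=2$ and ``if'' fails at $k=3$. So no comparison of longest (q-)chordless paths can establish it.

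A minor point in Step~1: interior edges can be ears. For example, two relations with the same non-unique attributes $\{v_i,v_{i+1}\}$ anchor each other. Some relation containing $\{v_i,v_{i+1}\}$ always survives in the skeleton, though, so your contraction $L'\ge L-2$ is fine after substitution.

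\textbf{How the paper argues instead.} The paper's proof never touches chordless paths; it works with free-connex join trees.
\begin{itemize}
\item Leaves are ears whose parents are anchors. Deleting leaves turns a height-$k$ tree for $Q$ into a tree for $Q'$ of height at most $k-1$.
\item Hanging the removed ears under their anchors lifts a tree for $Q'$ to one for $Q$ with one extra level.
\item It then implicitly identifies ``height $k$'' with ``minimum free-connex join-tree height $k$'' (``By definition, every free-connex join tree of $Q$ has height at least $k$'').
\end{itemize}

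In that sense your example behaves as the lemma says. $Q'$ is not q-hierarchical, since $\E[v_1]$ and $\E[v_2]$ meet in $B$ without nesting. So $Q'$ has no height-$1$ tree, but it has one of height $2$ (root $B$ with children $A,C$). Consequently $Q$ is neither q-hierarchical nor weak-q-hierarchical. By the characterization via height-$2$ free-connex join trees recalled in the preliminaries, $Q$ admits no such tree of height $\le 2$; its minimum height is $3$.

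So the one-level shift you want holds for join-tree height, not for longest chordless paths. To salvage your strategy you would have to change which notion of height you work with.
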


\begin{proof}
    The lemma is immediate for the base case when $k = 2$, i.e., the weak-q-hierarchical queries, as their residual queries are q-hierarchical.  We now consider the general case $k > 2$.  By definition of residual query and ears, given a free-connex join tree $\T$, each leaf node corresponds to an ear, and its parent node is the anchor of that ear. Therefore, we deduce two important facts: (1) if $Q$ has a join tree of height $k$, then its residual query $Q'$ has a join tree of height at most $k-1$. This is because removing all leaf nodes (ears) from $\T$ yields a join tree with height $k-1$. If necessary, further removing any remaining ears from the resulting tree can only decrease the height further. (2) Consequently, if the residual query $Q'$ admits a free-connex join tree $\T'$ of height $k-1$, then we can construct a join tree for $Q$ of height at most $k$ by attaching each removed ear $R$ as a child of its anchor $R'$ in $\T'$. The resulting structure remains a valid free-connex join tree, satisfying the Connect, Cover, and Connex conditions without violating the Guard or Above conditions. Since adding these ears introduces only new leaves, the tree's height increases by 1; therefore, the reconstructed join tree for $Q$ has height at most $k$.

    We first prove the if direction. Suppose the residual query $Q'$ is a height $k-1$ query. Assume $Q$ admits a free-connex join tree $\T$ of height $\leq k-1$. Removing all ears from $\T$ yields a join tree for $Q'$ of height at most $k-2$, by observation (1) above. This contradicts the assumption that $Q'$ is a height $k-1$ query, since $Q'$ would then have a join tree of height smaller than $k-1$. Therefore, $Q$ cannot have any free-connex join tree of height less than $k$. By observation (2), on the other hand, we know how to construct a height-$k$ join tree for $Q$. Taken together, the minimum possible join tree height for $Q$ is $k$. In other words, $Q$ is a height $k$ query.

    For the “only-if” direction, assume $Q$ is a height $k$ query. By definition, every free-connex join tree of $Q$ has height at least $k$, and there is some join tree $\T$ with height $k$. Remove all the leaf nodes from this $\T$ to obtain a join tree $\T'$ with height $k-1$, and further removing ears from $\T'$ will only decrease the height, making $ Q'$'s height at most $k-1$. Now, suppose for contradiction that $Q'$ is not a height $k-1$ query. Then there exists a free-connex join tree for $Q'$ of height at most $k-2$. By attaching the previously removed ears to this shorter join tree using the construction from observation (2), we could obtain a join tree for $Q$ of height at most $k-1$. This would contradict the assumption that $Q$'s height is $k$. Therefore, $Q'$ must have minimal join tree height $k-1$, meaning $Q'$ is a height $k-1$ query. 
\end{proof}

Since every height $1$ query is q-hierarchical and with a join tree of height $1$ (Lemma~\ref{lem:height-hierarchical}), and let $Q'$ be a residual query of $Q$ by removing all ears from the query, and $T'$ be a height $k'$ free-connex join tree for $Q'$, a height $k'+1$ free-connex join tree for $Q$ can be constructed by adding each ear node as a leaf child under its anchor node and increasing the height of tree by $1$.  Therefore, we can construct a height $k$ free-connex join tree for a height $k$ query $Q$ by the following steps: (1) we recursively removing all ear nodes and get residual queries $Q_2, \cdots, Q_k$, where $Q_k$ has a height of $1$; (2) we find the height $1$ free-connex join tree $T_k$ for $Q_k$; (3) for $Q_{k-1}, \cdots, Q_2, Q$, we add all ears back to $T_k$ and obtain $T_{k-1}, \cdots, T_{2}, T$ with height $2, \cdots, k-1, k$. 

In addition, to show that $Q$ admits no height $k-1$ free-connex join tree, we prove that every chordless path must be realized within any free-connex join tree:

\begin{lemma}
\label{lem:treepath}
    Given a CQ $Q = (\V, \E, \bar{y})$, let $P \gets \{v_1, \cdots, v_{k-1}\}$ be a length $k$ chordless path on $Q$'s relational hypergraph $\H$, for any join tree $\T$ of $Q$, let $n_1$ be a node with $v_1 \in n_1$ while $v_i \notin n_1$ for all $i \in [2, k-1]$, and $n_l$ be a node with $v_{k-1} \in n_l$ while $v_i \notin n_l$ for all $i \in [1, k-2]$, then the path $P_\T$ between $n_1$ and $n_l$ on the join tree includes all attributes in $P$.  In addition, such a path has a length of at least $k$.  
\end{lemma}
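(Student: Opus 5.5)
The plan is to use the Connect property of the join tree $\T$ together with the No-shortcuts condition of the chordless path. Let $P_\T = (n_1 = m_0, m_1, \dots, m_r = n_l)$ be the unique tree path between $n_1$ and $n_l$. For each attribute $v_i \in P$, the set of nodes containing $v_i$ induces a connected subtree $\T_i$ of $\T$. By Adjacency, for each $i \in [1, k-2]$ some input relation contains both $v_i$ and $v_{i+1}$, and by Cover this relation is a node of $\T$. Hence $\T_i \cap \T_{i+1} \neq \emptyset$. By No shortcuts, no hyperedge contains both $v_i$ and $v_j$ with $|i-j|>1$. Since every generalized relation is a subset of some input relation's schema, no node of $\T$ contains both either. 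Thus $\T_i \cap \T_j = \emptyset$ for $|i-j|>1$.

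\textbf{Step 1 (all attributes appear on $P_\T$).} The union $\T_1 \cup \dots \cup \T_{k-1}$ is a connected subgraph of $\T$, since consecutive subtrees intersect. It contains $n_1 \in \T_1$ and $n_l \in \T_{k-1}$, so it contains the whole path $P_\T$. We then show by induction on $i$ that $P_\T$ meets every $\T_i$. Suppose $P_\T$ avoids $\T_i$ for some $1<i<k-1$. Removing $\T_i$ from $\T$ disconnects the tree, because $\T_i$ is a subtree and the parts $\T_1\cup\dots\cup\T_{i-1}$ and $\T_{i+1}\cup\dots\cup\T_{k-1}$ are disjoint from each other and attach to the rest of the tree only through $\T_i$. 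More concretely: any tree path from a node of $\T_{<i}$ to a node of $\T_{>i}$ must, within the union of the $\T_j$, cross from $\T_{i-1}$ to $\T_{i+1}$. Since these two are disjoint, and any node adjacent to both on the path lies in the union, the crossing must pass through $\T_i$. Making this crossing argument rigorous is the main obstacle. I would formalize it via a sequence $j(t)$, the index of a subtree containing $m_t$, which changes by at most one between tree-adjacent nodes. Adjacent nodes on the path share an attribute only if it lies in both, but both nodes may belong to different $\T_j$; I must argue that $m_t \in \T_a$ and $m_{t+1} \in \T_b$ with $|a-b|>1$ is impossible while $P_\T$ stays in the union. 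I would do this by taking the subtree $\T_{\le a}$ and noting that its boundary edge leads either into some $\T_j$ intersecting it, namely $j=a+1$ or $a-1$, or out of the union. Then $j(t)$ goes from $1$ to $k-1$ in steps of $\pm1$ and hits every index. The endpoint conditions on $n_1$ and $n_l$ guarantee starting index exactly $1$ and ending index exactly $k-1$.

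\textbf{Step 2 (length bound).} The nodes of $P_\T$ containing $v_i$ form a contiguous segment, since $\T_i$ is connected and its intersection with a path is a subpath. By Step 1, the nonempty segments for $i=1,\dots,k-1$ occur in order, with only consecutive ones overlapping. The endpoint hyperedges $e_1$ and $e_k$ (nodes $n_1$ and $n_l$) contribute extra nodes not containing $v_2$ resp. $v_{k-2}$. Counting the distinct input-relation nodes witnessing each adjacency $(v_i,v_{i+1})$ gives $k-2$ of them. These are pairwise distinct by No shortcuts, and they are distinct from $n_1$ and $n_l$, so the total is $k$ nodes. I expect to need care that the witnessing relation for $(v_i,v_{i+1})$ actually lies on $P_\T$. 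I would handle this by replacing it with the node of $P_\T$ in $\T_i\cap\T_{i+1}$ guaranteed by the contiguity argument, which is distinct from the others by the disjointness of non-consecutive $\T_j$.
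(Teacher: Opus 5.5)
Your setup is correct: consecutive $\T_i$ intersect by Adjacency and Cover, non-consecutive ones are disjoint by No shortcuts, and $P_\T$ lies in their union. Your Step~2 count is also the paper's count. The paper proves the covering claim by induction on $k$, with an informal ``otherwise the tree would contain a cycle'' step; you argue directly along $P_\T$, which is a fine organization. However, the crux you flag is not proved, and the formalization you propose rests on a false claim.

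If $j(t)$ is an arbitrary index with $m_t\in\T_{j(t)}$, it can jump by two. Take $e_1=\{v_1\}$, $e_2=\{v_1,v_2\}$, $e_3=\{v_2,v_3\}$, $e_4=\{v_3\}$, with the join tree being the path $e_1,e_2,e_3,e_4$. The adjacent nodes $e_2\in\T_1\cap\T_2$ and $e_3\in\T_2\cap\T_3$ may be labelled $1$ and $3$. So the statement ``$m_t\in\T_a$, $m_{t+1}\in\T_b$ with $|a-b|>1$ is impossible'' cannot be proved.

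Your earlier phrasing has the same hole. Disjointness of $\T_{<i}$ and $\T_{>i}$ does not by itself exclude a tree edge joining a node of one to a node of the other, and that is exactly what must be excluded. The remark about boundary edges of $\T_{\le a}$ asserts this rather than proving it. Likewise, in Step~2, contiguity of the segments $P_\T\cap\T_i$ does not give a node of $P_\T$ in $\T_i\cap\T_{i+1}$: adjacent path nodes $m\in\T_i\setminus\T_{i+1}$ and $m'\in\T_{i+1}\setminus\T_i$ are consistent with contiguity.

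The missing ingredient is an edge-cut argument.
\begin{itemize}
\item Remove an edge $\{m,m'\}$ of $P_\T$, splitting $\T$ into $C\ni m$ and $C'\ni m'$. A connected $\T_j$ that does not contain both $m$ and $m'$ lies entirely in $C$ or entirely in $C'$.
\item Suppose $m\in\T_a$ and $m'\in\T_b$ with $a<b$. Then some $\T_c$ with $a\le c\le b$ contains both $m$ and $m'$. Otherwise $\T_a\subseteq C$, and since consecutive members of the chain intersect, $\T_{a+1},\dots,\T_b\subseteq C$, contradicting $m'\in\T_b$.
\item No shortcuts then gives $|c-a|\le 1$ and $|c-b|\le 1$. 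Hence $b\le a+2$, and $b=a+2$ only if $m,m'\in\T_{a+1}$. If $b=a+1$, one of $m,m'$ lies in $\T_a\cap\T_{a+1}$.
\end{itemize}

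Now set $j(t)=\max\{j: m_t\in\T_j\}$. It equals $1$ at $n_1$ and $k-1$ at $n_l$ by the hypotheses on these nodes. By the first consequence it changes by at most one per step, so every index is hit, which is Step~1. Applying the second consequence at a step where $j$ goes from $i$ to $i+1$ gives the node of $P_\T$ in $\T_i\cap\T_{i+1}$ that your count needs. Equivalently, apply the Helly property of subtrees of a tree to $P_\T$, $\T_i$, $\T_{i+1}$. With that in place, your distinctness argument yields the $k$ nodes.
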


\begin{proof}[Proof of Lemma~\ref{lem:treepath}]
    We start by considering a length-2 path $v_1, v_2$.  Because of the definition, we know that there exists $e_1$, $e_2$, $e_3$ such that $v_1 \in e_1$ and $v_2 \notin e_1$, $v_1, v_2 \in e_2$, and $v_2 \in e_3$ and $v_1 \notin e_3$.  Because the Connect property of a join tree, let $T_1$ be the subtree induced by $v_1$, and $T_2$ be induced by $v_2$, since $e_2$ belongs to both $T_1$ and $T_2$, therefore, all nodes with $v_1$ or $v_2$ are also connected on $\T$.  We let $e_1$ be $n_1$ and $e_3$ be $n_2$; there must exist a node $n'$ on the path between $n_1$ and $n_2$ with $v_1, v_2 \in n'$, otherwise the path contains an addition node $v_3$ in between $v_1$ and $v_2$, combining the connect condition with $e_2$ creates a cycle, which violates the fact that $\T$ is a tree.

    On the other hand, assuming for a length $k-1$ chordless path, the lemma holds, then for a length $k$ chordless path, we can adopt a similar idea as for the length-2 case to show the connectness of $v_1, \cdots, v_{k-1}$-induced subtree.  Therefore, let $n_1$ and $n_{k}$ be two endpoints, the path between $n_1$ and $n_k$ must cover $v_1, \cdots, v_{k-1}$, otherwise a cycle would be detected on the tree.  By applying the induction, we can prove the lemma.

    Because there is no shortcut, covering $v_1, \cdots, v_{k-1}$ requires at least $k-2$ nodes, and there are two endpoints for $v_1$ and $v_{k-1}$, making such a path have a length of at least $k$.
\end{proof}

\begin{lemma}
\label{lem:heightlb}
    Given a CQ $Q=(\V, \E, \bar{y})$, let $P \gets \{v_1, \cdots, v_{k-1}\}$ be a length $k$ chordless path on $Q$'s relational hypergraph $\H$, for any join tree $\T$ of $Q$, $\T$ has a height at least $\lceil\frac{k}{2}\rceil$.
\end{lemma}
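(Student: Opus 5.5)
The plan is to combine Lemma~\ref{lem:treepath} with a simple counting argument on rooted trees. Fix any join tree $\T$ of $Q$ and a length $k$ chordless path $P=\{v_1,\dots,v_{k-1}\}$ with an edge sequence $P_e=\{e_1,\dots,e_k\}$. The endpoint hyperedges $e_1$ and $e_k$ are input relations, so by the \textbf{Cover} property they appear as nodes $n_1$ and $n_l$ of $\T$. They satisfy the hypotheses of Lemma~\ref{lem:treepath}: $e_1$ contains $v_1$, and by the ``No shortcuts'' and ``Endpoints'' conditions it contains no other $v_i$; symmetrically for $e_k$. Lemma~\ref{lem:treepath} then says the tree path $P_\T$ between $n_1$ and $n_l$ covers every attribute of $P$ and has length at least $k$.

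The next step is to strengthen this to a statement about input relation nodes, because height counts only input relations. Along $P_\T$, consecutive chord attributes $v_i,v_{i+1}$ must occur together in some node; otherwise the \textbf{Connect} property would force the subtrees of $v_i$ and $v_{i+1}$ to be separated, while $e_{i+1}$ contains both. By ``No shortcuts'', each node of $P_\T$ contains at most two consecutive $v_i$'s, so covering all $k-1$ attributes plus the two endpoints requires at least $k$ distinct nodes. I then need these to be input relations rather than generalized ones. A generalized node $[\bar x]$ is a subset of some $R_i(\bar x_i)$. Under the \textbf{Guard} and \textbf{Above} structure, every generalized node lies above input relations that contain all of its attributes. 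Therefore I will charge each generalized node on $P_\T$ that carries chord attributes to a descendant input relation carrying the same attributes, and argue that distinct attribute sets (pairs $\{v_i,v_{i+1}\}$) give distinct input relations. Those input relations lie in the subtree below the path. This charging step is the main obstacle. For a plain join tree, with no generalized nodes, it is vacuous, so I would first present that case and then handle the generalized case by replacing $P_\T$ with a path in $\T$ whose input-relation count is still at least $k$.

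Finally, I would take the topmost node $r$ of $P_\T$, which is the lowest common ancestor of $n_1$ and $n_l$. The path splits into two root-ward chains, from $n_1$ up to $r$ and from $n_l$ up to $r$. Together these contain at least $k$ input relations, with $r$ counted once. Hence one chain contains at least $\lceil k/2\rceil$ input relations. Extending that chain from $r$ up to the root of $\T$, and from its lower endpoint down to a leaf, only adds nodes. This gives a leaf-to-root path with at least $\lceil k/2\rceil$ input relations, so the height of $\T$ is at least $\lceil k/2\rceil$.
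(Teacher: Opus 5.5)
Your argument for join trees whose nodes are all input relations is correct and is the paper's proof made precise. The paper also takes the tree path of length at least $k$ from Lemma~\ref{lem:treepath} and observes that, at best, it is split evenly between two root-ward halves. That is exactly your split at the topmost node followed by extension to a leaf-to-root path.

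The paper never addresses generalized nodes, and the step you flag as the main obstacle cannot be finished as you set it up. Guard and Above are properties of free-connex join trees, not hypotheses of the lemma. For an arbitrary GJT (Cover and Connect only) the statement is in fact false. Replace every node $R(\bar x)$ of a traditional join tree by $[\bar x]$ and hang $R$ below it as a leaf. This gives a valid GJT in which every leaf-to-root path contains exactly one input relation. So for $R_1(x_1),R_2(x_1,x_2),R_3(x_2,x_3),R_4(x_3)$, which has the length-$4$ chordless path $\{x_1,x_2,x_3\}$, you get height $1<2$. You must therefore either read ``join tree'' as a traditional join tree, in which case your plain case is already the whole proof, or add Guard as a hypothesis. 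Guard holds for the free-connex join trees the paper later applies the lemma to.

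Under Guard the charging is short and stays on $P_\T$. Charging to input relations ``in the subtree below the path'' would not help, since height is measured along a single root-ward chain.

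\textbf{Pair-nodes on the path.} Your stated reason why $P_\T$ has a node containing $\{v_i,v_{i+1}\}$ only shows that the two attribute subtrees meet somewhere, which is trivial at $e_{i+1}$. The correct reason is as follows. The nodes meeting $\{v_1,\dots,v_i\}$ and the nodes meeting $\{v_{i+1},\dots,v_{k-1}\}$ form two connected subtrees. By No shortcuts, which applies to generalized nodes too since they are subsets of relations, these intersect only in nodes containing both $v_i$ and $v_{i+1}$. The tree path from $n_1$ to $n_l$ must pass through this intersection.

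\textbf{Charging under Guard.} If such a node $g$ is generalized, Guard puts $g$ inside its child on $P_\T$ (towards $n_1$ or $n_l$). Iterating, you reach an input relation on $P_\T$ containing $\{v_i,v_{i+1}\}$. It is neither $n_1$ nor $n_l$, because these meet $P$ in a single vertex. Distinct pairs yield distinct relations by No shortcuts. So $P_\T$ itself carries at least $k$ input relations, and your final split goes through unchanged.
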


\begin{proof}[Proof of Lemma~\ref{lem:heightlb}]
We can find a corresponding path $P_\T$ on any join tree $\T$ for the chordless path $P$, which has a length of at least $k$.  Based on the definition of height, we can minimize the height of the tree by placing the two endpoints to be leaf nodes of $\T$ and requiring the path to pass the root node.  Assuming the distance between the two leaf nodes and the root node is balanced, the height would be at least $\lceil\frac{k}{2}\rceil$.
\end{proof}

Therefore, for a Boolean or full acyclic query of height $k$, all join trees for that query have a height of at least $k$.  When considering non-full and non-Boolean free-connex queries, we can then demonstrate the connection between a q-chordless path and the height of a join tree.
\begin{lemma}
\label{lem:heightqlb}
    Given a free-connex $Q=(\V, \E, \bar{y})$, let $P \gets \{v_1, \cdots, v_{k-1}\}$ be a length $k$ q-chordless path on $Q$'s relational hypergraph $\H$, for any free-connex join tree $\T$ of $Q$, $\T$ has a height at least $k$.
\end{lemma}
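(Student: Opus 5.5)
The plan is to work directly with the free-connex join tree $\T$, not its undirected path length. I will use Lemma~\ref{lem:treepath} together with the \textbf{Connex}, \textbf{Above} and \textbf{Guard} properties to show that the nodes realizing $P$ must lie on a single root-ward chain of length $k$. Let $R(\bar{x})$ be the output endpoint hyperedge of the q-chordless path, so $\bar{x}\cap P=\{v_1\}$ and $\bar{x}\cap\bar{y}\neq\emptyset$. Let $e_k$ be the other endpoint hyperedge, which contains $v_{k-1}$ but not $v_{k-2}$. The case $k=1$ is trivial, since every tree has height at least $1$. For $k\ge 2$, I take $n_1$ to be the input-relation node for $R$ and $n_l$ the node for $e_k$. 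Lemma~\ref{lem:treepath} then says the tree path $P_\T$ between them passes through nodes covering $v_1,\dots,v_{k-1}$ in order, with at least $k$ nodes. By the no-shortcut condition, these are at least $k$ distinct input relations once generalized nodes are discarded, since each generalized node is a subset of some input relation and so cannot create a shortcut either.

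The key step is to show that $P_\T$ is essentially monotone toward a leaf, i.e.\ $n_l$ is a descendant of a node near $n_1$. I would argue this with the connex set $\V_{\textsf{con}}(\T)$. Every output attribute, in particular one in $\bar{x}\cap\bar{y}$, is covered by connex nodes, so some connex node $c$ contains $u\in\bar{x}\cap\bar{y}$. By \textbf{Connect}, the subtree of $u$ joins $c$ and $n_1$.

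Now suppose the topmost node of $P_\T$ (its least common ancestor) lies strictly between the two ends, so that $P_\T$ climbs from $n_1$ and descends to $n_l$. I claim the attributes $v_1,\dots$ would then have to pass through the parent–child edge linking the subtree below the connex part to it. By Connex (ii), that edge carries only output attributes, but all $v_i\notin\bar{y}$. Concretely, the non-output variables of $P$ cannot cross from a non-connex subtree into the connex subtree, or into another non-connex subtree. Hence the whole path $P_\T$, together with the node where $u$ meets $v_1$, sits inside one non-connex subtree hanging below the connex part. Moreover, the node containing $u$ must be on the root side of $P_\T$. So, orienting by the root, $n_1$ (or a node containing $v_1$ and $u$) is an ancestor-side end and the chain descends to a leaf below $n_l$.

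Counting then gives the bound. The chain from the root down through the $k$ pairwise distinct input relations realizing $e$-nodes for $v_1,\dots,v_{k-1}$ and the endpoint has at least $k$ input relations. By \textbf{Above}, generalized nodes sit only above input relations, so they do not break this count. Therefore the height is at least $k$.

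I expect the main obstacle to be the middle step: rigorously ruling out a ``V-shaped'' $P_\T$ whose apex is below the connex part but is a non-connex input relation containing two of the $v_i$ together with output attributes. I would first try to show via \textbf{Guard} and \textbf{Connex} (ii) that the attribute $u$ reaching $n_1$ forces every node on the $u$-path to be an ancestor of $n_1$. Combining this with the no-shortcut condition should then show that the apex must coincide with the node covering $v_1$. If this fails, a fallback is to handle the apex case by showing the two descending branches each still contribute enough input relations.
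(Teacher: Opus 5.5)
\textbf{There is a genuine gap, and it is exactly at the step you flagged.}

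Your exclusion of a V-shaped $P_\T$ reads Connex~(ii) as forbidding non-output attributes on the edge that attaches a non-connex subtree to the connex part. But (ii) only constrains edges whose lower endpoint is itself connex. What it does give is weaker: a non-output attribute lies in at most one connex node, and that node is the top of its occurrence subtree. From there the attribute may pass into several non-connex children, so $P_\T$ can enter the connex part from one hanging subtree and leave into a sibling one.

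Concretely, take $Q(u)\gets R_1(u,v_1),R_2(u,v_1,v_2),R_3(v_2)$.
\begin{itemize}
\item $P=\{v_1,v_2\}$ with edge sequence $R_1,R_2,R_3$ is a length-$3$ q-chordless path with output endpoint $R_1$: we have $R_1\cap P=\{v_1\}$ and $u\in\bar{y}$.
\item The tree with root $R_2$, leaf children $R_1,R_3$, and connex set $\{R_2\}$ satisfies Cover, Connect, Above, Guard and Connex.
\item Its $P_\T=R_1,R_2,R_3$ is V-shaped through the connex apex, and its height is $2<3$.
\end{itemize}
Your fallbacks also fail on this example. The only node above $n_l$ containing $u$ and $v_1$ is $R_2$, which also contains $v_2$, so Lemma~\ref{lem:treepath} cannot be restarted there. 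Each branch adds only one input relation below the apex.

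\textbf{The statement itself needs an extra hypothesis.} This example meets the lemma's hypotheses as literally stated, so no argument can prove it as written---neither yours nor the paper's induction over residual queries. Your direct tree argument does go through if you add, say, that no hyperedge other than $R$ contains $u$ together with a vertex of $P$. This excludes the example, since $u\in R_2$. The argument splits into two cases.
\begin{itemize}
\item \emph{Case $R\in\V_{\textsf{con}}(\T)$.} Then $R$ is the only connex node containing $v_1$. Since $R$ misses $v_2,\dots,v_{k-1}$, every node containing some $v_i$ lies strictly below $R$, so $P_\T$ descends from $R$. This case needs no extra assumption.
\item \emph{Case $R\notin\V_{\textsf{con}}(\T)$, and suppose the apex $a$ of $P_\T$ differs from $R$.} The parent of $R$ lies on $P_\T$, so it contains $v_1$. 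It also lies on the path from $R$ to its lowest connex ancestor, so it contains $u$. By the extra assumption it is therefore a generalized subset of $R$. Above and Guard then give $a\subseteq R$, and since $a$ lies on $P_\T$ it contains $v_1$. Connex~(ii) keeps $a$ at or below that connex ancestor: the nodes between $R$ and $a$ all contain $v_1$, and two adjacent connex ones would share it. Hence $u\in a$. Guard now pushes $\{u,v_1\}$ down the other branch to its first input relation. That relation is a second hyperedge containing $u$ and $v_1$, which contradicts the assumption.
\end{itemize}
So $a=R$ in all cases. Above then makes the descending chain consist of input relations only, and Lemma~\ref{lem:treepath} supplies at least $k$ of them.
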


\begin{proof}[Proof of Lemma~\ref{lem:heightqlb}]
\citet{pods25} shows that for a height $2$ query with $\bar{y}$ to be non-empty or non-full, it is possible there is no length $3$ or length $4$ path, but contains a length $2$ q-chordless path $P = \{x_1\}$ with $x_1 \notin \bar{y}$, $R_1(\bar{x}_1), R_2(\bar{x}_2)$ be two relations with $x_1 \in \bar{x}_1 \cap \bar{x}_2$, and $\exists x \in \bar{x}_1 \cap \bar{y}$.  In that case, $R_1$ must be the parent node of $R_2$ on any free-connex join tree, otherwise violating the Connex Property of the join tree.  On the other hand, if for a height $k$ query $Q$, its residual query $Q'$ satisfies $\H[Q']$ contains such a length $k-1$ q-chordless path, by definition of q-chordless path, let $R_{k-1}(\bar{x}_{k-1})$ be the relation with only $x_{k-2}$ in the residual query $Q'$, since $R_{k-1}$ is not an ear in $Q$ and $R_{k-1}$ contains only non-output attributes, there exists an ear $R_k(\bar{x}_k)$ in $Q$ with $R_{k-1}$ be its anchor, and $\bar{x}_k \cap \bar{x}_{k-1} \cap P = \emptyset$ and $\bar{x}_k \cap \bar{x}_{k-1} \neq \emptyset$.  Therefore, let $x_{k} \in \bar{x}_k \cap \bar{x}_{k-1}$ be any join attributes between $R_{k-1}$ and $R_{k}$, we can find a length $k$ q-chordless path with $P' \gets P \cup \{x_k\}$, which completes the proof.    
\end{proof}

Combining the two lemmas, we can show that a height $k$ query cannot admit a height $k-1$ free-connex join tree.
 




\section{Proof of Lemma~\ref{lem:hierarchical}}

For a q-hierarchical query $Q$, we can find that there is no $\V_c$ with $R_1, R_2 \in \E_c$ such that $\bar{x}_1 \cap \bar{x}_2 \notin \V_c$; otherwise the query won't be q-hierarchical.   

Let $Q = (\V, \E, \bar{y})$ be a CQ, if any $\V_c \in \V$ has $\dim(\V_c) > 1$, consider the $\mathcal{N}$ and $\V_k(\mathcal{N})$ that would maximize the dimension of $\V_c$, if (1) $\V_c \cap \bar{y} \neq \emptyset$ or $\V_k(\N) \subseteq \bar{y}$, there exists two $R', R'' \in \mathcal{N}$, such that $x_1 \in \bar{x}'$ and $x_2 \in \bar{x}''$.  If there exists $R \in \E_c$ with $x_1, x_2 \in \bar{x}$, then $\E[x_1] \cap \E[x_2] \neq \emptyset$, while $\E[x_1] \nsubseteq \E[x_2]$ and $\E[x_2] \nsubseteq \E[x_1]$, making the query non-q-hierarchical; otherwise, let $R_1, R_2 \in \E_c$ with $x_1 \in \bar{x}_1$ and $x_2 \in \bar{x}_2$, there must be $x_3 \in \V_c \cap \bar{x}_1 \cap \bar{x}_2$, and $x_1/x_2$ and $x_3$ will violate the q-hierarchical condition.   

On the other hand, if (2) $\V_c$ are all non-output attributes, and $\V_k(\N) \setminus \bar{y} \neq \emptyset$, it is possible that there is a unique output attribute $x \in \V_\star$ that will contribute to the dimension.  In that case, if $\E_c$ contains only one relation, we can find a non-output attribute $x' \in \V_k(\N)$, we know $\E[x] \subsetneq \E[x']$, and $x \in \bar{y}$, but $x' \notin \bar{y}$, making the query non-q-hierarchical; otherwise, if $\E_c$ contains more than one relation, we can find a non-output attribute $x' \in \V_c$, and $\E[x] \subsetneq \E[x']$, and $x \in \bar{y}$, but $x' \notin \bar{y}$, making the query non-q-hierarchical.

For the only-if direction, since the query contains only dimension-$1$ $\V_c$, which holds for $\V_c = \emptyset$ and all $\E_c = \{R\}$, we can show that if, for all $x_1, x_2$ in $\V$, $\E[x_1] \cap \E[x_2] = \emptyset$ always holds, then all relations in the query are disconnected, and the query is q-hierarchical.  If $\E[x_1] \cap \E[x_2] \neq \emptyset$, then $\E[x_1] \subseteq \E[x_2]$ or $\E[x_2] \subseteq \E[x_1]$, otherwise let $R \in \E[x_1] \cap \E[x_2]$, we can show that $\dim(\{\emptyset\}) \ge 2$ for $\E_c = \{R\}$, with a $\mathcal{N} \gets \{R', R''\}$ with $R' \in \E[x_1] \setminus \E[x_2]$ and $R'' \in \E[x_2] \setminus \E[x_1]$ and use $x_1$ and $x_2$ be their distinct keys.  In addition, if there exists $\E[x_1] \subsetneq \E[x_2]$, and $x_1 \in \bar{y}$, then it must have $x_2 \in \bar{y}$, otherwise for any $R$ with both $x_1, x_2 \in \bar{x}$, when selecting $\{R'\}$ to be the distinct neighbour set $\mathcal{N}$ for $\V_c = \{\emptyset\}$ and $\E_c = \{R\}$, with $x_2 \in \bar{x}'$, $\V_k(\mathcal{N}) \nsubseteq \bar{y}$ because of $x_2$, and $x_1$ is a live output attribute, making the dimension of $\V_c = \{\emptyset\}$ and $\E_c = \{R\}$ to be at least $2$.  Therefore, if the dimension of $Q$ is $1$, then the query is q-hierarchical.

\section{Additional Running Examples}
\label{app:example}

\begin{example}
\begin{figure}[H]
\resizebox{0.8\linewidth}{!}{
\begin{tikzpicture}[font=\Huge\bfseries,
    solid_node/.style={
        circle, 
        draw=black, 
        fill=black, 
        inner sep=0pt, 
        minimum size=7pt,        
        text=black,
        label distance=6pt 
    },
    hollow_node/.style={
        circle, 
        draw=black, 
        fill=white, 
        inner sep=0pt, 
        minimum size=7pt,        
        label distance=6pt
    },
    main_ellipse/.style={
        ellipse,
        draw,
        thick,
        fill opacity=0.15,
        inner sep=0pt,
        minimum height=2.4cm, 
    },
    branch_ellipse/.style={
        ellipse,
        draw,
        thick,
        fill opacity=0.15,
        inner sep=3pt,
        minimum height=1.6cm, 
    },
    rel_label/.style={
        fill=white,
        fill opacity=0.85,
        text opacity=1,
        inner sep=2pt,
        rounded corners=2pt
    },
    unary_ellipse/.style={
        circle,
        draw,
        thick,
        fill opacity=0.15,
        inner sep=18pt, 
    }
]

    
    \coordinate (C1) at (0, 6);      
    \coordinate (C3) at (-5, -3);    
    \coordinate (C2) at (5, -3);     

    \node[hollow_node, label=90:$x_1$] (x1) at (C1) {};
    \node[hollow_node, label=225:$x_3$] (x3) at (C3) {};
    \node[hollow_node, label=315:$x_2$] (x2) at (C2) {};


    \node[solid_node, label=45:$x_4$] (x4) at ($(x1)!0.33!(x2)$) {};
    \node[solid_node, label=45:$x_5$] (x5) at ($(x1)!0.66!(x2)$) {};

    \node[solid_node, label=135:$x_{15}$] (x15) at ($(x1)!0.33!(x3)$) {};
    \node[hollow_node, label=135:$x_{14}$] (x14) at ($(x1)!0.66!(x3)$) {};

    \node[solid_node, label=270:$x_{12}$] (x12) at ($(x3)!0.33!(x2)$) {};
    \node[solid_node, label=270:$x_{13}$] (x13) at ($(x3)!0.66!(x2)$) {};


    \node[hollow_node, label=90:$x_6$] (x6) at ($(x4)+(3.0, 0)$) {};
    \node[hollow_node, label=90:$x_8$] (x8) at ($(x6)+(3.0, 0)$) {};
    \node[hollow_node, label=90:$x_{10}$] (x10) at ($(x8)+(3.0, 0)$) {};

    \node[hollow_node, label=270:$x_7$] (x7) at ($(x5)+(3.0, -1.0)$) {};
    \node[hollow_node, label=270:$x_9$] (x9) at ($(x7)+(3.0, -1.0)$) {};
    \node[hollow_node, label=270:$x_{11}$] (x11) at ($(x9)+(3.0, -1.0)$) {};

    \node[solid_node, label=0:$x_{17}$] (x17) at ($(x13)+(-0.5, -2.0)$) {};
    \node[hollow_node, label=0:$x_{18}$] (x18) at ($(x17)+(-0.5, -2.0)$) {};
    \node[hollow_node, label=0:$x_{19}$] (x19) at ($(x18)+(-0.5, -2.0)$) {};

    \node[hollow_node, label=90:$x_{16}$] (x16) at ($(x14)+(-3.0, 0.5)$) {};


        
        \node[main_ellipse, fill=red, rotate fit=-60, fit=(x1) (x2), 
              label={[text=red!80!black, rel_label]0:$R_1$}] {};
        
        \node[main_ellipse, fill=blue, fit=(x3) (x2), 
              label={[text=blue!80!black, rel_label]270:$R_2$}] {};
        
        \node[main_ellipse, fill=green!80!black, rotate fit=60, fit=(x1) (x3), 
              label={[text=green!60!black, rel_label]180:$R_3$}] {};


        \node[branch_ellipse, fill=orange, rotate fit=0, fit=(x4) (x6), 
              label={[text=orange!90!black, rel_label]90:$R_4$}] {};
        \node[branch_ellipse, fill=violet, rotate fit=0, fit=(x6) (x8), 
              label={[text=violet, rel_label]90:$R_5$}] {};
        \node[branch_ellipse, fill=cyan, rotate fit=0, fit=(x8) (x10), 
              label={[text=cyan!80!black, rel_label]90:$R_6$}] {};

        \node[unary_ellipse, fill=yellow!90!black, fit=(x10), 
              label={[text=yellow!60!black, rel_label]0:$R_{13}$}] {};

        \node[branch_ellipse, fill=magenta, rotate fit=-18, fit=(x5) (x7), 
              label={[text=magenta!80!black, rel_label]90:$R_7$}] {};
        \node[branch_ellipse, fill=brown, rotate fit=-18, fit=(x7) (x9), 
              label={[text=brown, rel_label]90:$R_8$}] {};
        \node[branch_ellipse, fill=teal, rotate fit=-18, fit=(x9) (x11), 
              label={[text=teal, rel_label]90:$R_9$}] {};

        \node[unary_ellipse, fill=blue!50!cyan, fit=(x11), 
              label={[text=blue!50!cyan, rel_label]0:$R_{14}$}] {};

        \node[branch_ellipse, fill=olive, rotate fit=76, fit=(x13) (x18), 
              label={[text=olive, rel_label]270:$R_{10}$}] {};
        \node[branch_ellipse, fill=lime!80!black, rotate fit=76, fit=(x18) (x19), 
              label={[text=lime!80!black, rel_label]180:$R_{11}$}] {};

        \node[branch_ellipse, fill=purple, rotate fit=-10, fit=(x14) (x16), 
              label={[text=purple, rel_label]270:$R_{12}$}] {};


\end{tikzpicture}
}
\caption{Hypergraph for $Q$, where solid dots represent the output attributes.}
\label{fig:graph2}
\end{figure}
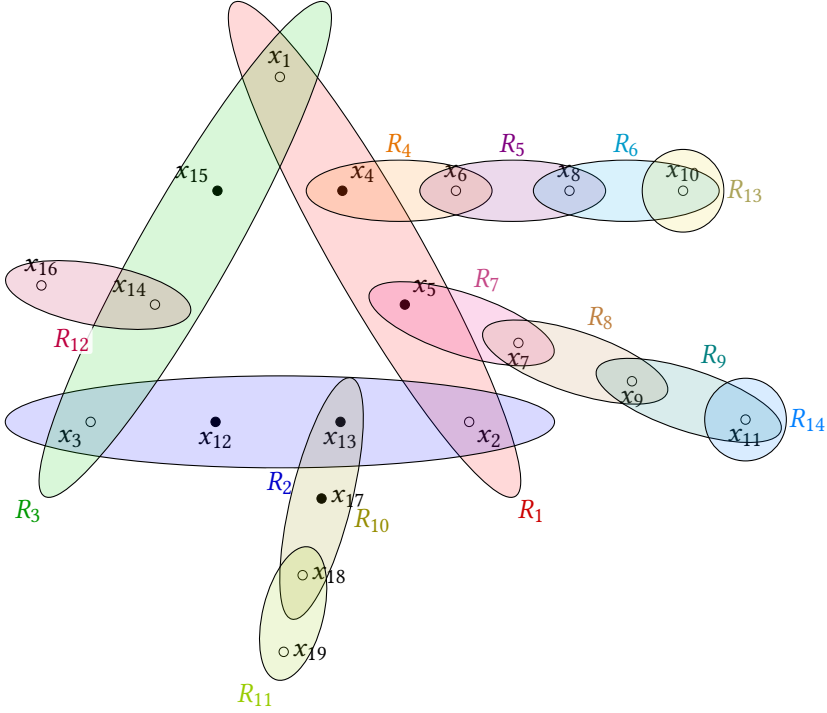
Consider the query:
\[
    \begin{aligned}
    Q(x_4, x_5, x_{12}, x_{13}, x_{15}, x_{17}) \gets & R_1(x_1, x_4, x_5, x_2), R_2(x_3, x_{12}, x_{13}, x_{2}), R_3(x_1, x_{15}, x_{14}, x_3),  \\ & R_4(x_4, x_{6}), R_5(x_{6}, x_8), R_6(x_8, x_{10}), R_7(x_5, x_7), R_8(x_7, x_9), R_9(x_9, x_{11}), 
    \\ &  R_{10}(x_{13}, x_{17}, x_{18}), R_{11}(x_{18}, x_{19}), R_{12}(x_{14}, x_{16}), R_{13}(x_{10}), R_{14}(x_{11})
    \end{aligned}
\]
Figure~\ref{fig:graph2} illustrates its relational hypergraph.  The query is a height $5$ query, with the length-$9$ chordless path $x_{11}, x_9, x_7, x_5, x_4, x_6, x_8, x_{10}$, or a length-$5$ q-chordless path $x_4, x_6, x_8, x_{10}$.  However, $x_{11}, x_9, x_7, x_5, x_2, x_3, x_1, x_4, x_6, x_8, x_{10}$ is not a chordless path due to the shortcut between $x_4$ and $x_5$.  

When considering the dimension, the query is a dimension $6$ query, with $\V_c = \{x_1, x_2, x_3\}$, $\E_c = \{R_1, R_2, R_3\}$, $\mathcal{N} = \{R_{12}, R_{10}, R_{4}, R_{7}\}$, $\V_k(N) = \{x_4, x_5, x_{13}, x_{14}\}$ and $\V_\star = \{x_4, x_5, x_{12}, x_{13}, x_{14},  x_{15}\}$.  All vertices in $\V_\star$ contribute to the dimension, as $\V_\star \setminus \V_k(N) = \{x_{12}, x_{15}\} \subseteq \bar{y}$ and $\V_c \cap \bar{y} = \emptyset$.  We cannot obtain a larger dimension by setting $\V_c = \{x_1, x_2, x_3, x_{13}\}$, $\E_c = \{R_1, R_2, R_3, R_{10}\}$ and $\V_\star = \{x_4, x_5, x_{12}, x_{14},  x_{15}, x_{17}, x_{18}\}$. Although the new $\V_\star$ contains more attributes, since $x_{13} \in \bar{y}$, we can only select $\V_k(\mathcal{N}) = \{x_4, x_5, x_{14}, x_{18}\}$ with $\mathcal{N} = \{R_4, R_7, R_{12}, R_{11}\}$ to count toward the dimension and $\dim(\V_c) = 4$ in this case.

When considering static database, the query can be evaluated in $O(|D|^{1.5}+|Q(D)|)$ by replacing $\{R_1, R_2, R_3\}$ with a bag  $B_1(x_1, x_2, x_3, x_4, x_5, x_{12}, x_{13}, x_{14}, x_{15}) \gets R_1, R_2, R_3$ and thus transforming the query into free-connex.  The running time suggests for this query, the width $\omega(Q) \le 1.5$, and the lower bound for dynamic evaluation from the width would be $\Omega(|D|^{\omega(Q)-1}) = \Omega(|D|^{0.5})$, which is dominated by the bound from the dimension of the query.   
\end{example}

\begin{example}[Running Example for Theorem~\ref{thm:dimension}]
\label{ex:4}
    Consider $Q_1$ from Example~\ref{ex:1}, we now $\V_c = \{x_5\}$ with $\E_c = \{R_3, R_7\}$ to simulate an OuMv$_4$ problem.  We select $\mathcal{N} \gets \{R_2, R_4, R_{10}, R_8\}$ with the distinct join keys $x_3, x_4, x_9, x_6$ (See Figure~\ref{fig:bipartite} for the illustration).   Given an OuMv$_4$ instance, let $M(a_1, a_2, a_3, a_4, l)$ be the sparse form of the tensor, with $l$ being the tuple identifier.  We create the update sequence to be: (1) We insert a special tuple $(\perp)$ into $R_6$ and $(\perp, \perp)$ into $R_1$ and $R_5$; (2) We insert all tuples in $\pi_{a_1, a_2, l, a_3} M$ into $R_3$, with $\{a_1 \rightarrow x_3, a_2 \rightarrow x_4, l \rightarrow x_5, a_3 \rightarrow x_6\}$; (3) We insert all tuples in $\pi_{a_2, l, a_4} M$ into $R_7$, with $\{a_2 \rightarrow x_4, l \rightarrow x_5, a_4 \rightarrow x_9\}$; (4) We insert all tuples in $\pi_{a_1, a_3} M$ into $R_9$ with $\{a_1 \rightarrow x_3, a_3 \rightarrow x_6\}$.  After Step (4), we initialize the tensor and can start handling updates on each vector: (5) For every update $(t)$ on $u_3$ $u_4$, we make the same update on $R_8$, $R_{10}$, respectively; (6) For every update $(t)$ on $u_2$, we make the update $(t, \perp)$ on $R_4$; (7) For every update $(t)$ on $u_1$, we make the update $(\perp, t)$ on $R_2$; (8) After updating $i$-th rounds of vectors, we enumerate $Q_1$.  If there exists any query result, the intersection between $M$ and $u_1^{(i)} \times u_2^{(i)} \times u_3^{(i)} \times u_4^{(i)}$ is non-empty.
        

\end{example}

\begin{example}[Maintaining a Star Query]
Consider the dimension-$4$ star query:
    \[
        Q(x_2, x_3, x_4) \gets R_1(x_1, x_2, x_3, x_4), R_2(x_1), R_3(x_2), R_4(x_3).
    \]
Note that $x_4$ is an output attribute in $R_1$ that does not join with any satellite, but the attribute $x_1 \in \V_k(\mathcal{N})$ is a non-output attribute; thus, we can take $x_4$ into consideration for the dimension. 

\paragraph{Database Instance ($R_1$)}
    Let the current database instance $D$ be such that the Generalized H-index is $h=2$.
    The threshold to classify a key as \textbf{Heavy} is $\tau = (h+1)^{d-1} = 3^3 = 27$.
    
    We construct $R_1$ with roughly 80 tuples distributed as follows:
    
    \begin{itemize}[leftmargin=*]
        \item \textbf{Attribute $x_1$ Distribution:} Value \texttt{A} appears in \textbf{30 tuples} (e.g., tuples $t_1 \dots t_{30}$); Value \texttt{B} appears in \textbf{30 tuples} (e.g., tuples $t_{31} \dots t_{60}$); Value \texttt{C} appears in \textbf{10 tuples} (e.g., tuples $t_{61} \dots t_{70}$); Values \texttt{D}-\texttt{M} appear in \textbf{1 tuple} each.  There are 2 values (\texttt{A}, \texttt{B}) with frequency $\ge 27$. Thus, $x_1$ supports $h=2$. Since there are not 3 values with frequency $3^3=27$, it does not reach $h=3$.
        
        \item \textbf{Attribute $x_2$ Distribution:}  Value \texttt{X} appears in \textbf{40 tuples} (overlapping with $x_1$'s \texttt{A} and \texttt{B} tuples); Value \texttt{Y}: appears in \textbf{35 tuples}; Other values appear $\le 5$ times.  Similar to $x_1$, values \texttt{X} and \texttt{Y} exceed the threshold $27$ that are heavy.
        
        \item \textbf{Attribute $x_3, x_4$ Distribution:}  All values are unique (frequency 1).  Therefore, no value has a frequency that exceeds 27, and they are all Light.

    \end{itemize}


    \paragraph{Partitioning $R_1$}
    Given the threshold $\tau = 27$, the algorithm logically partitions $R_1$ into buckets $R_1^{(b_1 b_2 b_3 b_4)}$ with the following heavy/light keys:
    \begin{center}
    \begin{tabular}{|c|c|c|}
        \hline
        \textbf{Attribute} & \textbf{Heavy Keys} & \textbf{Light Keys} \\
        \hline
        $x_1$ & $\{\texttt{A}, \texttt{B}\}$ & $\{\texttt{C}, \texttt{D}, \dots\}$ \\
        $x_2$ & $\{\texttt{X}, \texttt{Y}\}$ & $\{\dots\}$ \\
        $x_3$ & $\emptyset$ & All values \\
        $x_4$ & $\emptyset$ & All values \\
        \hline
    \end{tabular}
    \end{center}
    
    \begin{itemize}[leftmargin=*]
        \item \textbf{Partition $(1,1,0,0)$ [Heavy $x_1$, Heavy $x_2$]:}
        This partition stores the Cartesian product of the dense keys of the data.
        It contains tuples like $(\texttt{A}, \texttt{X}, \dots)$, $(\texttt{A}, \texttt{Y}, \dots)$, $(\texttt{B}, \texttt{X}, \dots)$, and $(\texttt{B}, \texttt{Y}, \dots)$. 

        \item \textbf{Partition $(1,0,0,0)$ [Heavy $x_1$, Light $x_2$]:}
        Contains tuples where $x_1=\texttt{A}, \texttt{B}$ but $x_2\neq\texttt{X}, \texttt{Y}$.
        Recall the $x_2$ key has frequency $\le 5$ except for \texttt{X} and \texttt{Y}.
        
        \item \textbf{Partition $(0,1,0,0)$ [Light $x_1$, Heavy $x_2$]:}
        Contains tuples where $x_1\neq\texttt{A}, \texttt{B}$ but $x_2=\texttt{X}, \texttt{Y}$
        Recall \texttt{C} has frequency $10$ and all keys in \texttt{D}-\texttt{M} have frequency $1$. 
        
        \item \textbf{Partition $(0,0,0,0)$ [All Light]:}
        Contains all remaining tuples, e.g., singletons like $(\texttt{D}, \texttt{Z}, \dots)$.

        \item \textbf{Other partitions:} For other partitions where $x_3$ is heavy or $x_4$ is heavy, they are all empty because none of these keys exist.
    \end{itemize}

    \paragraph{Maintenance Procedure} 
    
    \textbf{Scenario A: Update on a Light Key.} Imagine a tuple $(\texttt{C})$ is inserted into satellite $R_2(x_1)$. The algorithm checks if $\texttt{C} \in K_{x_1}^H$. In this case, it is not.  Therefore, the algorithm must scan the light partitions of $R_1$ (where $b_1=0$) for tuples matching $x_1=\texttt{C}$.  We know from the distribution that $\texttt{C}$ appears exactly $10$ times. Since $10 < \tau=27$, the scan cost is strictly bounded by $O((h+1)^{\dim(Q)-1})$.
    
    \textbf{Scenario B: Update on a Heavy Key.} Imagine a tuple $(\texttt{A})$ is inserted into satellite $R_2(x_1)$. The algorithm checks if $\texttt{A} \in K_{x_1}^H$. In this case, it is \textbf{Heavy}.  Therefore, the algorithm does \emph{not} scan $R_1$ for $\texttt{A}$ (which would cost 30 operations, potentially violating the strict bound if it were higher). Instead, it updates the Heavy View $V_C$.  This view effectively tracks the Cartesian product of heavy keys: $\{\texttt{A}, \texttt{B}\} \times \{\texttt{X}, \texttt{Y}\}$.  It marks $x_1=\texttt{A}$ as alive.  The cost is proportional to the size of the Cartesian product of all heavy keys, which is limited by $h^{\dim(Q)-1} \approx 2^3 = 8$ operations.

    \textbf{Scenario C: Update on $R_1$.}  Imagine the tuple $(a_1, a_2, a_3, a_4)$ is inserted into the center $R_1$.  The algorithm checks if $a_1 \in K_{x_1}^H$, $a_2 \in K_{x_2}^H$, $a_3 \in K_{x_3}^H$, and $a_4 \in K_{x_4}^H$ and obtain the vector $\mathbf{b}$.  After that,  the algorithm inserts $(a_1, a_2, a_3, a_4)$ into $R_1^{(\mathbf{b})}$, for $a_1, \cdots, a_3$, it checks if $a_i \in R_i^{(0)}$ when $a_i$ is light value;  if the tuple passes all these checks, the algorithm projects the tuple to heavy key only and propagates it to $V_C$.  For such updates, the cost is constant.

    \textbf{Scenario D: Key Reclassification.} Imagine the algorithm has deleted $17$ tuples from $R_1$ with $x_1 = \texttt{A}$, making the key $\texttt{A}$ becomes light, while the frequency for \texttt{X} and \texttt{Y} is still higher than $27$ after these deletions.  Then the algorithm deletes the remaining $13$ tuples with $x_1 = \texttt{A}$ from $R_1$, and then all tuples with $x_1 = \texttt{A}$ from $R_2$, the deletion takes $O(1)$ time per tuple in $R_1$, and the deletion in $R_2$ also takes $O(1)$ time due to no propagation can be triggered because $\texttt{A}$ is fully removed from $R_1$.  the algorithm then reinserts all tuples into $R_2$ with the light label, which takes $O(1)$ time because the current $R_1$ contains no $x_1 = \texttt{A}$.  After that, the algorithm finally reinserts all tuples into $R_1$ with the new label, taking $O(1)$ time per tuple.  The total cost of the rebalance step is $O(13)$, and it takes $17$ deletions to perform. By amortizing the cost into these deletions, we can see that the amortized cost is $O(1)$.  

    \textbf{Scenario E: Global Parameter Update.}  Imagine after inserting $53$ tuples with $x_1 = \texttt{D}$ and $44$ tuples with $x_1 = \texttt{C}$ into $R_1$, it makes both $\texttt{C}$ and $\texttt{D}$ become heavy, and key reclassifications are done for these two keys.  We assume these new tuples contain unique $x_2, x_3, x_4$ values, thus won't change the heavy/light classification for other keys.  Now the number of heavy keys for $x_1$ is doubled.  A global parameter update is required.  By doing so, $h_1(D)$ becomes $3$ while $h_2(D), h_3(D), h_4(D)$ remains unchanged.   Thus, we only need to reclassify the heavy keys in $x_1$.  All keys for $x_1$ are light because their frequencies are all smaller than $4^3 = 64$. The algorithm will first delete all heavy keys and reinstall them as light keys.  The total cost for the rebalance can be $O(h(D)^{\dim(Q)})$, however, it is after $h(D)\times h(D)^{\dim{Q}-1}$ updates ($h$ keys are added to heavy, while each key's frequency is change from at most $h(D)^{\dim(Q)-1}$ to at least $2h(D)^{\dim(Q)-1}$.  Therefore, the amortized cost for such updates is $O(1)$.
       
    
\end{example}

\begin{acks}
    This work is supported by the Ministry of Education, Singapore, under its Academic Research Fund Tier 1 (RS32/25), and by the Nanyang Technological University Startup Grant.  The author would like to thank Ce Jin, Haozhe Zhang, and the anonymous reviewers for the insightful discussion and comments.  The author would also like to thank Prof. Ke Yi for his continuous guidance. 
\end{acks}

\clearpage
\bibliographystyle{ACM-Reference-Format}
\bibliography{paper}

@String{Computing = "Computing" }

@String{Computer = "{IEEE} Computer" }

@String{Springer = "Springer-Verlag" }

@ArtifactSoftware{R,
    title = {R: A Language and Environment for Statistical Computing},
    author = {{R Core Team}},
    organization = {R Foundation for Statistical Computing},
    address = {Vienna, Austria},
    year = {2019},
    url = {https://www.R-project.org/},
}

@PREAMBLE{ {\providecommand{\noopsort}[1]{}} }

@string{stoc =  "Proc. ACM Symposium on Theory of Computing"}

@string{focs =  "Proc. IEEE Symposium on Foundations of Computer Science"}

@string{pods =  "Proc. ACM Symposium on Principles of Database Systems"}

@string{graph = "Proc. Graph-Theoretic Concepts in Computer Science"}

@string{vldb =  "Proc. International Conference on Very Large Data Bases"}

@string{icalp= "Proc.  International Colloquium on Automata, Languages,
                  and Programming"}

@string{sigmod = "Proc. ACM SIGMOD International Conference on Management of Data"}

@string{itcs = "Proc. Innovations in Theoretical Computer Science"}

@string{computer = "IEEE Computer"}

@book{abiteboul1995foundations,
	title={Foundations of databases},
	author={Abiteboul, Serge and Hull, Richard and Vianu, Victor},
	year={1995},
	publisher={Addison-Wesley Longman Publishing Co., Inc.}
}

@inproceedings{bagan2007acyclic,
  author={Bagan, Guillaume and Durand, Arnaud and Grandjean, Etienne},
title="On Acyclic Conjunctive Queries and Constant Delay Enumeration",
booktitle="Computer Science Logic",
year="2007",
publisher="Springer Berlin Heidelberg",
address="Berlin, Heidelberg",
pages="208--222"}

@InProceedings{berkholz17:_answer,
author = {Berkholz, Christoph and Keppeler, Jens and Schweikardt, Nicole},
title = {Answering Conjunctive Queries under Updates},
year = {2017},
isbn = {9781450341981},
publisher = {Association for Computing Machinery},
address = {New York, NY, USA},
url = {https://doi.org/10.1145/3034786.3034789},
doi = {10.1145/3034786.3034789},
booktitle = {Proceedings of the 36th ACM SIGMOD-SIGACT-SIGAI Symposium on Principles of Database Systems},
pages = {303–318},
numpages = {16},
location = {Chicago, Illinois, USA},
series = {PODS '17}
}

@article{chirkova2012materialized,
  title={Materialized views},
  author={Chirkova, Rada and Yang, Jun},
  journal={Foundations and Trends{\textregistered} in Databases},
  volume={4},
  number={4},
  pages={295--405},
  year={2012},
  publisher={Now Publishers, Inc.}}

@article{idris2018conjunctive,
  title={Conjunctive queries with inequalities under updates},
  author={Idris, Muhammad and Ugarte, Mart{\'\i}n and Vansummeren, Stijn and Voigt, Hannes and Lehner, Wolfgang},
  journal=vldb,
  volume={11},
  number={7},
  pages={733--745},
  year={2018},
  publisher={VLDB Endowment}}

@inproceedings{lincoln2018tight,
  title={Tight hardness for shortest cycles and paths in sparse graphs},
  author={Lincoln, Andrea and Williams, Virginia Vassilevska and Williams, Ryan},
  booktitle={Proceedings of the Twenty-Ninth Annual ACM-SIAM Symposium on Discrete Algorithms},
  pages={1236--1252},
  year={2018},
  organization={SIAM}
}

@article{kara2020maintaining,
author = {Kara, Ahmet and Ngo, Hung Q. and Nikolic, Milos and Olteanu, Dan and Zhang, Haozhe},
title = {Maintaining Triangle Queries under Updates},
year = {2020},
issue_date = {September 2020},
publisher = {Association for Computing Machinery},
address = {New York, NY, USA},
volume = {45},
number = {3},
issn = {0362-5915},
url = {https://doi.org/10.1145/3396375},
doi = {10.1145/3396375},
journal = {ACM Trans. Database Syst.},
month = {aug},
articleno = {11},
numpages = {46}
}

@inproceedings{nikolic2018incremental,
  title={Incremental view maintenance with triple lock factorization benefits},
  author={Nikolic, Milos and Olteanu, Dan},
  booktitle=sigmod,
  pages={365--380},
  year={2018},
  organization={ACM}
}

@inproceedings{carmeli2019enumeration,
  title={On the Enumeration Complexity of Unions of Conjunctive Queries},
  author={Carmeli, Nofar and Kr{\"o}ll, Markus},
  booktitle={Proceedings of the 38th ACM SIGMOD-SIGACT-SIGAI Symposium on Principles of Database Systems},
  pages={134--148},
  year={2019},
  organization={ACM}
}

@InProceedings{idris17:_dynam,
author = {Idris, Muhammad and Ugarte, Martin and Vansummeren, Stijn},
title = {The Dynamic Yannakakis Algorithm: Compact and Efficient Query Processing Under Updates},
year = {2017},
isbn = {9781450341974},
publisher = {Association for Computing Machinery},
address = {New York, NY, USA},
url = {https://doi.org/10.1145/3035918.3064027},
doi = {10.1145/3035918.3064027},
booktitle = {Proceedings of the 2017 ACM International Conference on Management of Data},
pages = {1259–1274},
numpages = {16},
location = {Chicago, Illinois, USA},
series = {SIGMOD '17}
}

@Article{carbone15:_apach_flink,
  author = 	 {Paris Carbone and Asterios Katsifodimos and Stephan Ewen and Volker Markl and Seif Haridi and Kostas Tzoumas},
  title = 	 {Apache {Flink}: Stream and Batch Processing in a Single Engine},
  journal = 	 {IEEE Data Engineering Bulletin},
  year = 	 2015,
  volume = 	 38,
  number = 	 4,
  pages = 	 {28-38}}

@Book{cormen09:_introd,
  author = 	 {T. H. Cormen and C. E. Leiserson and R. L. Rivest and C. Stein},
  title = 	 {Introduction to Algorithms},
  publisher = 	 {The MIT Press},
  year = 	 2009,
  edition = 	 {3rd}}

@inproceedings{wang2020maintaining,
  title={Maintaining Acyclic Foreign-Key Joins under Updates},
  author={Wang, Qichen and Yi, Ke},
  booktitle={Proceedings of the 2020 ACM SIGMOD International Conference on Management of Data},
  pages={1225--1239},
  year={2020}
}

@inproceedings{kara2020trade,
  title={Trade-offs in static and dynamic evaluation of hierarchical queries},
  author={Kara, Ahmet and Nikolic, Milos and Olteanu, Dan and Zhang, Haozhe},
  booktitle={Proceedings of the 39th ACM SIGMOD-SIGACT-SIGAI Symposium on Principles of Database Systems},
  pages={375--392},
  year={2020}
}

@article{wang2023change,
author = {Wang, Qichen and Hu, Xiao and Dai, Binyang and Yi, Ke},
title = {Change Propagation Without Joins},
year = {2023},
issue_date = {January 2023},
publisher = {VLDB Endowment},
volume = {16},
number = {5},
issn = {2150-8097},
url = {https://doi.org/10.14778/3579075.3579080},
doi = {10.14778/3579075.3579080},
journal = {Proc. VLDB Endow.},
month = {jan},
pages = {1046–1058},
numpages = {13}
}

@article{pods25,
author = {Hu, Xiao and Wang, Qichen},
title = {Towards Update-Dependent Analysis of Query Maintenance},
year = {2025},
issue_date = {May 2025},
publisher = {Association for Computing Machinery},
address = {New York, NY, USA},
volume = {3},
number = {2},
url = {https://doi.org/10.1145/3725254},
doi = {10.1145/3725254},
journal = {Proc. ACM Manag. Data},
month = jun,
articleno = {117},
numpages = {25}
}

@inproceedings{abboud2014popular,
  title={Popular conjectures imply strong lower bounds for dynamic problems},
  author={Abboud, Amir and Williams, Virginia Vassilevska},
  booktitle={2014 IEEE 55th Annual Symposium on Foundations of Computer Science},
  pages={434--443},
  year={2014},
  organization={IEEE}
}

@inproceedings{FindingCycleUpperBound,
author = {Dahlgaard, S\o{}ren and Knudsen, Mathias B\ae{}k Tejs and St\"{o}ckel, Morten},
title = {Finding even cycles faster via capped k-walks},
year = {2017},
isbn = {9781450345286},
publisher = {Association for Computing Machinery},
address = {New York, NY, USA},
url = {https://doi.org/10.1145/3055399.3055459},
doi = {10.1145/3055399.3055459},
booktitle = {Proceedings of the 49th Annual ACM SIGACT Symposium on Theory of Computing},
pages = {112--120},
numpages = {9},
location = {Montreal, Canada},
series = {STOC 2017}
}

@InProceedings{lincoln_et_al:LIPIcs.ITCS.2020.11,
  author =	{Lincoln, Andrea and Vyas, Nikhil},
  title =	{{Algorithms and Lower Bounds for Cycles and Walks: Small Space and Sparse Graphs}},
  booktitle =	{11th Innovations in Theoretical Computer Science Conference (ITCS 2020)},
  pages =	{11:1--11:17},
  series =	{Leibniz International Proceedings in Informatics (LIPIcs)},
  ISBN =	{978-3-95977-134-4},
  ISSN =	{1868-8969},
  year =	{2020},
  volume =	{151},
  editor =	{Vidick, Thomas},
  publisher =	{Schloss Dagstuhl -- Leibniz-Zentrum f{\"u}r Informatik},
  address =	{Dagstuhl, Germany},
  URL =		{https://drops.dagstuhl.de/entities/document/10.4230/LIPIcs.ITCS.2020.11},
  URN =		{urn:nbn:de:0030-drops-116969},
  doi =		{10.4230/LIPIcs.ITCS.2020.11}
}

@article{colorcoding,
author = {Alon, Noga and Yuster, Raphael and Zwick, Uri},
title = {Color-coding},
year = {1995},
issue_date = {July 1995},
publisher = {Association for Computing Machinery},
address = {New York, NY, USA},
volume = {42},
number = {4},
issn = {0004-5411},
url = {https://doi.org/10.1145/210332.210337},
doi = {10.1145/210332.210337},
journal = {J. ACM},
month = jul,
pages = {844–856},
numpages = {13}
}

@inproceedings{generalizedOMv,
author = {Jin, Ce and Xu, Yinzhan},
title = {Tight dynamic problem lower bounds from generalized BMM and OMv},
year = {2022},
isbn = {9781450392648},
publisher = {Association for Computing Machinery},
address = {New York, NY, USA},
url = {https://doi.org/10.1145/3519935.3520036},
doi = {10.1145/3519935.3520036},
booktitle = {Proceedings of the 54th Annual ACM SIGACT Symposium on Theory of Computing},
pages = {1515–1528},
numpages = {14},
location = {Rome, Italy},
series = {STOC 2022}
}

@techreport{grahamGYO,
    author = {Marc H. Graham},
    title = {On the universal relation},
    institution = {University of Toronto},
    year = 1979
}

@inproceedings{DBLP:conf/compsac/YuO79,
  author       = {C. T. Yu and
                  M. Z. \"{O}zsoyo\u{g}lu},
  title        = {An algorithm for tree-query membership of a distributed query},
  booktitle    = {The {IEEE} Computer Society's Third International Computer Software
                  and Applications Conference, {COMPSAC} 1979, 6-8 November, 1979, Chicago,
                  Illinois, {USA}},
  pages        = {306--312},
  publisher    = {{IEEE}},
  year         = {1979},
  url          = {https://doi.org/10.1109/CMPSAC.1979.762509},
  doi          = {10.1109/CMPSAC.1979.762509},
  bibsource    = {dblp computer science bibliography, https://dblp.org}
}

@inproceedings{DBLP:conf/focs/AbboudBW15a,
  author       = {Amir Abboud and
                  Arturs Backurs and
                  Virginia {Vassilevska Williams}},
  editor       = {Venkatesan Guruswami},
  title        = {If the Current Clique Algorithms are Optimal, So is Valiant's Parser},
  booktitle    = {{IEEE} 56th Annual Symposium on Foundations of Computer Science, {FOCS}
                  2015, Berkeley, CA, USA, 17-20 October, 2015},
  pages        = {98--117},
  publisher    = {{IEEE} Computer Society},
  year         = {2015},
  url          = {https://doi.org/10.1109/FOCS.2015.16},
  doi          = {10.1109/FOCS.2015.16},
  bibsource    = {dblp computer science bibliography, https://dblp.org}
}

@article{DBLP:journals/algorithmica/AlonYZ97,
  author       = {Noga Alon and
                  Raphael Yuster and
                  Uri Zwick},
  title        = {Finding and Counting Given Length Cycles},
  journal      = {Algorithmica},
  volume       = {17},
  number       = {3},
  pages        = {209--223},
  year         = {1997},
  url          = {https://doi.org/10.1007/BF02523189},
  doi          = {10.1007/BF02523189},
  bibsource    = {dblp computer science bibliography, https://dblp.org}
}

@inproceedings{DBLP:conf/focs/NaorSS95,
  author       = {Moni Naor and
                  Leonard J. Schulman and
                  Aravind Srinivasan},
  title        = {Splitters and Near-Optimal Derandomization},
  booktitle    = {36th Annual Symposium on Foundations of Computer Science, {FOCS} 1995,
                  Milwaukee, Wisconsin, USA, 23-25 October 1995},
  pages        = {182--191},
  publisher    = {{IEEE} Computer Society},
  year         = {1995},
  url          = {https://doi.org/10.1109/SFCS.1995.492475},
  doi          = {10.1109/SFCS.1995.492475},
  bibsource    = {dblp computer science bibliography, https://dblp.org}
}

@article{comparison,
author = {Wang, Qichen and Yi, Ke},
title = {Conjunctive Queries with Comparisons},
year = {2025},
publisher = {Association for Computing Machinery},
address = {New York, NY, USA},
issn = {0362-5915},
url = {https://doi.org/10.1145/3769424},
doi = {10.1145/3769424},
journal = {ACM Trans. Database Syst.},
month = sep
}

@article{difference,
author = {Hu, Xiao and Wang, Qichen},
title = {Computing the Difference of Conjunctive Queries Efficiently},
year = {2023},
issue_date = {June 2023},
publisher = {Association for Computing Machinery},
address = {New York, NY, USA},
volume = {1},
number = {2},
url = {https://doi.org/10.1145/3589298},
doi = {10.1145/3589298},
journal = {Proc. ACM Manag. Data},
month = jun,
articleno = {153},
numpages = {26}
}

@inproceedings{conjunctiveLB,
author = {Mengel, Stefan},
title = {Lower Bounds for Conjunctive Query Evaluation},
year = {2025},
isbn = {9798400715655},
publisher = {Association for Computing Machinery},
address = {New York, NY, USA},
url = {https://doi.org/10.1145/3722234.3725824},
doi = {10.1145/3722234.3725824},
booktitle = {Companion of the 44th Symposium on Principles of Database Systems},
pages = {5},
numpages = {1},
location = {Berlin, Germany},
series = {PODS '25}
}

@InProceedings{10.1007/978-3-642-03367-4_25,
author="Eppstein, David
and Spiro, Emma S.",
editor="Dehne, Frank
and Gavrilova, Marina
and Sack, J{\"o}rg-R{\"u}diger
and T{\'o}th , Csaba D.",
title="The h-Index of a Graph and Its Application to Dynamic Subgraph Statistics",
booktitle="Algorithms and Data Structures",
year="2009",
publisher="Springer Berlin Heidelberg",
address="Berlin, Heidelberg",
pages="278--289",
isbn="978-3-642-03367-4"
}

@article{EPPSTEIN201244,
title = {Extended dynamic subgraph statistics using h-index parameterized data structures},
journal = {Theoretical Computer Science},
volume = {447},
pages = {44-52},
year = {2012},
note = {Combinational Algorithms and Applications (COCOA 2010)},
issn = {0304-3975},
doi = {https://doi.org/10.1016/j.tcs.2011.11.034},
url = {https://www.sciencedirect.com/science/article/pii/S0304397511009534},
author = {David Eppstein and Michael T. Goodrich and Darren Strash and Lowell Trott}
}

@article{PintoR24,
	author       = {Dante Pinto and
	Cristian Riveros},
	title        = {Complex event recognition meets hierarchical conjunctive queries},
	journal      = {CoRR},
	volume       = {abs/2408.01652},
	year         = {2024},
	url          = {https://doi.org/10.48550/arXiv.2408.01652},
	doi          = {10.48550/ARXIV.2408.01652},
	eprinttype    = {arXiv},
	eprint       = {2408.01652},
	bibsource    = {dblp computer science bibliography, https://dblp.org}
}

@article{10.1145/3695837,
author = {Abo Khamis, Mahmoud and Kara, Ahmet and Olteanu, Dan and Suciu, Dan},
title = {Insert-Only versus Insert-Delete in Dynamic Query Evaluation},
year = {2024},
issue_date = {November 2024},
publisher = {Association for Computing Machinery},
address = {New York, NY, USA},
volume = {2},
number = {5},
url = {https://doi.org/10.1145/3695837},
doi = {10.1145/3695837},
journal = {Proc. ACM Manag. Data},
month = nov,
articleno = {219},
numpages = {26}
}

@inproceedings{fan2023fine,
  title={The Fine-Grained Complexity of Boolean Conjunctive Queries and Sum-Product Problems},
  author={Fan, Austen Z and Koutris, Paraschos and Zhao, Hangdong},
  booktitle={50th International Colloquium on Automata, Languages, and Programming (ICALP 2023)},
  pages={127--1},
  year={2023},
  organization={Schloss Dagstuhl--Leibniz-Zentrum f{\"u}r Informatik}
}

@inproceedings{10.1145/3584372.3588666,
author = {Deng, Shiyuan and Lu, Shangqi and Tao, Yufei},
title = {On Join Sampling and the Hardness of Combinatorial Output-Sensitive Join Algorithms},
year = {2023},
isbn = {9798400701276},
publisher = {Association for Computing Machinery},
address = {New York, NY, USA},
url = {https://doi.org/10.1145/3584372.3588666},
doi = {10.1145/3584372.3588666},
booktitle = {Proceedings of the 42nd ACM SIGMOD-SIGACT-SIGAI Symposium on Principles of Database Systems},
pages = {99–111},
numpages = {13},
location = {Seattle, WA, USA},
series = {PODS '23}
}

@inproceedings{abboud2024faster,
  title={Faster combinatorial k-clique algorithms},
  author={Abboud, Amir and Fischer, Nick and Shechter, Yarin},
  booktitle={Latin American Symposium on Theoretical Informatics},
  pages={193--206},
  year={2024},
  organization={Springer}
}

@article{HAAS2006360,
title = {Chordless paths through three vertices},
journal = {Theoretical Computer Science},
volume = {351},
number = {3},
pages = {360-371},
year = {2006},
note = {Parameterized and Exact Computation},
issn = {0304-3975},
doi = {https://doi.org/10.1016/j.tcs.2005.10.021},
url = {https://www.sciencedirect.com/science/article/pii/S0304397505006304},
author = {Robert Haas and Michael Hoffmann}
}

@inproceedings{10.1145/3584372.3588667,
author = {Carmeli, Nofar and Segoufin, Luc},
title = {Conjunctive Queries With Self-Joins, Towards a Fine-Grained Enumeration Complexity Analysis},
year = {2023},
isbn = {9798400701276},
publisher = {Association for Computing Machinery},
address = {New York, NY, USA},
url = {https://doi.org/10.1145/3584372.3588667},
doi = {10.1145/3584372.3588667},
booktitle = {Proceedings of the 42nd ACM SIGMOD-SIGACT-SIGAI Symposium on Principles of Database Systems},
pages = {277–289},
numpages = {13},
location = {Seattle, WA, USA},
series = {PODS '23}
}

\end{document}